\newcommand{\scrD}{\mathscr{D}}
\title{Level set Cox processes}
\author[1]{Anders Hildeman}
\author[1]{David Bolin}
\author[2]{Jonas Wallin}
\author[3]{Janine B. Illian}
\date{}
\affil[1]{Department of Mathematical Sciences, Chalmers University of Technology and University of Gothenburg, Sweden}
\affil[2]{Department of Statistics, Lund University, Sweden}
\affil[3]{School of Mathematics and Statistics, University of St Andrews, Scotland}
\begin{document}

\maketitle

\setcounter{tocdepth}{2}

\begin{abstract}
\label{sec:abstract}
The log-Gaussian Cox process (LGCP) is a popular point process for modeling non-interacting spatial point patterns.
This paper extends the LGCP model to handle data exhibiting fundamentally different behaviors in different subregions of the spatial domain.
The aim of the analyst might be either to identify and classify these regions, to perform kriging, or to derive some properties of the parameters driving the random field in one or several of the subregions.
The extension is based on replacing the latent Gaussian random field in the LGCP by a latent spatial mixture model. 
The mixture model is specified using a latent, categorically valued, random field induced by level set operations on a Gaussian random field. Conditional on the classification, the intensity surface for each class is modeled by a set of independent Gaussian random fields.
This allows for standard stationary covariance structures, such as the Mat\'{e}rn family, to be used to model Gaussian random fields with some degree of general smoothness but also occasional and structured sharp discontinuities.

A computationally efficient MCMC method is proposed for Bayesian inference and we show consistency of finite dimensional approximations of the model. Finally, the model is fitted to point pattern data derived from a tropical rainforest on Barro Colorado island, Panama. We show that the proposed model is able to capture behavior for which inference based on the standard LGCP is biased.
\end{abstract}


\section{Introduction}\label{sec:intro}
Cox processes, and in particular log-Gaussian Cox processes (LGCP), have been used extensively as flexible models of spatial point pattern data \citep{lit:moller, lit:moller2, lit:illian, lit:diggle}. These are hierachical point process models where the point locations are assumed to be independent given a random intensity function 
\begin{equation}
\ints(\psp) = \exp\{\covars(\psp)\regCoef + \latf(\psp)\}, 
\label{intensity:intro}
\end{equation}
where $\covars(\psp)$ is a, possibly multivariate, function of covariates and $ \latf(\psp)$ is a Gaussian random field, which  is typically assumed to be stationary. The random field captures spatial structure in the point pattern that the given covariates cannot capture.  In this paper, we relax the assumption that a single stationary Gaussian field can account for those remaining spatial structures and develop a mixture model based on level set inversion.

To motivate the relevance of the approach we consider a point pattern data set formed by the locations of the tree species \textit{Beilschmiedia Pendula}, one of the species in the tropical rainforest plot on Barro Colorado Island \citep{condit:98, condital:00, burslemal:01, hubbellal:05}. The point pattern comprises $2461$ point locations   in a rectangular observation window (500 m $\times$ 1000 m), see Figure \ref{fig:treesIntro} (a).  This pattern has been analysed repeatedly in the literature and forms part of the example patterns in the \texttt{R} \citep{lit:r} package \textit{spatstat} \citep{lit:spatstat}.
Previous analyses have fitted a log-Gaussian Cox process \citep{lit:moller2}  to this and related data sets to draw conclusions on the association of habitat preferences based on a number of spatial covariates reflecting local soil chemistry and topography \citep{lit:moller2, lit:illian}. We initially fitted a log Gaussian Cox process to this pattern, with an intensity function as in Equation \eqref{intensity:intro}, using $11$ covariates, see Section \ref{sec:examples}.

On close inspection, the pattern in Figure \ref{fig:treesIntro} (a) shows large areas of very low point intensity where hardly any trees can be found. The estimated posterior mean using the LGCP model predicts large regions of low intensity, as plotted in Figure \ref{fig:lgcpIntro}. Anecdotal knowledge reveals that these regions are covered by a swamp, where the tree species is known to be very unlikely to grow, independent of local soil covariates and topography. However, data on the exact extent of the swamp is not available. When a LGCP model that ignores the presence of swamp is fitted to this pattern, the swamp is likely to act as a confounding factor and this is likely to impact on inference. Hence, any conclusions on habitat preferences of the species will be heavily biased. Covariates associated with the presence of the swamp may appear to have a significant correlation with the intensity of the tree growth, or important covariates might appear non-significant as they vary indepedently of the presence of the swamp.

\begin{figure}
\centering
\begin{subfigure}{0.32\textwidth}
\centering
\includegraphics[width = 1 \textwidth, height = 5 cm]{./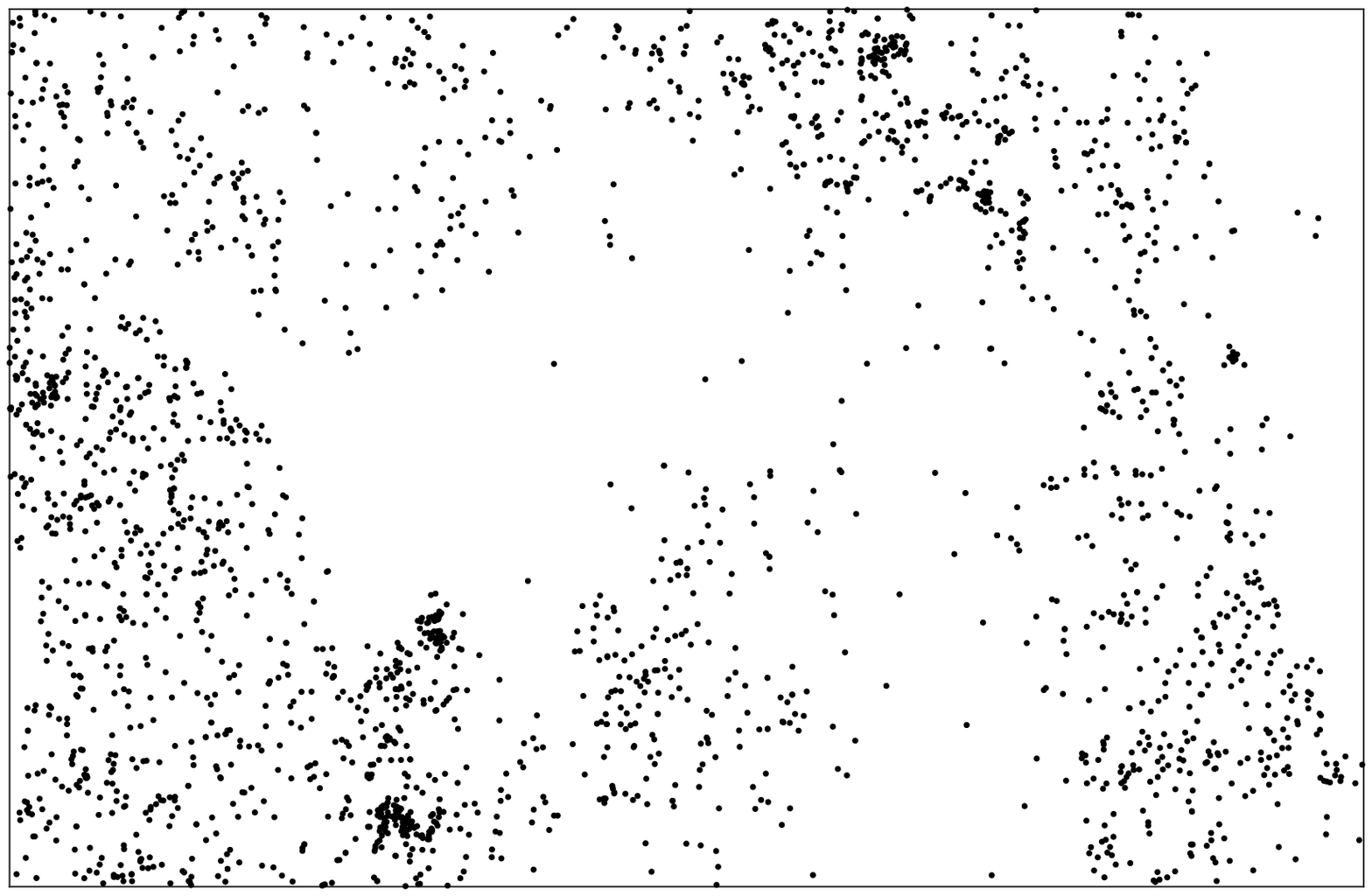}
\caption{}
\label{fig:treesIntro}
\end{subfigure}
\begin{subfigure}{0.32\textwidth}
\centering
\includegraphics[width = 1 \textwidth, height=5cm]{./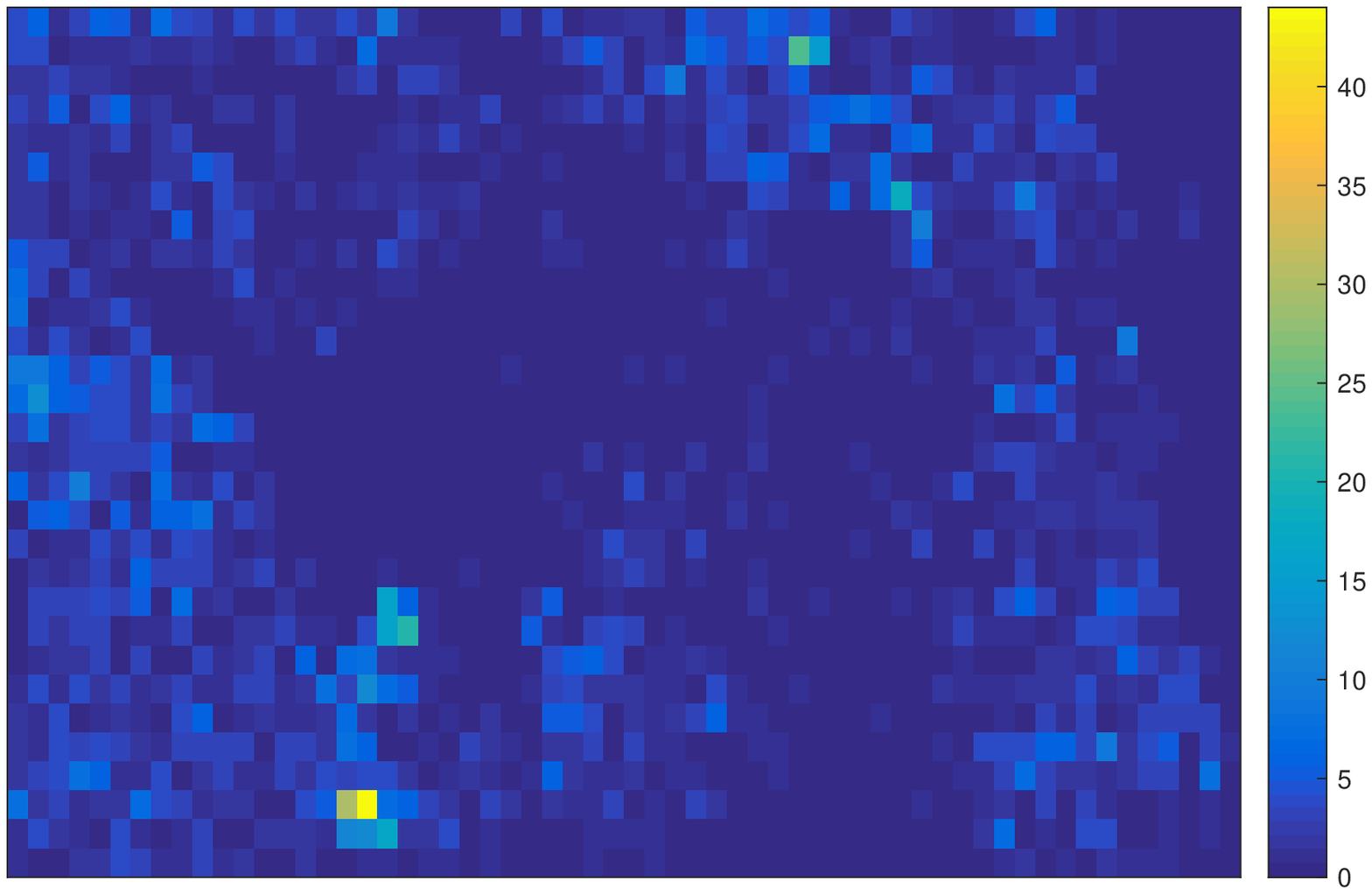}
\caption{ }
\label{fig:griddedTrees}
\end{subfigure}
\begin{subfigure}{0.32\textwidth}
\centering
\includegraphics[width = 1 \textwidth, height = 5 cm]{./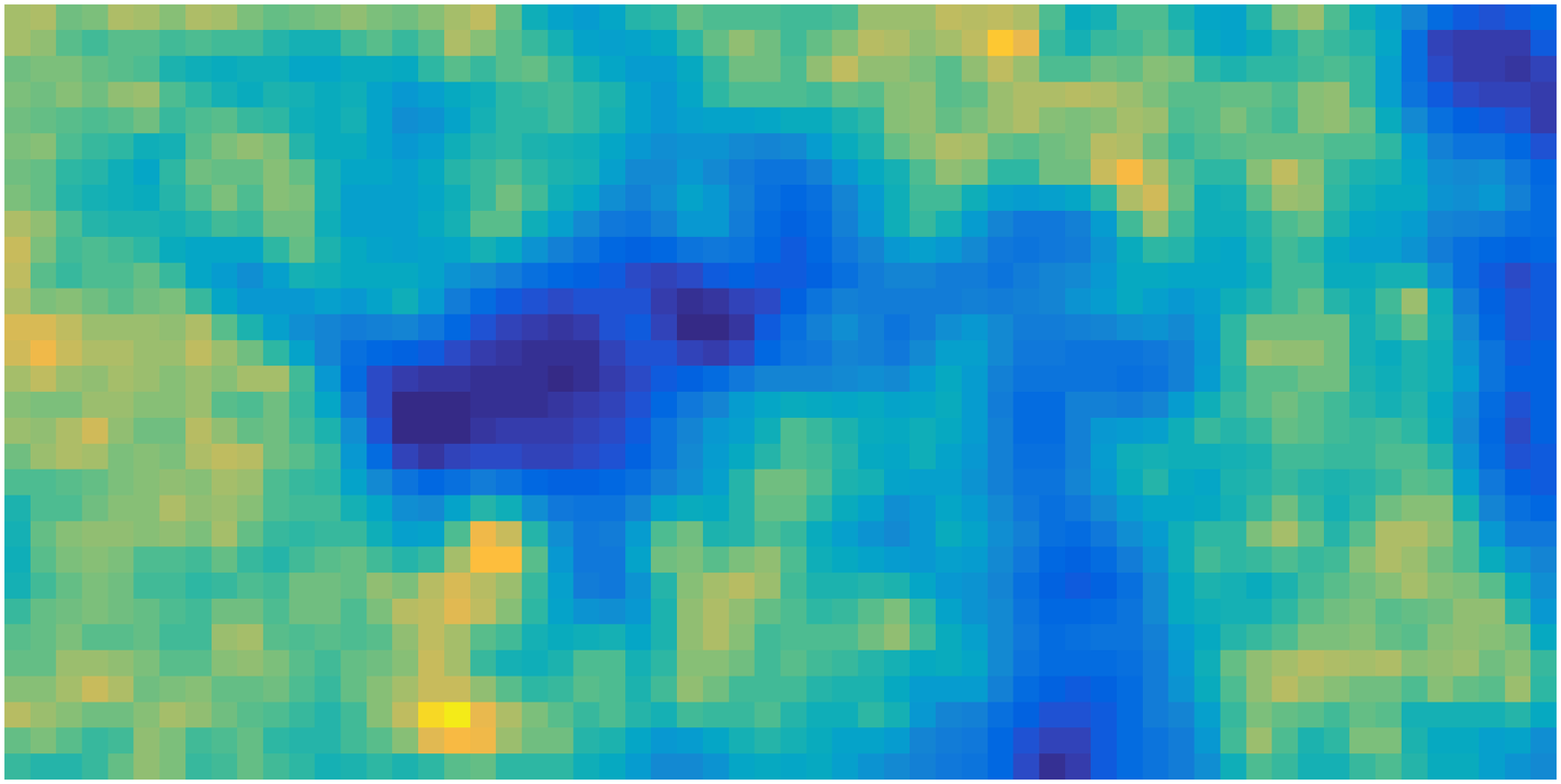}
\caption{}
\label{fig:lgcpIntro}
\end{subfigure}

\caption{Spatial point pattern formed by the locations of trees of the species \textit{Beilschmiedia pendula} in a 500 m $\times$ 1000 m rainforest plot on Barro Colorado Island (a), a gridded version of the data (b), and posterior mean of log intensity using a log-Gaussian Cox process model (c).}
\end{figure}

The approach we take here is designed to capture sharp discontinuities in the intensity surface that result from qualitative yet unavailable covariates or environmental conditions as the one seen in this example. These effects cannot be captured by the classical Gaussian random field approach. Further examples of data where such a model could be important is ecological data with several distinct types of habitat, spatial regions with different treatment regimes in medical data, or materials exhibiting separate regions of differing properties in material science. 
Specifically, we consider a Cox process model  where the intensity surface is modeled using a Bayesian level set approach. The proposed model is an extension of the log-Gaussian Cox process with increased flexibility resulting from a random segmentation of the spatial region into $K$ classes. The intensity surfaces of the regions associated with the $K$ different classes can be modeled separately of each other by latent log-Gaussian random fields with simple covariance structures, while still maintaining flexibility. We refer to the proposed model as the \textit{\ac{lscp}}.

Level set inversion \citep{lit:santosa, lit:burger} are geometric inverse problems where the main objective is to find interfaces between geometrical regions based on observed data. In this approach, the interfaces are modeled as level sets of an unknown \textit{level set function}. Level set inversion has been used extensively for segmentation \citep{lit:chung, lit:lorentzen, lit:scheuermann}, for multiphase flow modeling \citep{lit:barman, lit:desjardins}, and for statistical modeling of porous materials \citep{lit:mourzenko}. \citet{lit:higgs} modeled spatially correlated categorical data using a Bayesian level set approach, where the level set function was modeled as a Gaussian random field. This probabilistic approach, which \citet{lit:iglesias} and \citet{lit:dunlop} extended to more general inverse problems, has the advantage that the level sets can be estimated through the posterior distribution of the level set function given the observed data. 

The \ac{lscp} is, like the LGCP, a continuous process. In order to use the model in practical inference some finite dimensional approximations are required. 
We show that the classical lattice approximation of the \ac{lscp} converges, in total variation distance, to the continuous model as the grid gets finer. Further, we propose a computationally efficient Markov chain Monte-Carlo (MCMC) algorithm for Bayesian inference on the model parameters, based on preconditioned Crank-Nicholson Langevin proposals \citep{lit:cotter}.

This paper is structured as follows. A detailed model description is given in Section \ref{sec:model}. In Section \ref{sec:estimation}, we derive the MCMC algorithm for the method. Section \ref{sec:examples} analyses the \textit{Beilschmiedia Pendula} point pattern of rainforest trees with the new approach. Finally, Section \ref{sec:discussion} discusses the presented material and possible future extensions of it. The theoretical results and proofs are given in two appendices.


\section{The model and its properties}
\label{sec:model}
In this section, we first introduce the level set Cox process in Subsection \ref{sec:pointModel}. Some examples of the model are presented in Subsection \ref{sec:specialModels} and basic properties of the model are presented in Subsection \ref{sec:properties}. Finally, Subsection \ref{sec:findim} introduces finite dimensional approximations of the model necessary for infererence.

\subsection{Level set Cox process model}
\label{sec:pointModel}
Let $\domSp \subset \R^2$ be a bounded domain. The Bayesian level set inversion problem of \citet{lit:iglesias} corresponds to reconstructing a latent field of the form
\begin{equation}
\latf(\psp) = \sum_{k = 1}^K \latf_k \indicator{\psp \in \domSp_k},
\label{eq:levelSetFunction}
\end{equation}
given noisy data.
Here $\domSp_k\subset \domSp$ is the spatial region associated with segmentation class $k$, and $\latf_k$ are fixed values. 
If the constants $\{\latf_k\}_k$ are known, the partition $\{\domSp_k\}_{k=1}^K$ characterizes $\latf$. \citet{lit:iglesias} defined $\domSp_k$ as an excursion set of an unknown random continuous \textit{level set function}, $\latf_0$, such as $\domSp_k = \{\psp: \threshParam_{k-1} < \latf_0(\psp) \le \threshParam_k \}$. Here $\threshParam_k$ are constants such that $\{-\infty = \threshParam_0 < \threshParam_1 < ... < \threshParam_{K+1} = \infty \}$ and $\latf_0$ is assumed to be a realization of a Gaussian random field. Thus, this model corresponds to the level set problem for categorical data by  \citet{lit:higgs}. The level set model using a latent Gaussian random field is not identifiable with regards to the parameter triplet \textit{threshold values}, \textit{mean}, and \textit{marginal variance} of the level set field, $\latf_0$. Hence, we define $\latf_0$ to have standard normal marginal distributions in order to make the model identifiable. 

We extend the level set function of \eqref{eq:levelSetFunction} by replacing the fixed constants $\latf_k$ by Gaussian random fields and denote these Gaussian random fields as $\latf_k(\psp) + \mu_k(\psp)$, where $\mu_k$ is a deterministic mean function and $\latf_k$ is a centered Gaussian random field. 
\begin{equation}
\latf(\psp) = \sum_{k=1}^K \left( \latf_k(\psp) + \mu_k(\psp) \right) \indicator{c_{k-1} < X_0(\psp) + \mu_0(\psp) < c_{k}}.
\label{eq:loglambda}
\end{equation}
 This can be regarded as a mixture model of Gaussian fields related to the non-stationary
geostatistical model proposed by \citet{lit:fuentes}.  We use this model to specify a statistical model for spatial point process data through a Cox process \citep{lit:diggle}, modeling the number of occurrences of some event in a subregion $\subdomSp \subseteq \domSp$ as an inhomogeneous Poisson process conditional on a realization of $\latf$, i.e.
\begin{align}
\obsf(\subdomSp) \sim \pPOIS \left( \int_{\subdomSp} \ints(\psp) d\psp \right),
\label{eq:Xs}
\end{align}
where the intensity surface is $\ints(\psp) = \exp\{\latf(\psp) \}$.

 A common usage of point process models is to study the effect of covariates on observed point patterns. A simple way of doing this is through a standard Poisson regression, where the log-intensity of the point process is of the form 
 $\log \ints(\psp) = \covars(\psp) \regCoef$, where $\covars(\psp)$ are the covariates of interest. This can easily be incorporated into the \ac{lscp} model by letting $\mu_k(\psp) = \covars(\psp) \regCoef_k$ or $\mu_k(\psp)= \mu(\psp) = \covars(\psp) \regCoef$.

\subsection{Model examples}\label{sec:specialModels}
Poisson regression and log-Gaussian Cox processes are special cases of the  LSCP model. For an illustration of the flexibility of the model, Figure \ref{fig:fourExamps} shows the log intensity for four special cases simulated in the unit square. 
In this figure, all Gaussian random fields are assumed to have constant means $\mu$ and Mat\'{e}rn covariance functions \citep{lit:matern}, 
\begin{align}
\C( \latf_k(\psp_1) , \latf_k(\psp_2) ) = \C( h ) = \frac{ \std^2 }{2^{\smoothParam - 1}\Gamma(\smoothParam)}(\kappa h)^{\smoothParam}K_{\smoothParam}(\kappa h),
\label{eq:maternCov}
\end{align}
where $h = \norm{\psp_1 - \psp_2}$, $\std^2 = \Var( \latf_k(\psp) )$, $\kappa = \frac{\sqrt{8\smoothParam}}{\rangParam}$, and $\smoothParam$ is a smoothness parameter. Further, $\rangParam$ is the correlation range approximately corresponding to the value of $h$ where the correlation is $0.1$, $K_{\smoothParam}$ is a modified Bessel function of the second kind, and $\Gamma$ is the Gamma function.

The patterns were generated using the same random seed such that the level set function is the same for all cases, yielding comparable results. A realization of $\log \ints(\psp)$ using two classes can be seen in Panel (a). The log intensity surface of the first class has $\mu = 2$ and $r = 0.1$, whereas the second class has $\mu = 0$  and $r = 0.2$. Both fields have $\sigma = \nu = 1$.
The level set field, $X_0$, has a threshold value at the origin, $\threshParam_1 = 0$, and range $\rangParam = 0.4$. 
In the figure, the regions belonging to the two classes, and the difference in spatial correlation range is clearly visible. 

A simplification of the model is obtained by assuming that the intensity for one of the two classes is constant (change $X_1(\psp)$ to a constant $X_1$, for instance). A realization of such a log intensity surface can be seen in Panel (b). This  model might be relevant in applications where some unknown factor makes it unlikely to observe points in certain subregions and may be regarded as spatially varying zero-inflation \citep{lit:lambert}. 
If a standard LGCP is fitted to data of this type some overdispersion unexplained and the estimated mean field and covariance parameters will be biased; this is not the case for the LSCP model.  We discuss and example of this in Section \ref{sec:examples}. 
The two-class model can of course be simplified further by assuming a constant intensity for both classes, and $\log \ints(\psp)$ is then of the form \eqref{eq:levelSetFunction}. A realization of this simplified model is shown in Panel (c).

The last model example uses the structure of the level set formulation to capture effects on the boundary between two regions. 
For a model with three classes, the second class takes on the role of an interface layer between the first and third class as can be seen in Panel (d).
The log intensity is in this case $\log \ints(\psp) = \mixProb_1(\psp) X_1 + \mixProb_2(\psp) \latf_2(\psp) + \mixProb_3(\psp) X_3$.
This can be used to model effects present on the boundary between two regions. Examples of potential applications are activity on shore lines between water and land or mixing regions between fluids. 

\begin{figure}
\centering	
\begin{subfigure}{0.24\textwidth}
\centering
\includegraphics[keepaspectratio, width = \textwidth]{./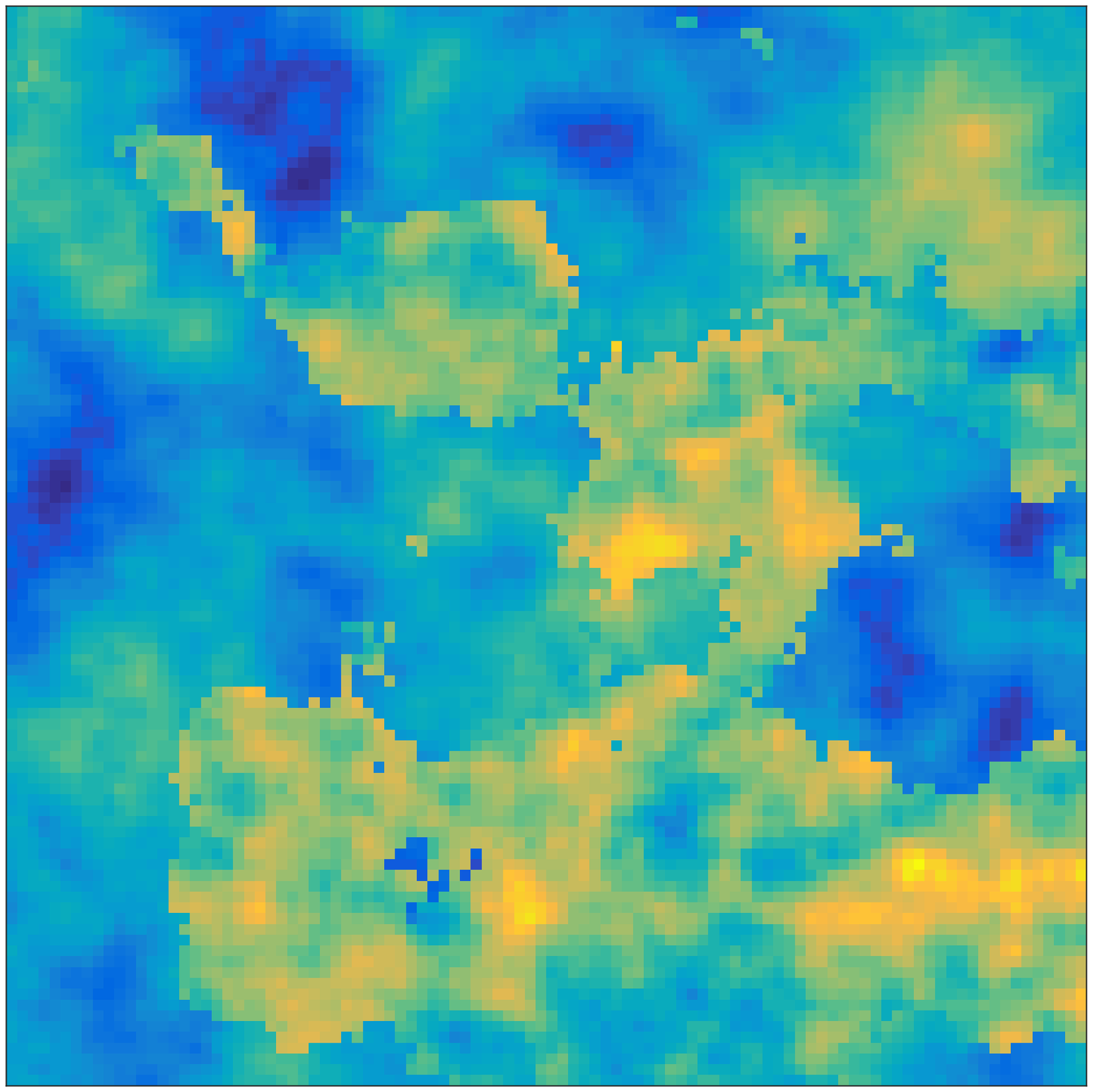}\\
\caption{ }
\label{fig:genExTwoLatf}
\end{subfigure}
\begin{subfigure}{0.24\textwidth}
\centering
\includegraphics[keepaspectratio, width = \textwidth]{./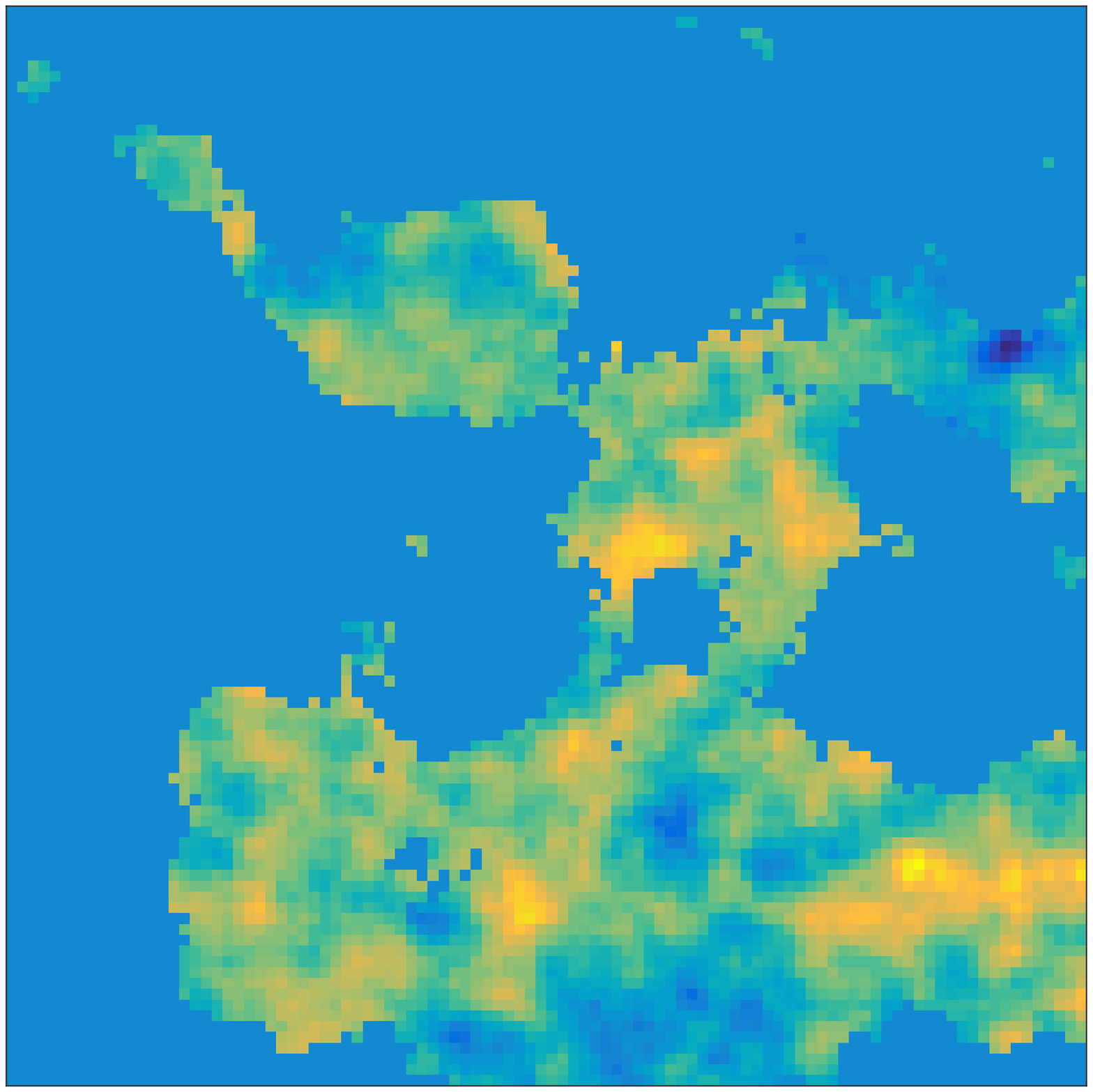} \\
\caption{}
\label{fig:genExOneLatf}
\end{subfigure} 
\begin{subfigure}{0.24\textwidth}
\centering
\includegraphics[keepaspectratio, width = \textwidth]{./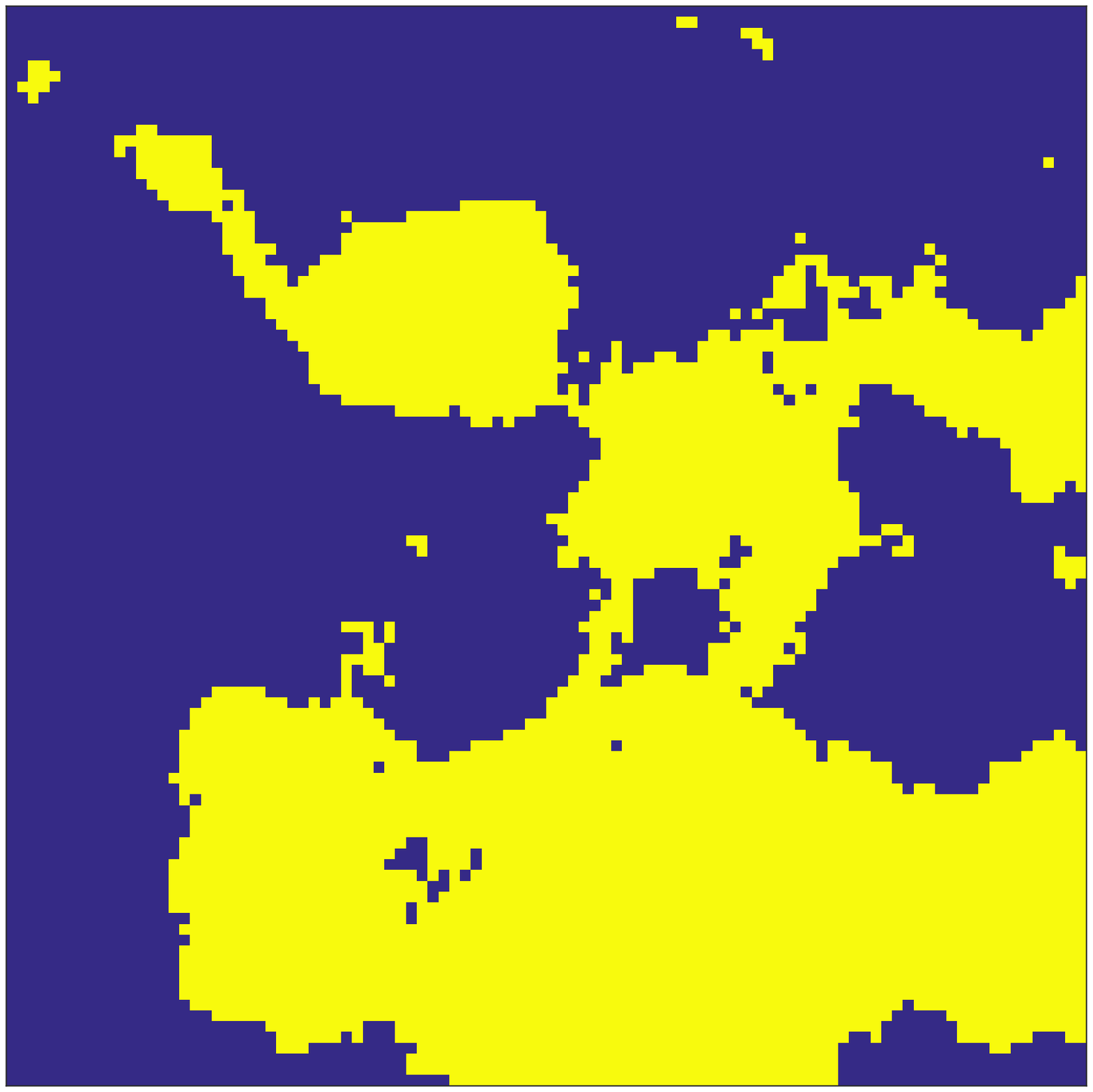} \\
\caption{ }
\label{fig:genExNoLatf}
\end{subfigure}
\begin{subfigure}{0.24\textwidth}
\centering
\includegraphics[keepaspectratio, width = \textwidth]{./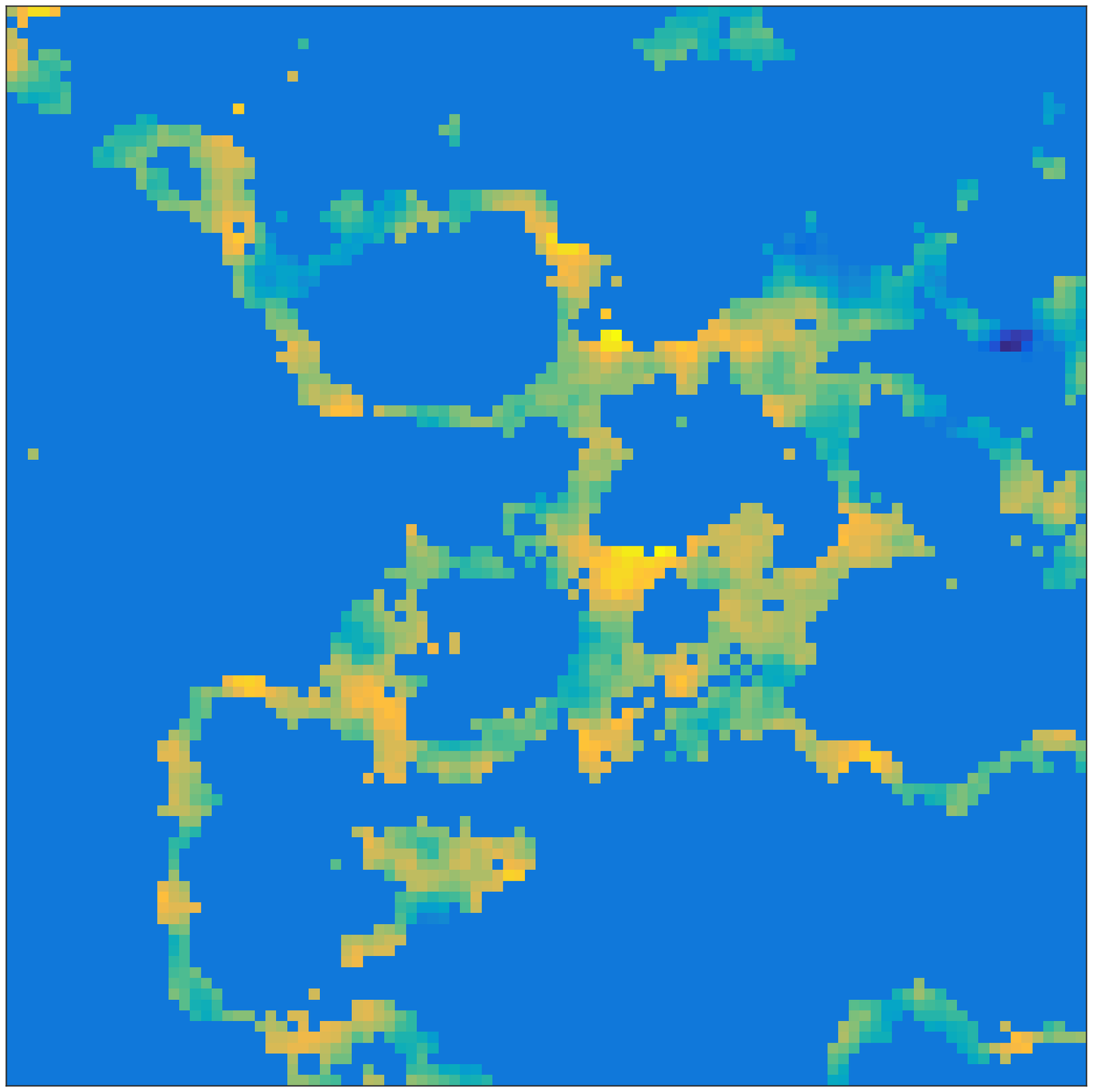} \\
\caption{}
\label{fig:genExSnake}
\end{subfigure} \\

\caption{Realization of the log intensity surface, $\log \ints(\psp)$, for the four models presented in Section \ref{sec:specialModels}. 
Panel a) corresponds to the model with two random classes, Panel b) with one constant and one random, panel c) with two constant, and Panel d) is the model with two constant and a third random boundary class.} 
\label{fig:fourExamps}
\end{figure}

\subsection{Model properties}\label{sec:properties}
The intensity measure $\Lambda = \{ \Lambda(\subdomSp) = \int_{\subdomSp} \ints(\psp) d\psp ; \subdomSp \subseteq \domSp  \}$ for a Cox process is well-defined if $\ints$ is almost surely finite and integrable. The \ac{lscp} model with $K=1$ reduces to the standard LGCP model, which has a well-defined random intensity measure if realizations of the Gaussian field are identified with its continuous modification \citep{lit:moller}. For $K>1$, a continuous modification does not need to exist but almost sure integrability follows if $\latf_0$ is a.s.~continuous which ensures that the sets $\{ \psp : c_{k-1} < \latf_0(\psp) \le \threshParam_k \}$ are a.s.\ Lebesgue measurable for all $k \in \{1, ..., K\}$. Hence the \ac{lscp} model is well-defined when the realizations of all Gaussian fields are identified with their continuous modification with respect to the Lebesgue measure. By the same argument as in Theorem 3 of \citet{lit:moller}, ergodicity of the \ac{lscp} model follows from ergodicity of $\log \ints$. Thus, the \ac{lscp} model is ergodic if all latent Gaussian fields are ergodic.

The following proposition gives semi-explicit formulas for the two first product densities. 
\begin{prop}\label{prop:intensmoments}
	For a level set Cox processes with log intensity \eqref{eq:loglambda},
	where $\{X_k\}_{k=1}^K$ are zero-mean stationary random fields with covariance functions $r_k$, the first moment of the intensity function equals
	\begin{align}
	\rho_1(\psp) = \expect{ \ints(\psp) } &= 
	\sum_{k=1}^K \exp\left(\mu_k(\psp) + \frac{r_k(0)}{2}\right) \left( \Phi\left(c_k-\mu_0(\psp) \right)  - \Phi\left(c_{k-1}-\mu_0(\psp) \right) \right),
	\label{eq:intensity}
	\end{align}
	where $\Phi$ is the CDF of a standard normal distribution.
	Further, the second moment of $\lambda$, $\rho_2(\psp_1,\psp_2) $, corresponding to the second order product density equals
	\begin{align}
	\rho_2(\psp_1,\psp_2) &=  \sum_{k=1}^K \exp\left(\mu_k(\psp_1) + \mu_k(\psp_2) + r_k(0) + r_k(|\psp_1-\psp_2|) \right) p_{kk} \\
	&+ \sum_{k=1}^K\sum_{l\neq k} p_{lk} \exp\left(\mu_l(\psp_1) + \mu_k(\psp_2) + \frac{r_k(0) + r_l(0)}{2}\right).
	\end{align}
	Here 
	\begin{align*}
		p_{lk} =  
\int_{\threshParam_{k-1}}^{\threshParam_{k}} 		\left(\Phi\left(\frac{c_{l}-\mu^*(u)}{\sigma^*(u)} \right) - \Phi\left(\frac{c_{l-1}-\mu^*(u)}{\sigma^*(u)} \right) \right) \frac{e^{-\frac{(u-\mu_0(\psp_1))^2}{2}}  }{\sqrt{2\pi}} du,  \\		
	\end{align*}
	where $\mu^*(u) = \mu_0(\psp_2) + \frac{r_0(|u-\psp_2|)}{r_0(0)}(u - \mu_0(u))$ and $\sigma^*(u) = \sqrt{r_0(0)- \frac{r^2_0(|u-\psp_2|)}{r_0(0)}}$.
\end{prop}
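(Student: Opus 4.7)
The plan is to proceed by direct computation, exploiting two features of the model: the indicator functions $\indicator{c_{k-1} < X_0(\psp) + \mu_0(\psp) < c_k}$ are mutually exclusive (they correspond to a partition of $\domSp$), and the fields $X_0, X_1, \ldots, X_K$ are independent. Because of the partition property, $\exp\{\latf(\psp)\}$ collapses neatly:
\begin{equation}
\exp\{\latf(\psp)\} = \sum_{k=1}^K \exp\{X_k(\psp) + \mu_k(\psp)\}\,\indicator{c_{k-1} < X_0(\psp) + \mu_0(\psp) < c_k},
\end{equation}
since at each $\psp$ exactly one indicator is nonzero. Taking expectation and using independence of $X_k$ and $X_0$ factors each summand into a moment generating function piece and a probability piece. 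For the MGF piece, $X_k(\psp) \sim N(0, r_k(0))$ gives $\expect{\exp\{X_k(\psp) + \mu_k(\psp)\}} = \exp\{\mu_k(\psp) + r_k(0)/2\}$. For the probability piece, the standard normal marginal assumption on $X_0$ yields $\Phi(c_k - \mu_0(\psp)) - \Phi(c_{k-1} - \mu_0(\psp))$. Summing gives \eqref{eq:intensity}.

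For the second moment I would apply the same partition trick at both $\psp_1$ and $\psp_2$, obtaining a double sum
\begin{equation}
\exp\{\latf(\psp_1) + \latf(\psp_2)\} = \sum_{l,k=1}^K \exp\{X_l(\psp_1) + X_k(\psp_2) + \mu_l(\psp_1) + \mu_k(\psp_2)\}\, I_{l}(\psp_1) I_{k}(\psp_2),
\end{equation}
where $I_k(\psp) = \indicator{c_{k-1} < X_0(\psp) + \mu_0(\psp) < c_k}$. Independence of $\{X_k\}_{k=1}^K$ from $X_0$ splits the expectation into the MGF of the Gaussian exponent times $p_{lk} := \mathbb{P}(I_l(\psp_1) I_k(\psp_2) = 1)$. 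The MGF factor then splits into two cases: when $l = k$, $(X_k(\psp_1), X_k(\psp_2))$ is bivariate Gaussian with covariance $r_k(|\psp_1-\psp_2|)$, producing $\exp\{r_k(0) + r_k(|\psp_1-\psp_2|)\}$; when $l \neq k$, $X_l \perp X_k$ so the MGF factorizes into $\exp\{r_l(0)/2 + r_k(0)/2\}$. Collecting the two cases yields the claimed diagonal and off-diagonal sums.

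What remains is the explicit formula for $p_{lk}$. Here I would condition on $Y_1 := X_0(\psp_1) + \mu_0(\psp_1)$ and use the standard bivariate Gaussian conditional distribution: given $Y_1 = u$, the variable $Y_2 := X_0(\psp_2) + \mu_0(\psp_2)$ is normal with mean $\mu^*(u) = \mu_0(\psp_2) + \frac{r_0(|\psp_1-\psp_2|)}{r_0(0)}(u - \mu_0(\psp_1))$ and variance $r_0(0) - r_0^2(|\psp_1-\psp_2|)/r_0(0)$. The inner probability then becomes $\Phi((c_l - \mu^*(u))/\sigma^*(u)) - \Phi((c_{l-1} - \mu^*(u))/\sigma^*(u))$, and integrating against the standard normal density of $Y_1$ restricted to $(c_{k-1}, c_k)$ gives the stated expression.

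No single step looks hard in isolation; the main source of friction will be index bookkeeping, in particular keeping the roles of $l$ and $k$ (class at $\psp_1$ versus class at $\psp_2$) consistent between the Gaussian-field expectation and the joint level set probability, and being careful with the convention $r_0(0) = 1$ that follows from the identifiability normalization of $X_0$.
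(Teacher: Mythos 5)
Your proposal is correct and takes essentially the same route as the paper's proof: decompose the expectation over the mutually exclusive level set events, use independence of $\{X_k\}_{k=1}^K$ from $X_0$, and apply the log-normal (Gaussian MGF) expectation, with the standard bivariate Gaussian conditional distribution yielding $p_{lk}$. You in fact provide more detail than the paper, whose proof only treats the first moment explicitly and dispatches the second with ``similar calculations''; your version of $\mu^*(u)$ even repairs apparent typos in the paper's stated formula, and the $l$/$k$ bookkeeping issue you flag is real but harmless since the off-diagonal sum runs over all ordered pairs.
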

The proof is given in \ref{sec:proofs}. The form of the pair-correlation function for the \ac{lscp} model is given by 
$
g(\psp_1, \psp_2) = \frac{\rho_2(\psp_1, \psp_2)}{\rho_1(\psp_1) \rho_1(\psp_2)}
$, and can hence be expressed using the first and second product densities given in Proposition \ref{prop:intensmoments}. 
The integral in  $p_{lk}$ has to be evaluated numerically.
If $g$ is translation invariant we can compute the inhomogeneous K-function \citep{lit:baddeley} of the process as $K(r) = \int_{B(0,r)} g_0(h)dh$, where $B(0,r)$ is a ball with radius $r$ centered at the origin and $g_0(\psp) = g(0, \psp)$. In the case of a homogeneous intensity, the K-function shows the expected number of other points at a radius of $r$ from a specific point.

Finally, the inhomogeneous empty space function \citep{lit:baddeley,lit:daley2}, $F(r)$,  for the general model is given by Proposition \ref{prop:emptyspace}.
\begin{prop}\label{prop:emptyspace}
For a level set Cox processes with log intensity \eqref{eq:loglambda}
	where $X_k$ are zero-mean stationary random fields with covariance functions $r_k$, the inhomogeneous empty space functions is given by
	\begin{align}
	F(\psp_0, r) = 1 -  \expect{ \prod_{k=1}^K \exp\left( -  \int_{\domSp_k \cap B(\psp_0,r)}  e^{ \mu_k(\psp)} e^{ \latf_k(\psp)} d\psp \right) },
	\end{align}
	where for a given realization of $\latf_0$, $\domSp_k$ is the region classified as $k$.
\end{prop}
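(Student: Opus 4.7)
The plan is to start from the standard Cox process identity that the void probability over any set equals the expectation of $\exp$ of minus the integrated random intensity, and then exploit the mutual exclusivity of the level set classes to turn the sum inside the exponent into a product.

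First, I would recall that for a Cox process with driving intensity $\lambda$, conditional on $\lambda$ the count $N(B(s_0,r))$ is Poisson with mean $\int_{B(s_0,r)} \lambda(s)\,ds$, so
\begin{equation}
1 - F(s_0,r) \;=\; \mathbb{P}(N(B(s_0,r)) = 0) \;=\; \mathbb{E}\!\left[\exp\!\left(-\int_{B(s_0,r)} \lambda(s)\,ds\right)\right].
\end{equation}
This is valid as soon as the intensity measure is a.s.\ finite on bounded sets, which was already established in Subsection \ref{sec:properties} under the continuous modification assumption on $X_0, X_1, \ldots, X_K$.

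Next I would substitute the specific form \eqref{eq:loglambda}. Writing $D_k = \{s \in D : c_{k-1} < X_0(s) + \mu_0(s) < c_k\}$, these sets are pairwise disjoint (up to a Lebesgue null set, since $X_0$ has a continuous modification so the level sets $\{X_0 + \mu_0 = c_k\}$ have zero Lebesgue measure almost surely), and on $D_k$ we have $\log\lambda(s) = X_k(s) + \mu_k(s)$. Therefore
\begin{equation}
\int_{B(s_0,r)} \lambda(s)\,ds \;=\; \sum_{k=1}^K \int_{D_k \cap B(s_0,r)} e^{\mu_k(s)} e^{X_k(s)}\,ds.
\end{equation}
Exponentiating minus this sum turns it into the product $\prod_{k=1}^K \exp\bigl(-\int_{D_k \cap B(s_0,r)} e^{\mu_k(s)} e^{X_k(s)}\,ds\bigr)$.

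Finally, taking expectation of both sides of the resulting identity yields the claimed formula. The only step requiring care is justifying that the partition $\{D_k\}$ covers $D$ up to a Lebesgue null set so that the sum decomposition is valid inside the integral; this is where the standard-normal marginal assumption on $X_0$ together with a.s.\ continuity is used, exactly as in the well-posedness discussion in Subsection \ref{sec:properties}. No interchange of expectation and integration is needed in the final step, so there is no real analytic obstacle beyond this measure-theoretic bookkeeping.
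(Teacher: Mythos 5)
Your proposal is correct and follows essentially the same route as the paper's proof: both start from the Cox-process void probability $1-F(\psp_0,r)=\expect{\exp(-\int_{B(\psp_0,r)}\ints(\psp)\,d\psp)}$, use the mutual exclusivity of the classes to rewrite $e^{\sum_k \mixProb_k(\psp)(\latf_k(\psp)+\mu_k(\psp))}$ as $\sum_k \mixProb_k(\psp)e^{\latf_k(\psp)+\mu_k(\psp)}$, split the integral over the partition $\{\domSp_k\}$, and exponentiate the sum into a product before taking the outer expectation. Your added remark that the level sets $\{\latf_0+\mu_0=\threshParam_k\}$ are Lebesgue-null (so the partition is valid up to a null set) is a detail the paper leaves implicit in its well-posedness discussion, but it does not change the argument.
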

The proof is given in \ref{sec:proofs}.

\subsection{Finite dimensional approximation}\label{sec:findim}
As for standard LGCP models, some finite dimensional approximation of the LSCP model is needed if it is to be used for inference.
The discretization we will use is a classical lattice approximation. The observational domain is discretized into subregions $\domSp_i$ of a regular lattice over the domain, and the point locations are replaced by counts $\obsf_i$ of the number of observations within each subregion $\domSp_i$. This yields the discretized model $\obsf_i \sim \pPOIS(\ints_i)$, where $\ints_i = \int_{\domSp_j} \ints(\psp)d \! \psp$ and the information on the fine-scale behaviour of the point pattern behavior is lost. The stochastic integral in the definition of $\lambda_i$ is not Gaussian and generally difficult to handle. Therefore, a common approximation is to use $\ints_j \approx  |\domSp_j| \ints(\psp_j)$, for some location (usually the center) $\psp_j \in \domSp_j$ \citep{lit:moller}. In Appendix \ref{sec:theoretical}, we show consistency of this finite dimensional approximation of the likelihood for the \ac{lscp} model. More precisely, we show that the posterior distribution for the latent fields $\{X_k\}_k$ computed using the lattice approximation converges, in total variation distance, to the posterior distribution of the continuous process. 

For any fixed lattice approximation there is a positive probability that the level set field takes values in several of the intervals $\{c_k, c_{k+1}\}$ in any fixed lattice cell. 
Since the spatial information about the level set field on a finer scale than the lattice discretization are lost we propose adding a ``nugget'' effect, $\whiteNoise_j$, for each lattice cell $\domSp_j$. The ``nugget'' effect will model the within-cell classification uncertainty.
This gives the discretization $\lambda_j \approx |\domSp_j|\tilde{\ints}(\psp_j)$, where 
\begin{equation}
\log\tilde{\ints}(\psp_j) = \sum_{k=1}^K \indicator{c_{k-1} < X_0(\psp_j) + \mu_0(\psp_j) + \whiteNoise_j < c_{k}}\latf_k(\psp_j),
\label{eq:fuzzylambda}
\end{equation}
and $\whiteNoise_j \sim \pN(0,\fuzzy^2)$. The \textit{nugget variance}, $\fuzzy^2$, controls the amount of mixing between the classes for a given realization of $\latf_0$. This classification mechanism is equivalent to the ordered probit model discussed in \citet{lit:dunlop}. 
In practice, it is typically difficult to objectively discern an appropriate value for $\std_{\whiteNoise}$ and hence we therefore let $\std_{\whiteNoise}$ be a regular parameter to be estimated for a fixed discretization.

\section{Inference}
\label{sec:estimation}

It is common to fit LGCP models in a Bayesian setting. A popular approach is through Markov chain Monte Carlo (MCMC) methodology, for instance using the Metropolis adjusted Langevin algorithm (MALA) \citep{lit:roberts} which was suggested by \citet{lit:moller}. Another approach is through integrated nested Laplace approximation (INLA) \citep{lit:illian, lit:inla, lit:simpson}, which when applicable can have beneficial computational properties. 
In this work we use a Bayesian MCMC approach for estimating the model parameters of the \ac{lscp} model. Specifically, we propose a method based on the preconditioned Crank-Nicholson (pCN) MALA MCMC method of \citet{lit:cotter}. An important property of the pCN MALA is the optimal step length invariance to mesh refinement, which regular MALA does not have. However, for the \ac{lscp} model the main advantage is that it can be combined with efficient simulation methods based on the fast Fourier transform \citep{lit:lang} to decrease the computational cost; we provide more details on this below.  

Denote the parameters associated with class $k$ as $\theta_k$. For the level set field, $\latf_0$, we also include the nugget variance, $\fuzzy$, and the thresholds, $\{\threshParam_k\}_k$ in $\theta_0$. By introducing an auxiliary field $\clf$ defined such that $\prob{\clf(\psp_j) = k}  = \normcdf\left(\frac{\threshParam_{k} - \latf_0(\psp_j)}{\fuzzy}\right) - \normcdf \left(\frac{\threshParam_{k-1} - \latf_0(\psp_j)}{\fuzzy}\right)$, we have
\begin{align}
\log \tilde{\lambda}(\psp) \overset{d}{=} \sum_{k=1}^K \clf(\psp) \latf_k(\psp).
\end{align}
This means that parameters and latent fields of different classes, $\{\latf_k, \theta_k\}$, are conditionally independent given $\clf$. We use this to construct a Metropolis-within-Gibbs algorithm \citep{lit:casella} to sample from the joint posterior. In the $i$th iteration of the algorithm, the following three steps are performed
\begin{enumerate}
\item Sample from $\clf | \{\latf_k, \theta_k\}_k, \obsf$. The sampling can be performed exactly since $\clf(\psp_i) \perp \clf(\psp_i), \forall i \neq j$ given $\{\latf_k, \theta_k\}_k, \obsf$ and $\prob{\clf(\psp_i) = k}$ is known up to a normalizing constant. 
	
\item Sample from $\theta_k | \clf, \latf_k$ using the MALA random walk sampler. Since parameters from different classes are conditionally independent, the sampling can be performed separately, and in parallell, for each $\theta_k$. 
	
\item Sample from $\latf_k | \clf, \theta_k, \obsf$ using the pCN MALA algorithm of \citet{lit:cotter}. Also in this step, the updates for different $k$ can be done in parallel since the different Gaussian fields are conditionally independent.

\end{enumerate}

The computational bottleneck of the algorithm is the third step, where the latent Gaussian fields are sampled. If the model is discretized into a lattice with $N$ grid cells, the sampling of the Gaussian fields in the third step of the estimation method generally requires $\O(KN^3)$ operations. An approach to remedy this would be to acquire a Gaussian Markov random field approximation of the problem. This idea has been studied by \citep{lit:lindgren, lit:rue, lit:simpson} revealing computationally attractive properties on arbitrary domains. An adaptation of the method by \citet{lit:simpson} to the LSCP model would reduce the computational cost to $O(KN^{3/2})$. We can reduce this cost further by using the fact that proposals in the pCN MALA algorithm are drawn from the prior distribution of the fields. 

If we restrict ourselves to square domains and assume that the fields have stationary and isotropic covariance functions with known spectral density, we can represent $\{ \latf_k  \}_{k=0}^K$ using Fourier series expansions. By truncating these series, the fast Fourier transform can be used to sample the field on a regular lattice over the region. This means that the proposals can be generated with a $O(KN\log(N))$ computational complexity. Working in the spectral domain also allows for efficient computation of all gradients and acceptance probabilities needed, making the spectral approach and the pCN-MALA method in combination very favorable. In Appendix \ref{sec:theoretical} we justify this truncation theoretically by showing that convergence of the lattice approximation still holds given certain bounds on the spectral densities.

\section{Application}\label{sec:examples}

To further illustrate our approach we return to the tropical rainforest data example in Section \ref{sec:intro} to compare the effect of considering \ac{lscp} models to a simple Poisson regression model as well as to the LGCP model. 

\subsection{Data}
The dataset consists of $2461$ locations of trees of the species  \textit{Beilschmiedia pendula} in a 50 ha rectangular study plot ($500$ x $1000$ meter) on the island of Barro Colorado in Panama, Figure \ref{fig:treesIntro}. The data were acquired from the first census of a major ongoing ecological study that started in the 1980s, designed to understand the mechanisms maintaining species richness, consisting of the observed positions of a large number of tree species (\citep{hubbellal:05,lit:hubell,lit:condit}). The study deliberately considers a spatially mapped rainforest community, arguing that population and community dynamics occur in a spatial context \citep{hubbell:01}.  In addition to the spatial pattern formed by the tree locations, measurements of topographical variables and soil nutrients that potentially influence the spatial distribution of the trees are available \citep{johnal:07, schreeg:10, lit:bcisoil},  with the aim of linking spatial patterns to spatial environmental variations, reflected by observed topography and soil nutrients. In the statistical literature some of the point patterns derived from the study have been considered, for example in \citep{lit:moller2, lit:simpson, lit:illian, lit:rajala} and the \textit{Beilschmiedia pendula} data are available in the \texttt{spatstat} package \citep{lit:spatstat} for the R project \citep{lit:r}.  

Elevation was measured and sampled on a 5x5 meter grid, and based on this an approximation of the slope at each of these grid points was calculated using a Sobel filter \citep{lit:Sonka}. Soil samples were taken at 300 locations, for which the amount of 12 soil constituents (Al, B, Ca, Cu, Fe, K, Mg, Mn, N, Nmin, P, Zn) as well as the pH level were measured; these were interpolated to yield spatially continuous covariates. Since the covariates derived from the soil samples and elevation were not sampled with the grid resolution they had to be interpolated to a common latice. In this example, the model was discretized to $30\times 60$ subregions over the observational window, giving a spatial resolution of $16.7 \times 16.7$ meters. 
The number of observed points in each subregion is shown as a two dimensional histogram in Figure \ref{fig:griddedTrees}.
The spatial interpolation of the covariates to this lattice grid was performed using bi-cubic splines with the function \textit{interp2} in Matlab (R2016a); Figure  \ref{fig:allCovs} shows the standardized covariates.

To avoid problems with multicollinearity among the covariates we chose to discard the covariates corresponding to high variance inflation factors (VIF) \citep{lit:neter}. The covariates were discarded iteratively by first computing the VIF for all the covariates, removing the covariate corresponding to the highest VIF value if it exceeds 5 and then starting over on the new reduced set of covariates. The algorithm was stopped when none of the VIFs exceeded 5. By this procedure, the covariates B, Ca, K, and Zn were discarded, leaving 11 covariates for further analysis.

\begin{figure}[t]
\begin{subfigure}{0.24\textwidth}
\centering
Elevation
\includegraphics[width = 1 \textwidth, keepaspectratio]{./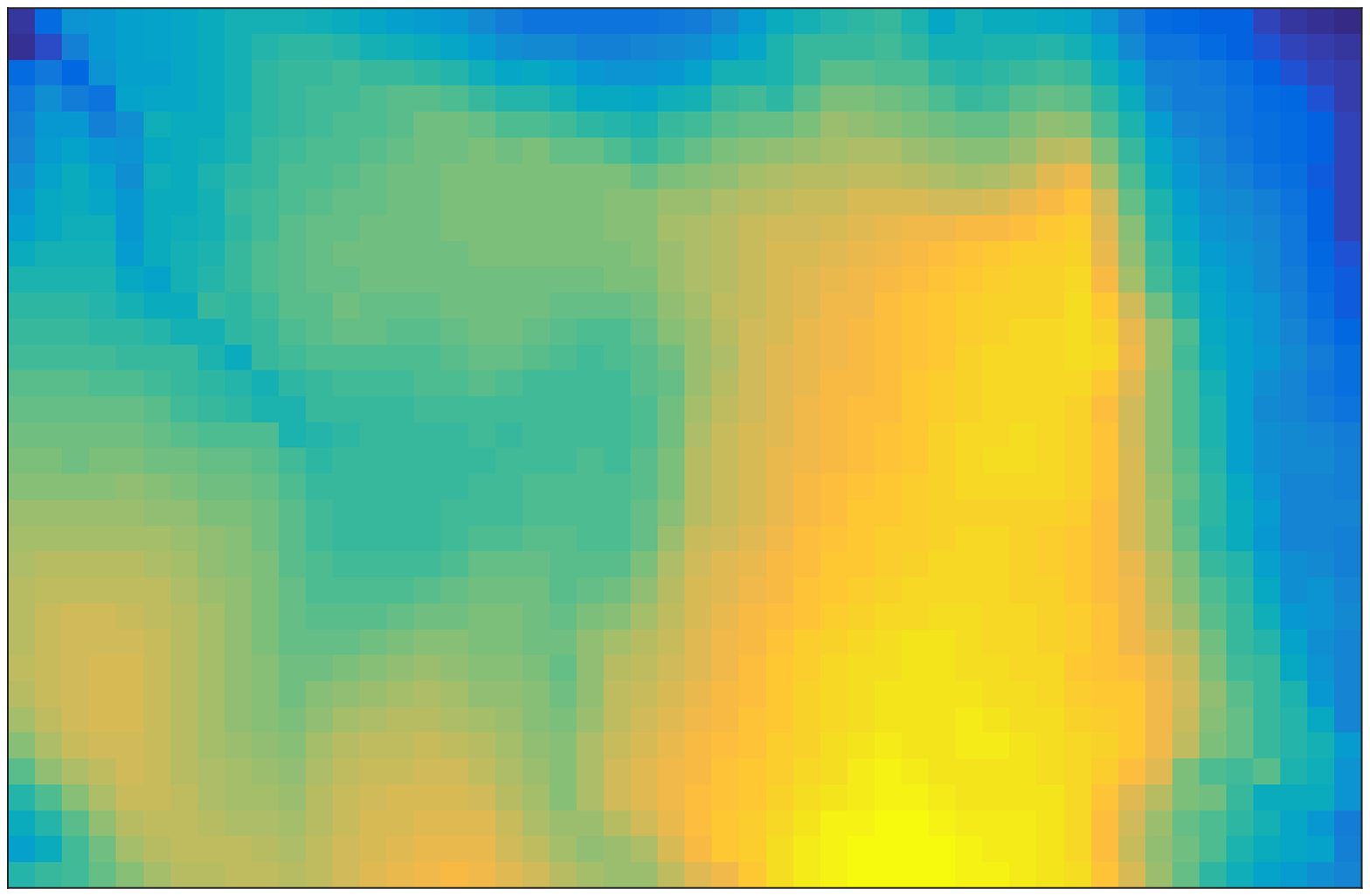}
\end{subfigure}
\begin{subfigure}{0.24\textwidth}
\centering
Slope
\includegraphics[width = 1 \textwidth, keepaspectratio]{./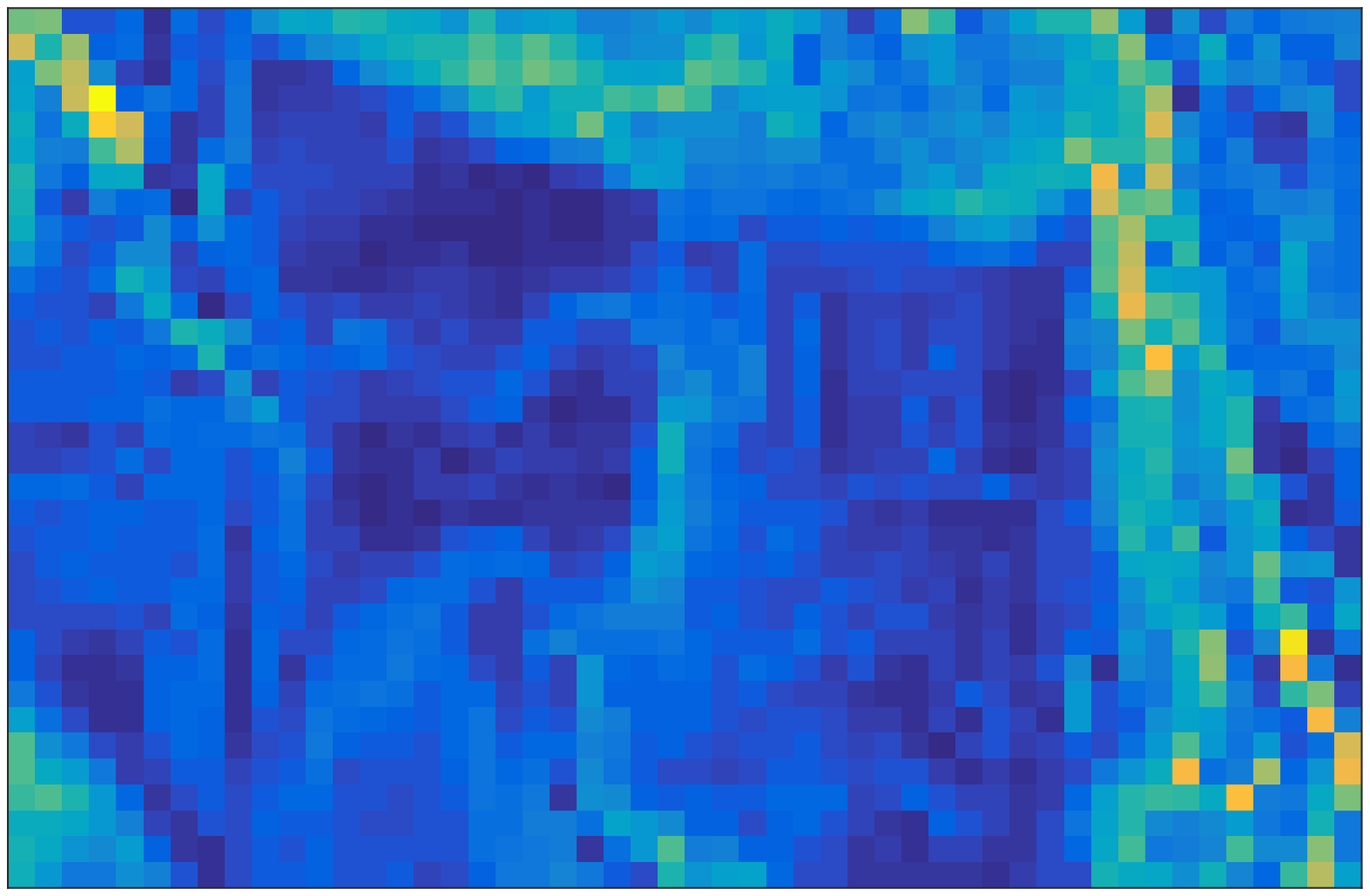}
\end{subfigure}
\begin{subfigure}{0.24\textwidth}
\centering
Al
\includegraphics[width = 1 \textwidth, keepaspectratio]{./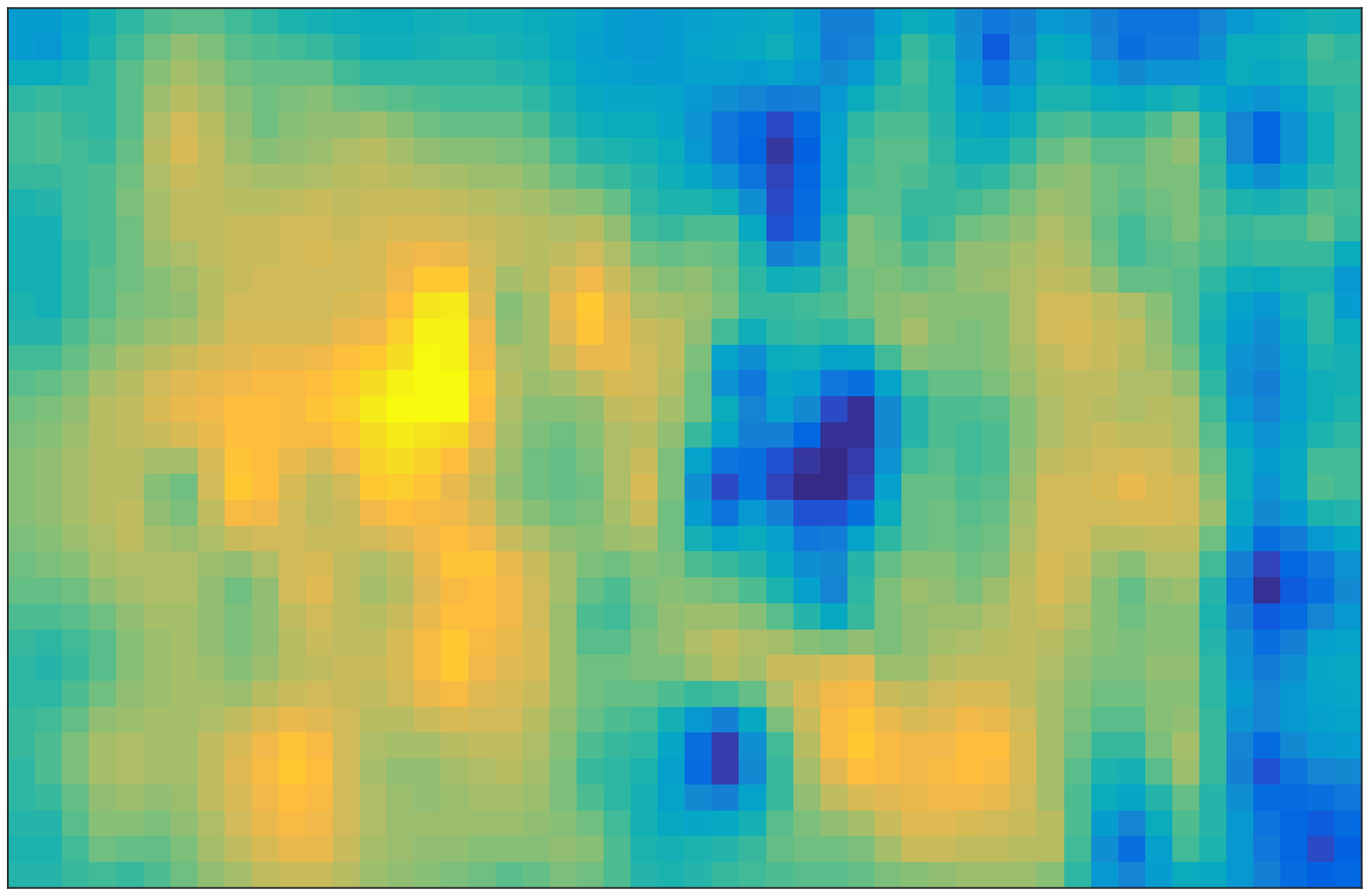}
\end{subfigure}
\begin{subfigure}{0.24\textwidth}
\centering
B
\includegraphics[width = 1 \textwidth, keepaspectratio]{./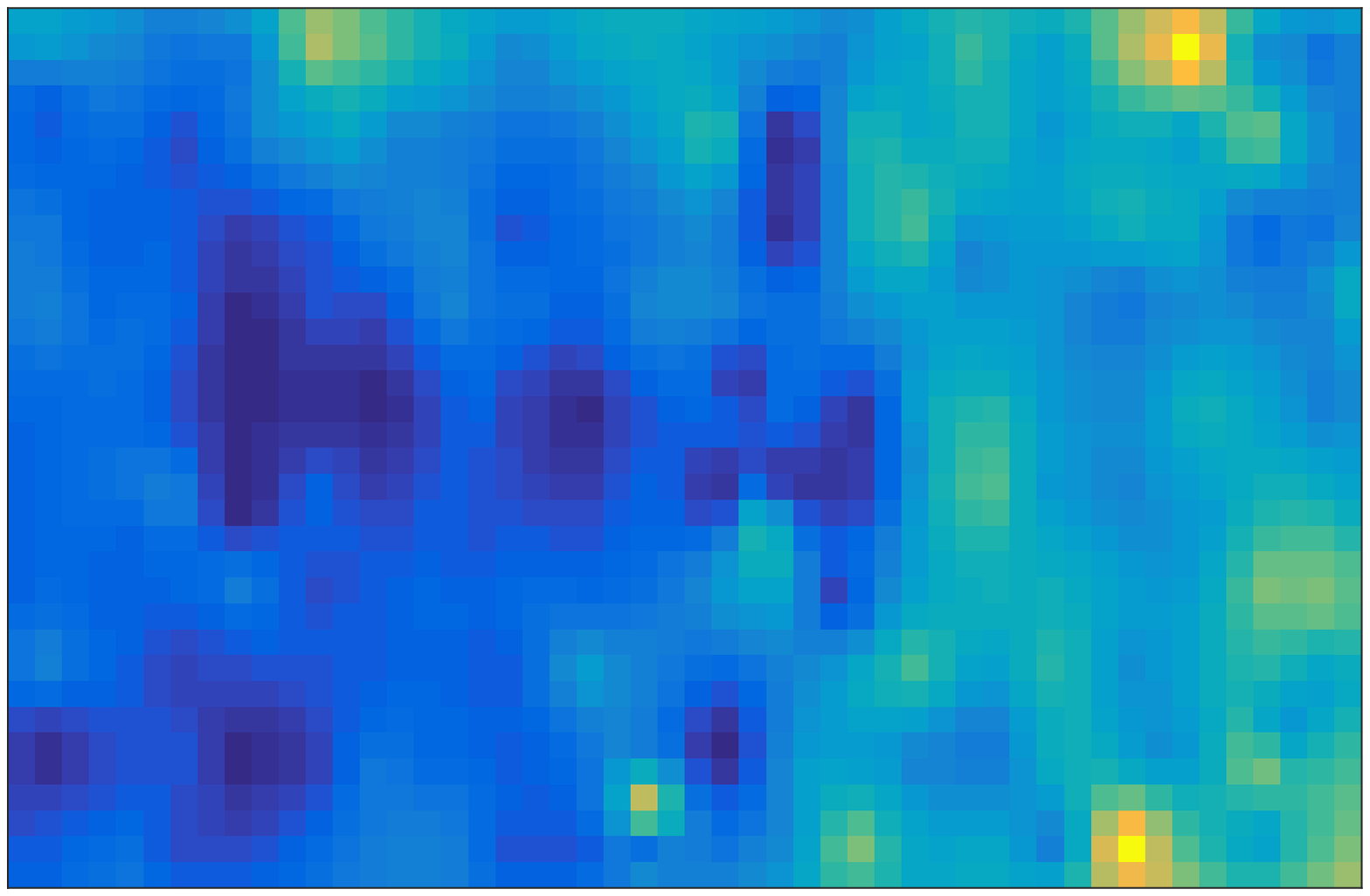}
\end{subfigure} \\

\begin{subfigure}{0.24\textwidth}
\centering
Ca
\includegraphics[width = 1 \textwidth, keepaspectratio]{./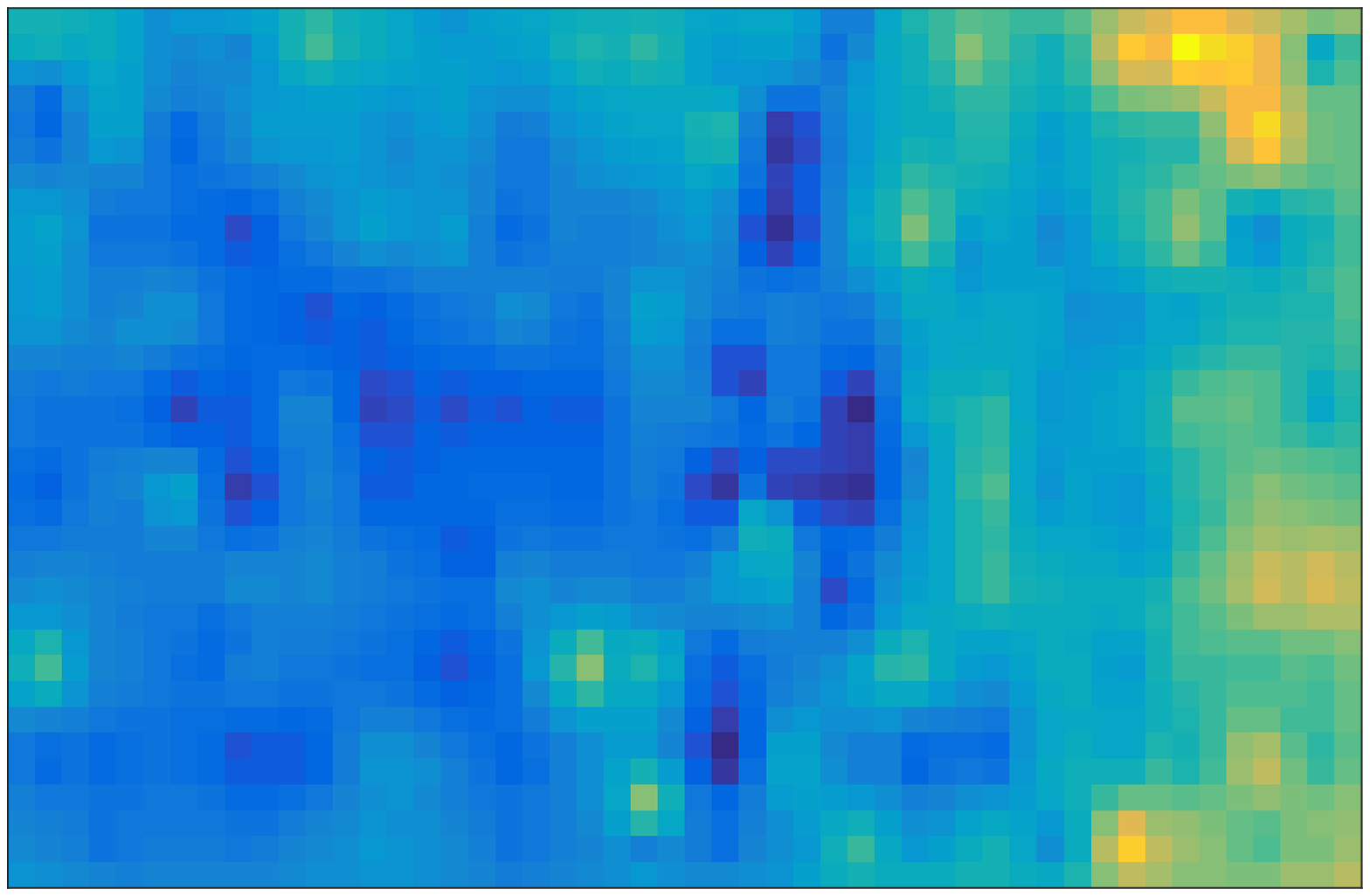}
\end{subfigure}
\begin{subfigure}{0.24\textwidth}
\centering
Cu
\includegraphics[width = 1 \textwidth, keepaspectratio]{./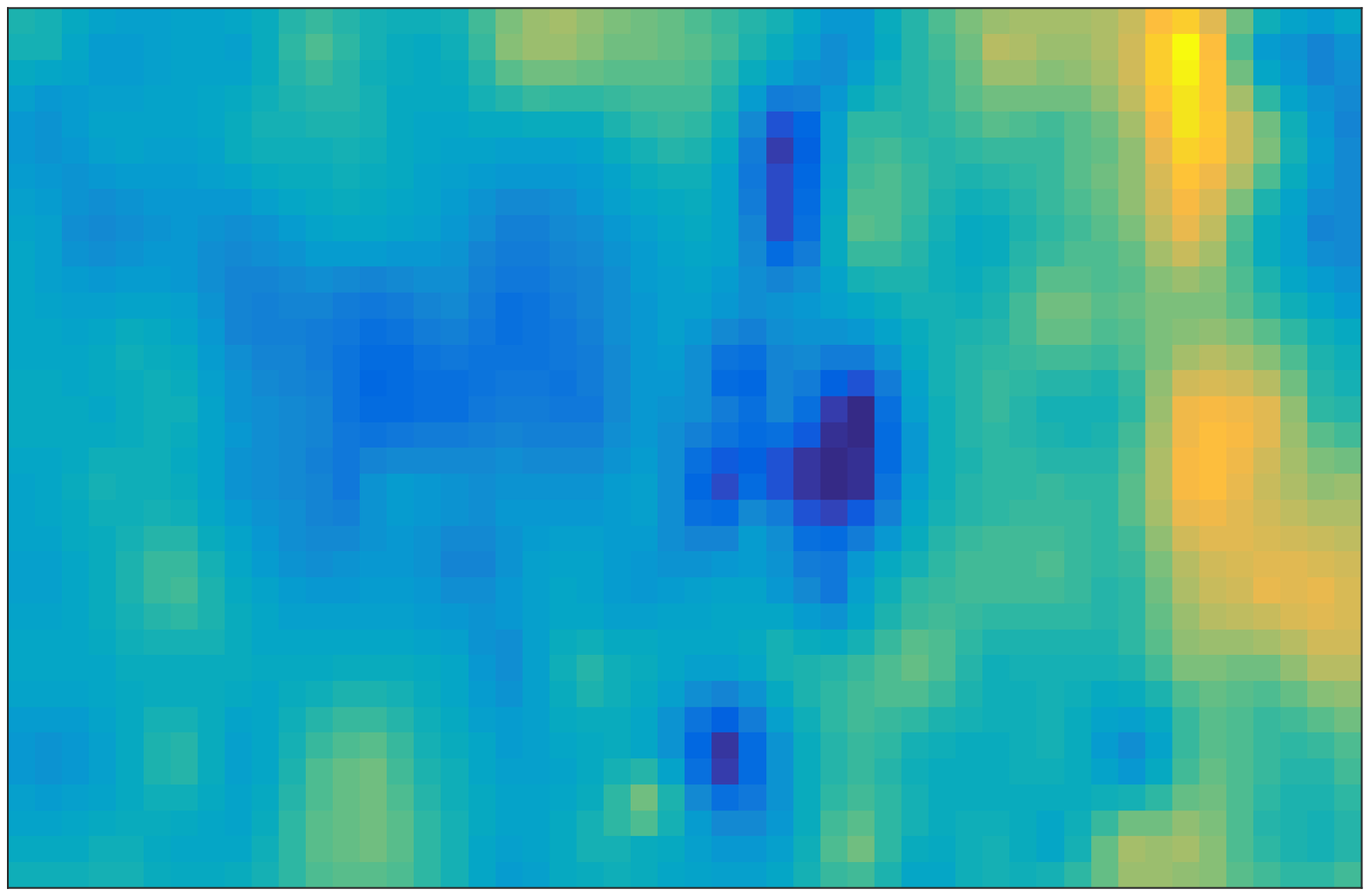}
\end{subfigure}
\begin{subfigure}{0.24\textwidth}
\centering
Fe
\includegraphics[width = 1 \textwidth, keepaspectratio]{./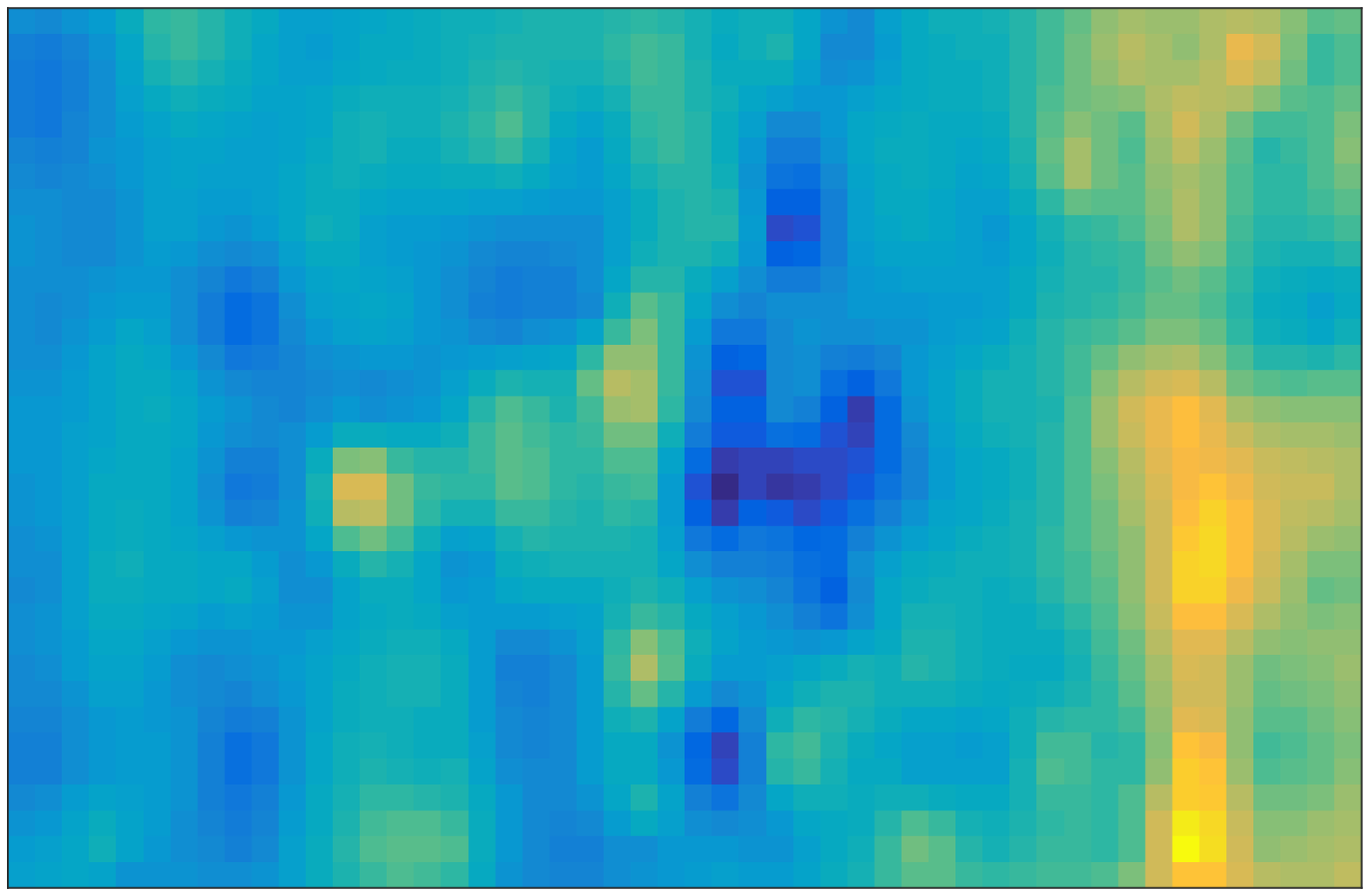}
\end{subfigure}
\begin{subfigure}{0.24\textwidth}
\centering
K
\includegraphics[width = 1 \textwidth, keepaspectratio]{./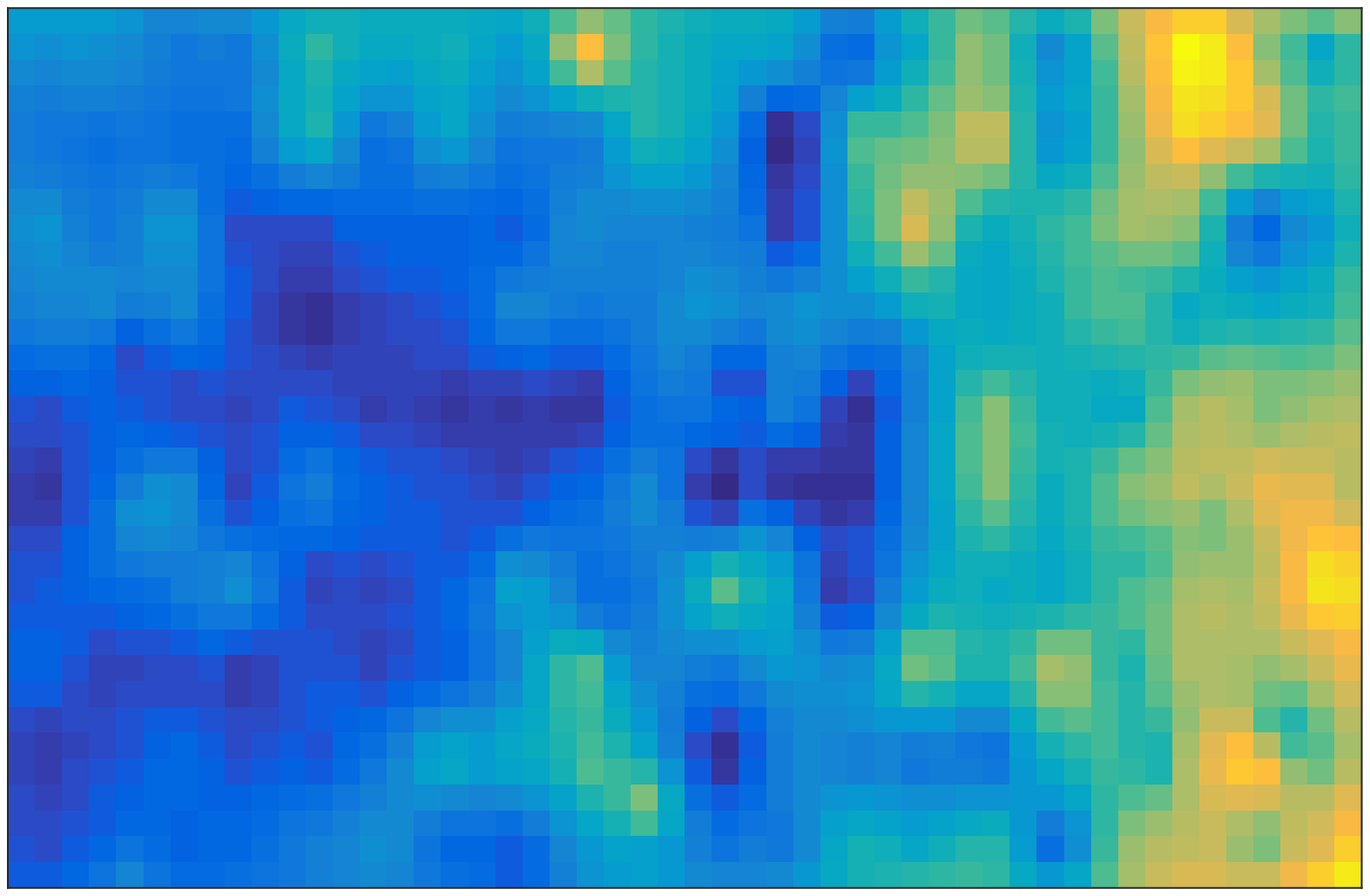}
\end{subfigure} \\

\begin{subfigure}{0.24\textwidth}
\centering
Mg
\includegraphics[width = 1 \textwidth, keepaspectratio]{./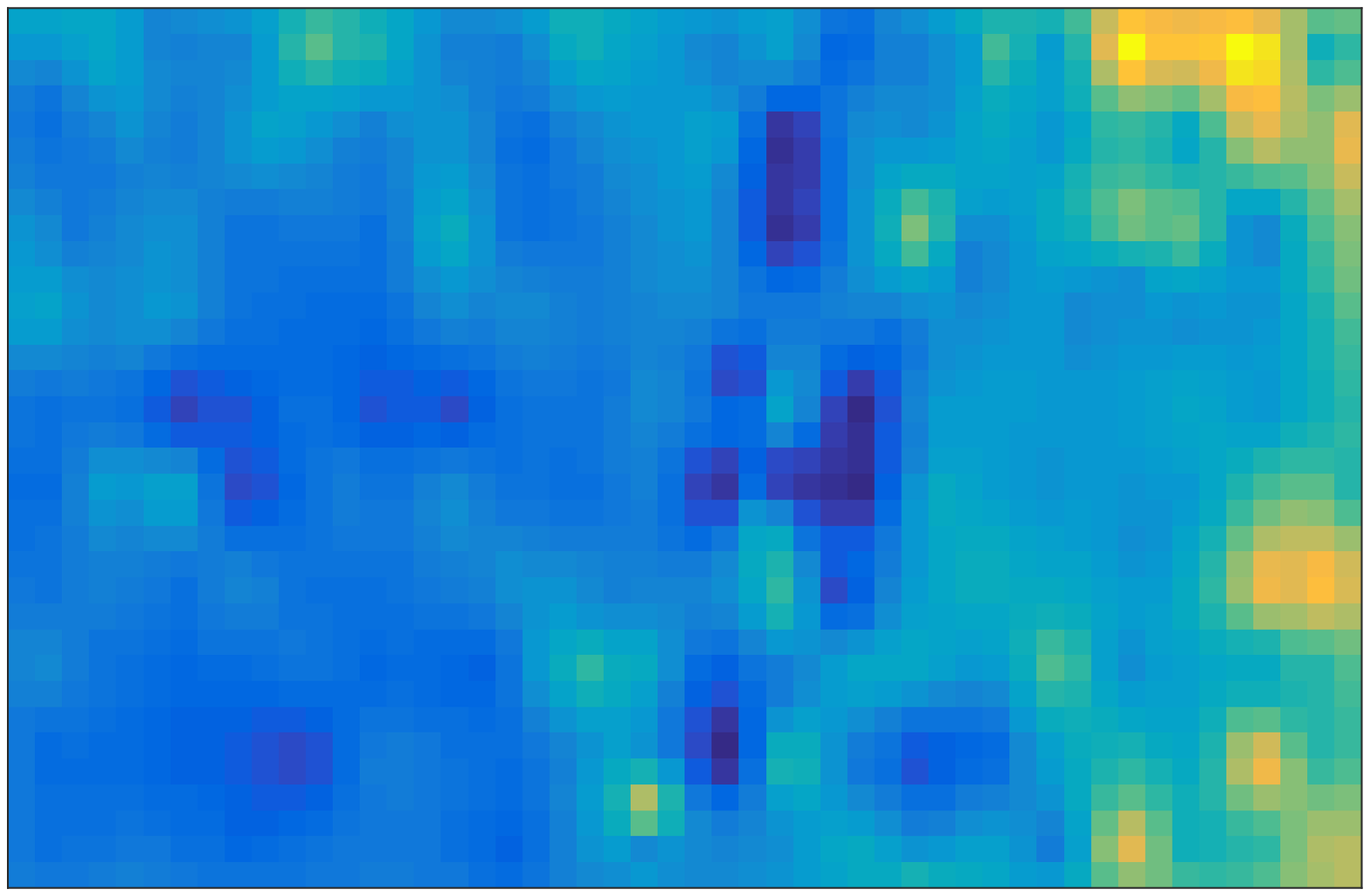}
\end{subfigure}
\begin{subfigure}{0.24\textwidth}
\centering
Mn
\includegraphics[width = 1 \textwidth, keepaspectratio]{./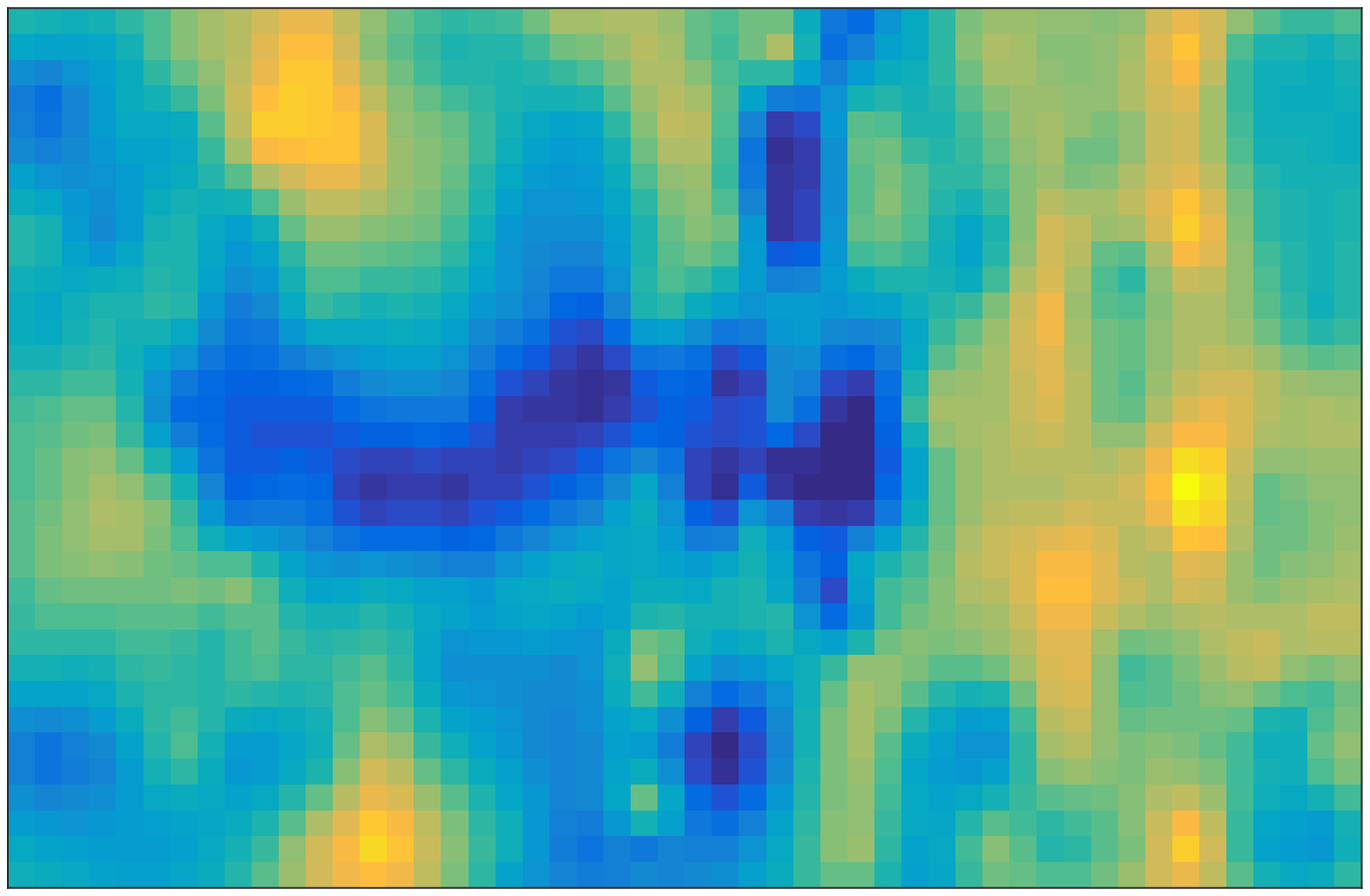}
\end{subfigure}
\begin{subfigure}{0.24\textwidth}
\centering
N
\includegraphics[width = 1 \textwidth, keepaspectratio]{./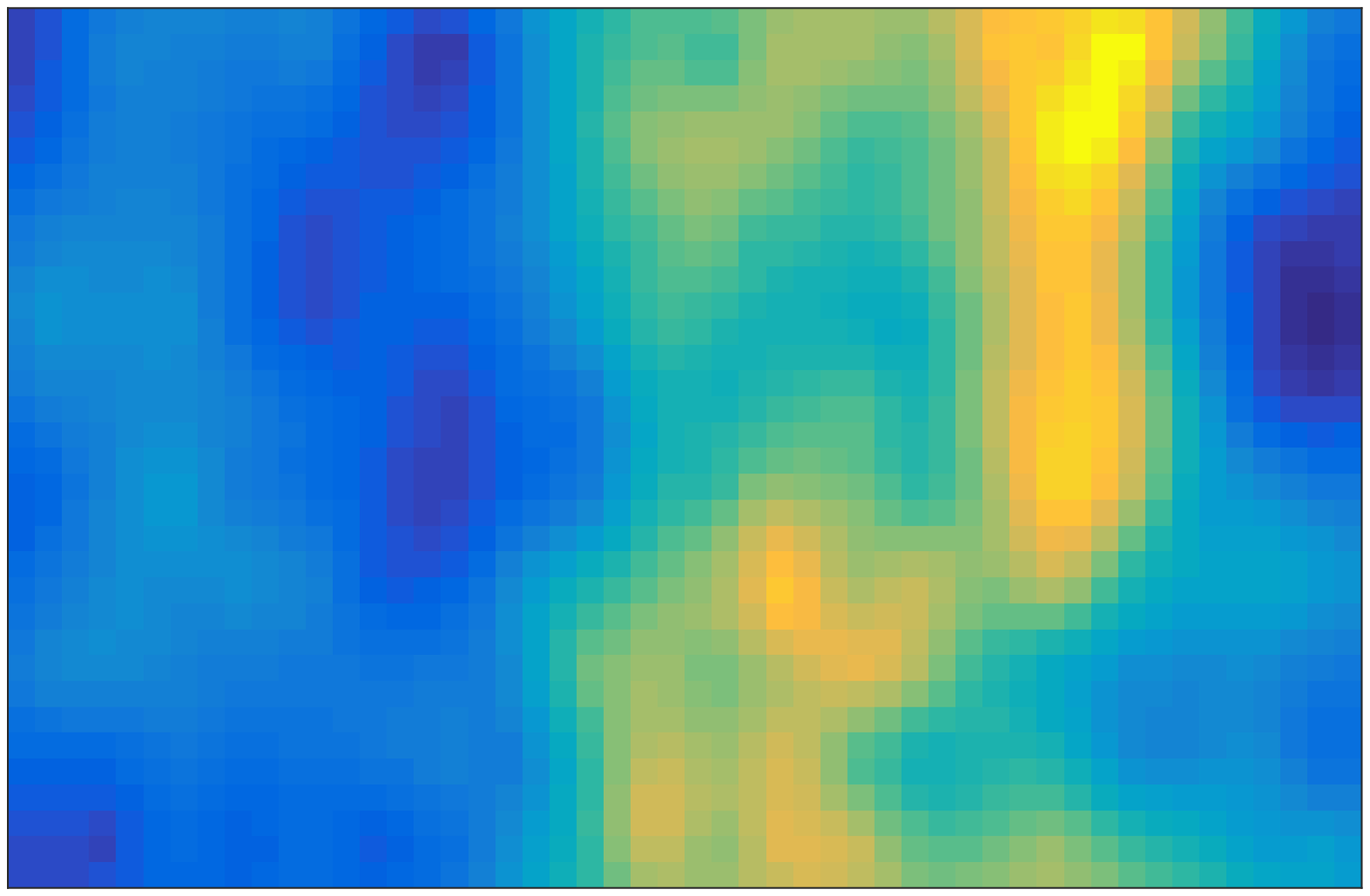}
\end{subfigure}
\begin{subfigure}{0.24\textwidth}
\centering
Nmin
\includegraphics[width = 1 \textwidth, keepaspectratio]{./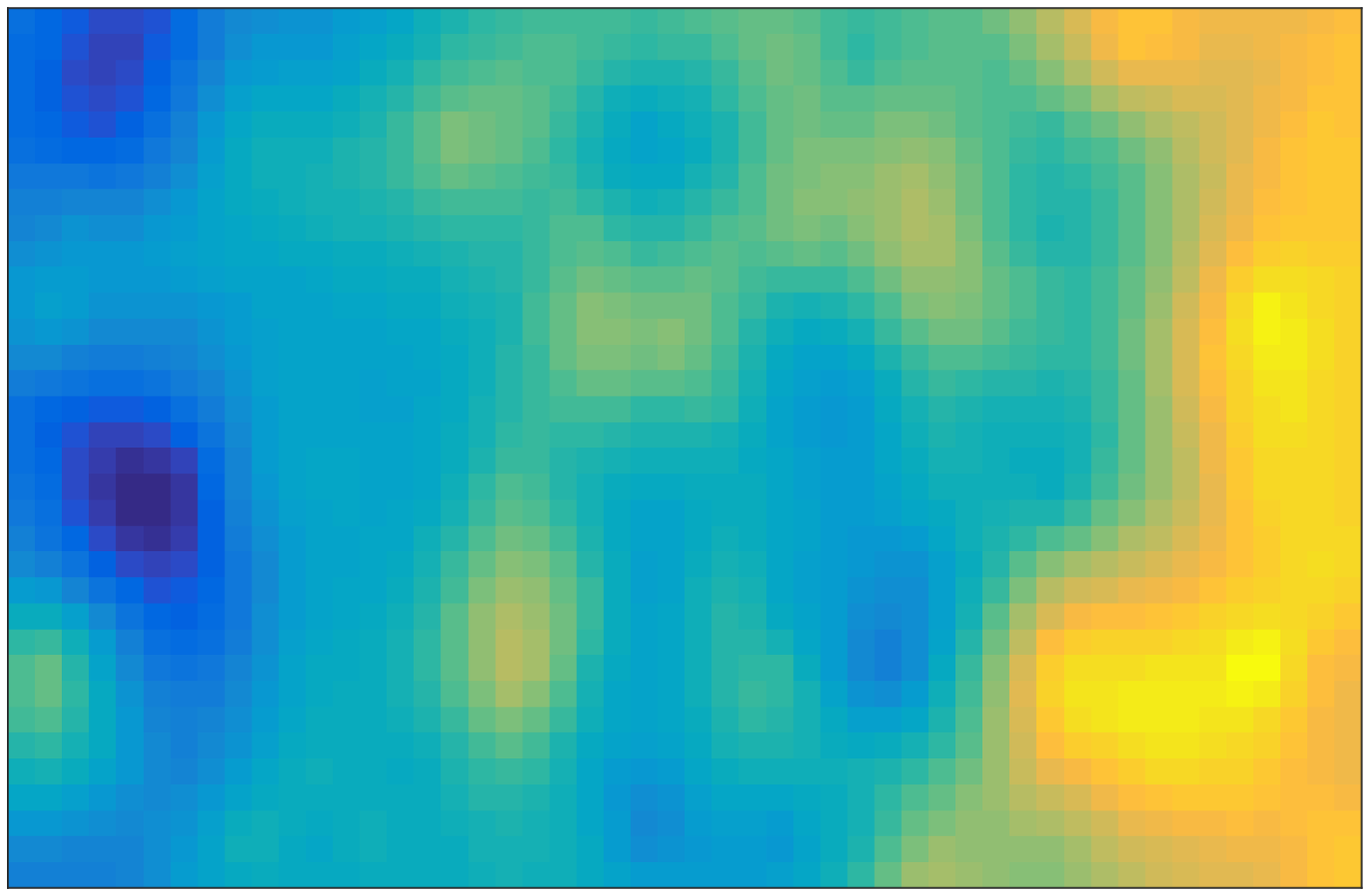}
\end{subfigure} \\

\begin{subfigure}{0.24\textwidth}
\centering
P
\includegraphics[width = 1 \textwidth, keepaspectratio]{./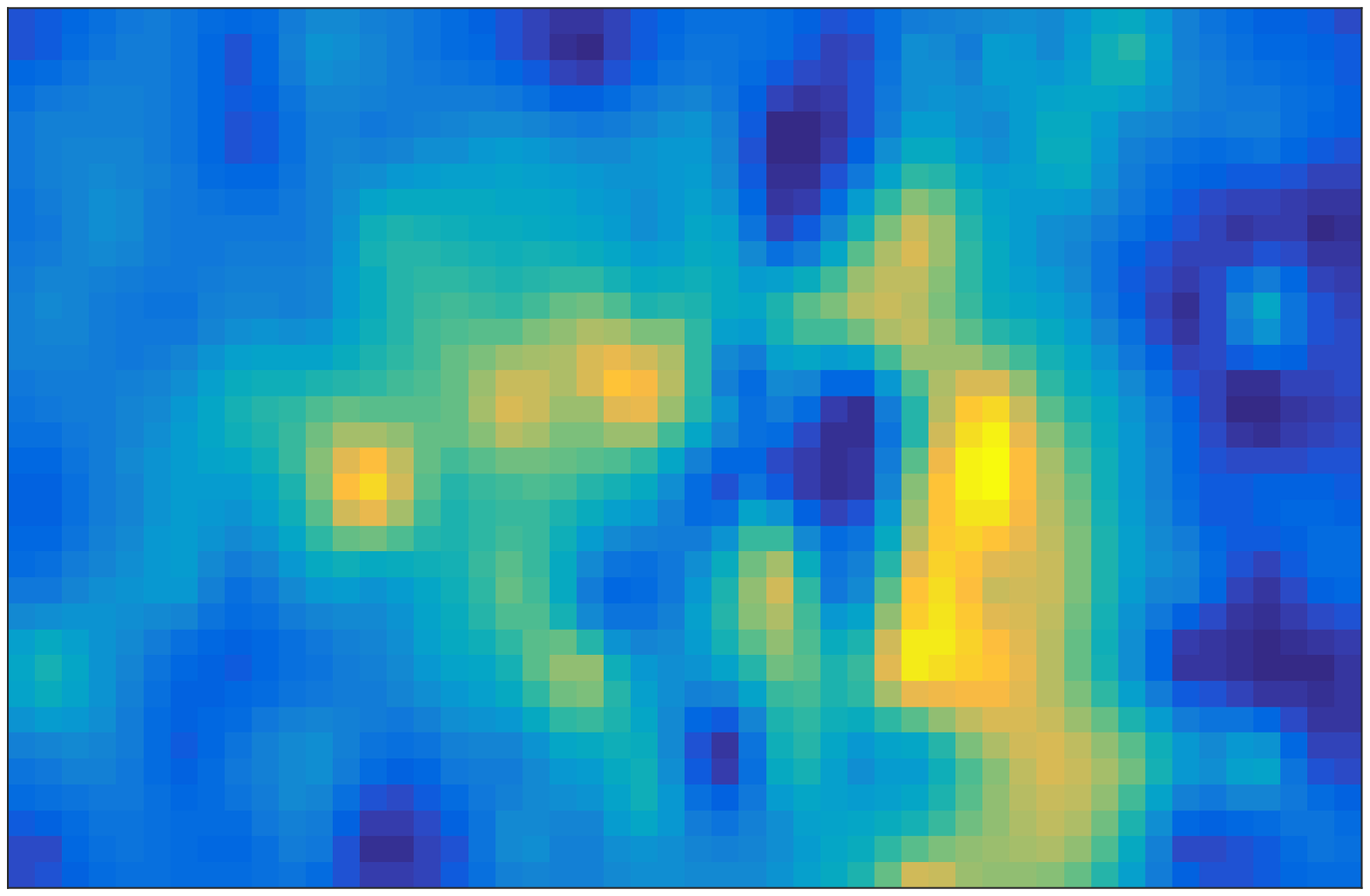}
\end{subfigure}
\begin{subfigure}{0.24\textwidth}
\centering
Zn
\includegraphics[width = 1 \textwidth, keepaspectratio]{./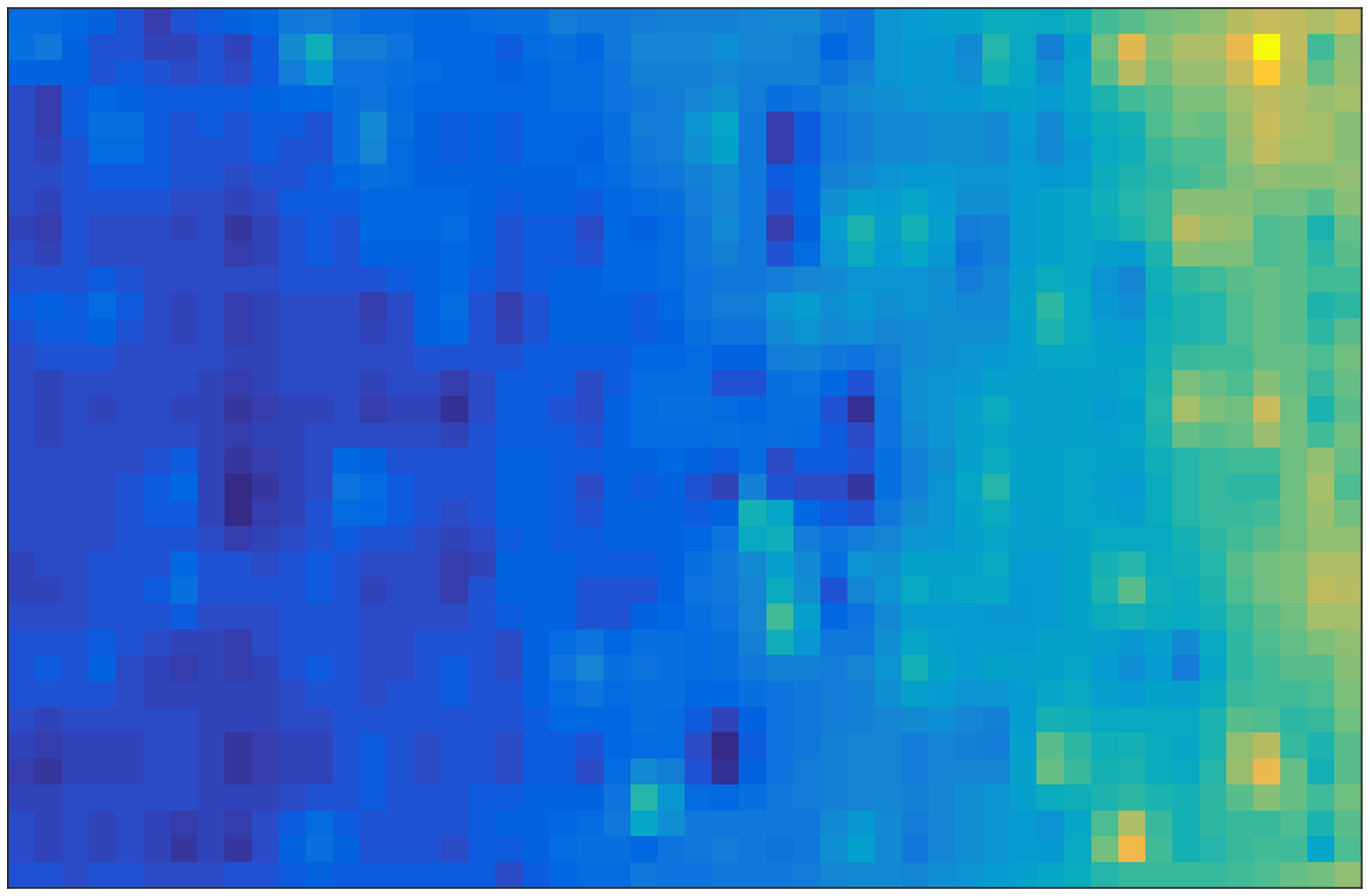}
\end{subfigure}
\begin{subfigure}{0.24\textwidth}
\centering
pH
\includegraphics[width = 1 \textwidth, keepaspectratio]{./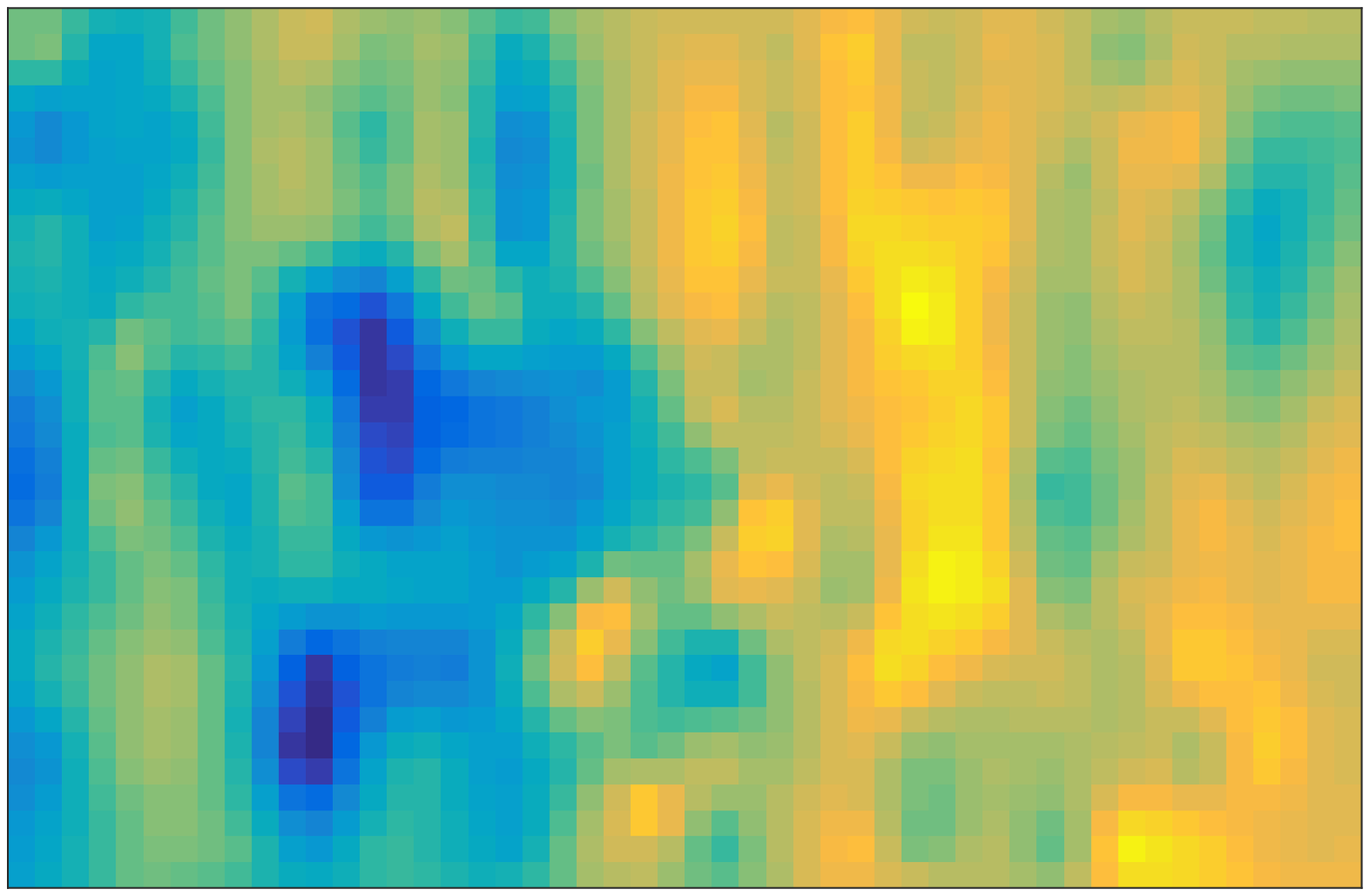}
\end{subfigure} 
\caption{The standardized covariate values on the observational domain.}
\label{fig:allCovs}
\end{figure}

\subsubsection{Models}
As discussed above, it is obvious from Figure \ref{fig:treesIntro} that there is a large area in the middle of the plot where hardly any trees are growing. This indicates that in some parts of the plot, spatial aggregation varies more rapidly than in the other parts. It is likely that some inhibitory factor prevents the trees from growing in that region.  As mentioned earlier, we have 
anecdotal evidence that this area  is covered by a swamp and that the tree species is known to be very unlikely to grow there. 
We test four different models to see how the confounding factor will affect inference. 

The first is a simple Poisson regression model on the covariates, i.e.\ an inhomogeneous Poisson process with linear fixed effects defining the log intensity as $\covars(\psp) \bs{\regCoef}$. We will refer to this model as the Fixed model.  The second model includes a Gaussian field to capture the variability not explained by the covariates. More precisely, we use an LGCP model with log-intensity $\log\lambda(\psp) = \latf(\psp; \covars)$. Here $\latf(\psp; \covars)$ is a Gaussian field with $\expect{\latf(\psp; \covars)} = \covars(\psp) \regCoef$  and a Mat\'ern covariance with standard deviation $\std$ and range $\rangParam$.

Looking at the data, we might  expect the LGCP model to explain the variation in point intensity well, except for the complete lack of observations in the central region coupled with the discontinuity in the observed intensity at the border between the large empty area and the other parts of the plot. If the habitat dependence of the trees is significantly different in these two separated regions, a \ac{lscp} model with a separate class for each of the two regions might provide a better fit. Therefore, the third model is a two-class \ac{lscp} model where the first class is defined as in the LGCP model and the second class has constant intensity. That is, $\log\lambda(\psp) = \mixProb_1(\psp) \latf_1(\psp; \covars) + \mixProb_2(\psp) C_2$. We will refer to this as the \ac{lscp} model. We fixed the parameter of $C_2$ to a small value proportional to the mean intensity among the grid cells with at most $1$ tree.
Finally, we consider a simplified version of this model where $\log\lambda(\psp) = \mixProb_1(\psp) \covars(\psp) \regCoef+ \mixProb_2(\psp) C_2$. This will be referred to as the FixedM model.

The posterior distributions of the parameters and latent fields were estimated using the proposed MCMC method. In order to avoid significant wrap-around effects, the lattice was extended by $350$ m for the level set field and by $220$ m for the latent Gaussian fields of the classes (implicitly assuming correlation ranges smaller than $350$ m for classification and $220$ within classes).The smoothness parameters of the Gaussian fields were fixed at $\smoothParam = 1$ and the following independent prior distributions for the model parameters (when applicable) were used: 
\begin{inparaenum}[\itshape i\upshape)]
\item $\pN(0,10)$ priors for the fixed effects;
\item $\pN(0,4)$-priors for the threshold parameters;
\item an exponential distribution with mean 2, $\text{Exp}(2)$, for the standard deviations of the Gaussian fields  except for $\latf_0$, where $\std_0 = 1$ is fixed;
\item Exp$(200)$ distributions truncated from below at the lattice distance and from above at the lattice extension range for the range parameters $\rangParam_k$;  this ensures that no wrap-around artifacts were introduced and that the correlation range were not smaller than the discretization distance;
\item an Exp$(0.1)$ distribution truncated from above at $1$ for the nugget standard deviation; this yields an expected \textit{a priori} standard deviation of approximately $0.1$ and ensures that the nugget variance  does not dominate the spatial dependency in the level set field. 
\end{inparaenum}

The standard deviations of the Gaussian fields were given exponential priors since this corresponds to the PC prior \citep{lit:sorbye, lit:simpson2} which penalizes deviations from the simpler model without the Gaussian field, where a mean of $2$ penalizes large values. The range parameters were given exponential priors using similar reasoning where no spatial dependency corresponds to the base model. However, ranges below the lattice distance were truncated since no information exist for smaller values due to the spatial discretization. 
The covariates were standardized to mean $0$ and variance $1$. Hence, the fixed effects prior yields a penalisation from the base model of no fixed effects. 
The nugget for the level set field was considered as a deviation from the base model (without a nugget) and hence penalised by an exponential distribution. 


To assess the model fit models we used a common approach for point process models \citep{lit:moller2, illian2008statistical, baddeley2015spatial} that compares summary characteristics estimated from the observed point pattern with envelopes based on the summary characteristic estimated for simulated point patterns, generated from each of the four fitted models. 

As a functional summary characteristic we used the centered and variance stabilized $K$-function, commonly referred to as the centered $L$-function, $\hat{L}(r) = \sqrt{\frac{\hat{K}(r)}{\pi}} - r$.  The $L$-function stabilizes the variance such that $\hat{L}(r)$ will be homoscedastic with respect to $r$. Furthermore, the $-r$ term centers the function in the sense that the resulting function for a homogeneous Poisson process has the value of zero for all distances \citep{illian2008statistical}. 
We used isotropic edge correction and calculated the envelope using the functions \textit{Kest} and \textit{envelope} from the \textit{spatstat} package \citep{lit:spatstat}. The $L$-function for the observed point pattern as well as the pointwise sample mean and $90\%$ envelopes from $5 000$ realizations for each of the four models can be seen in the top row of Figure \ref{fig:lEnvelopes}. 

\begin{figure}[t]
\centering
\begin{subfigure}{0.24\textwidth}
\centering
\ac{lscp} model.
\includegraphics[width = \textwidth, keepaspectratio]{./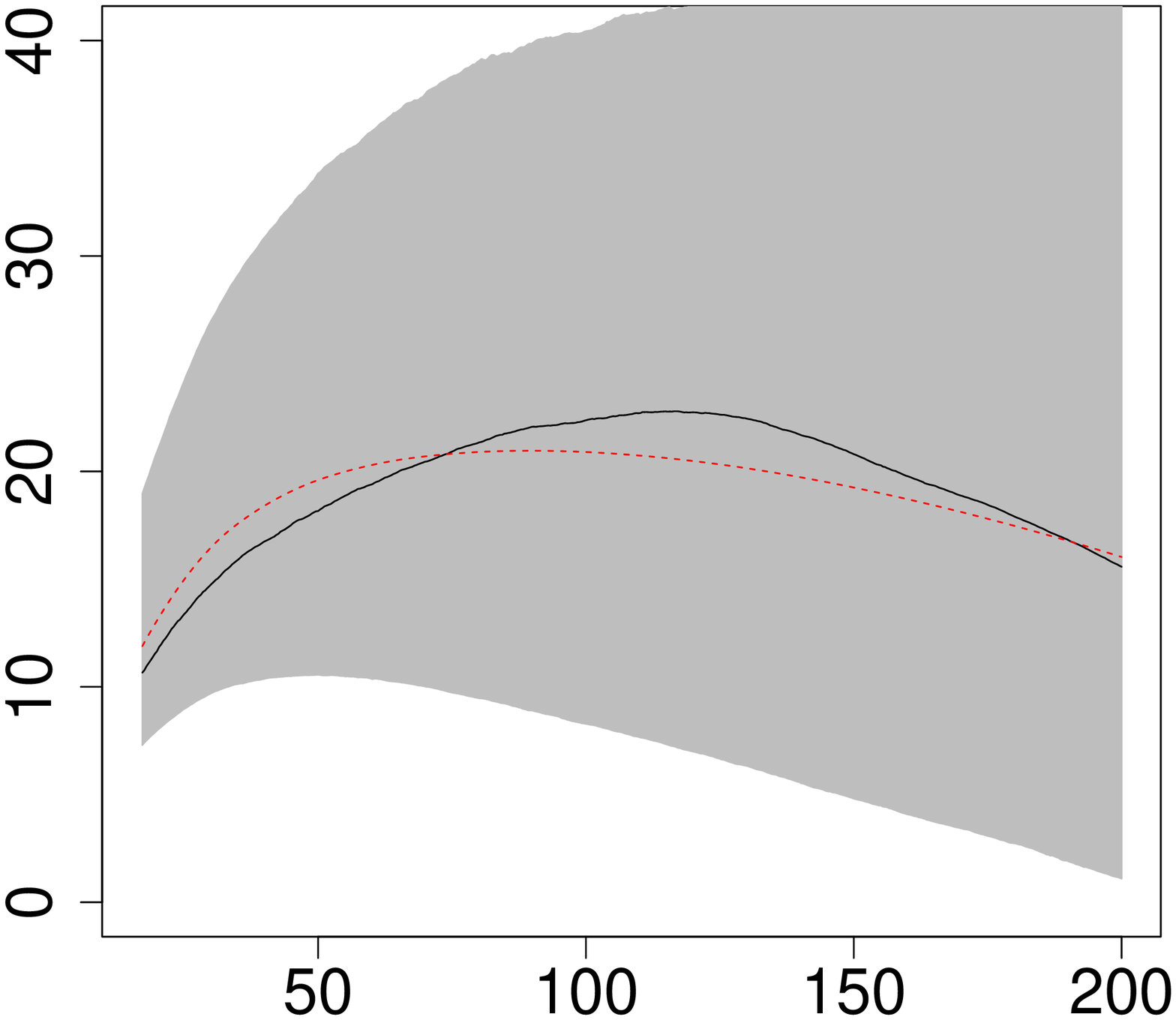}
\end{subfigure}
\begin{subfigure}{0.24\textwidth}
\centering
LGCP model
\includegraphics[width = \textwidth, keepaspectratio]{./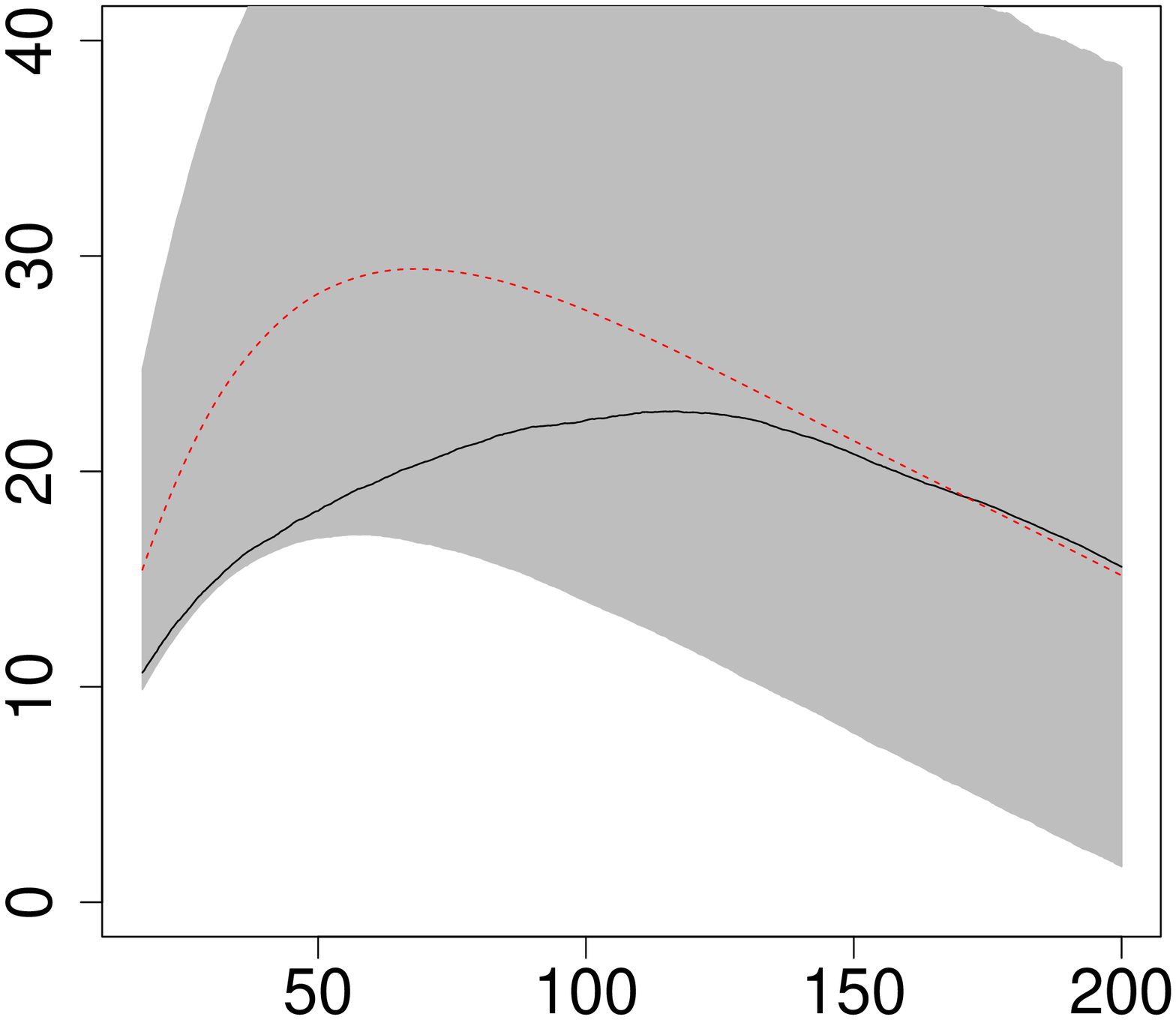}
\end{subfigure} 
\begin{subfigure}{0.24\textwidth}
\centering
FixedM  model.
\includegraphics[width = \textwidth, keepaspectratio]{./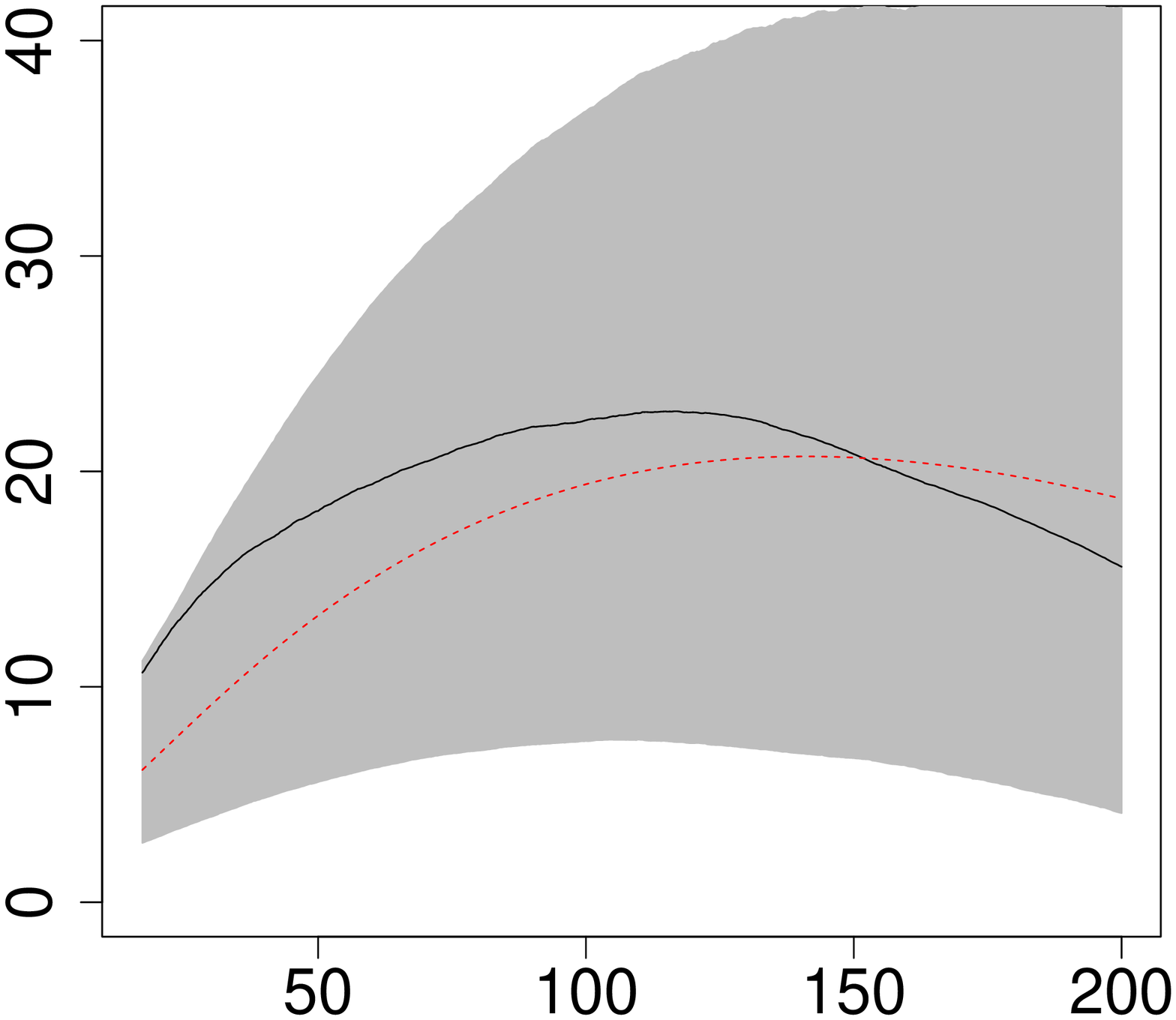}
\end{subfigure}
\begin{subfigure}{0.24\textwidth}
\centering
Fixed model.
\includegraphics[width = \textwidth, keepaspectratio]{./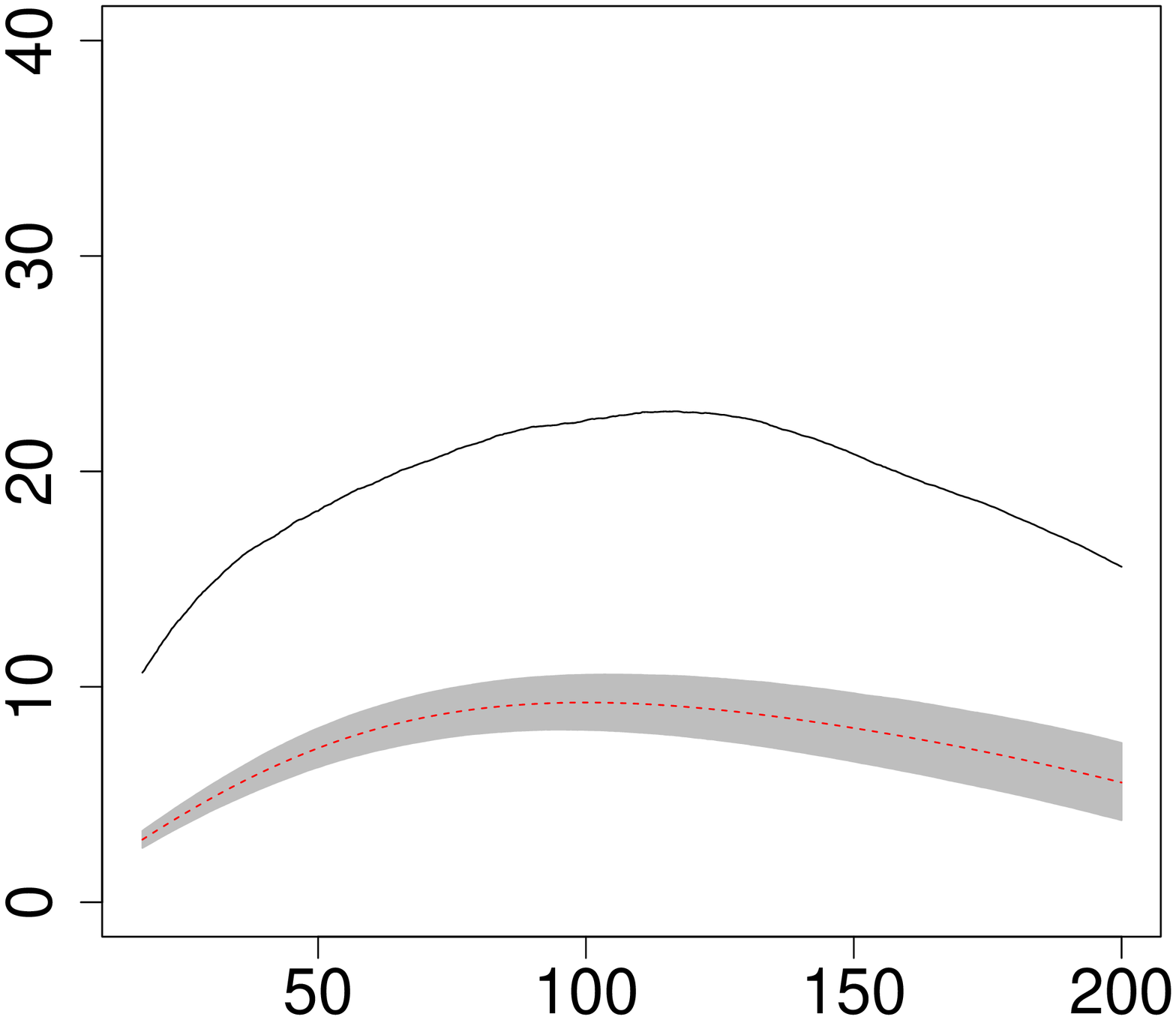}
\end{subfigure}\\
\begin{subfigure}{0.24\textwidth}
\centering
\includegraphics[width = \textwidth, keepaspectratio]{./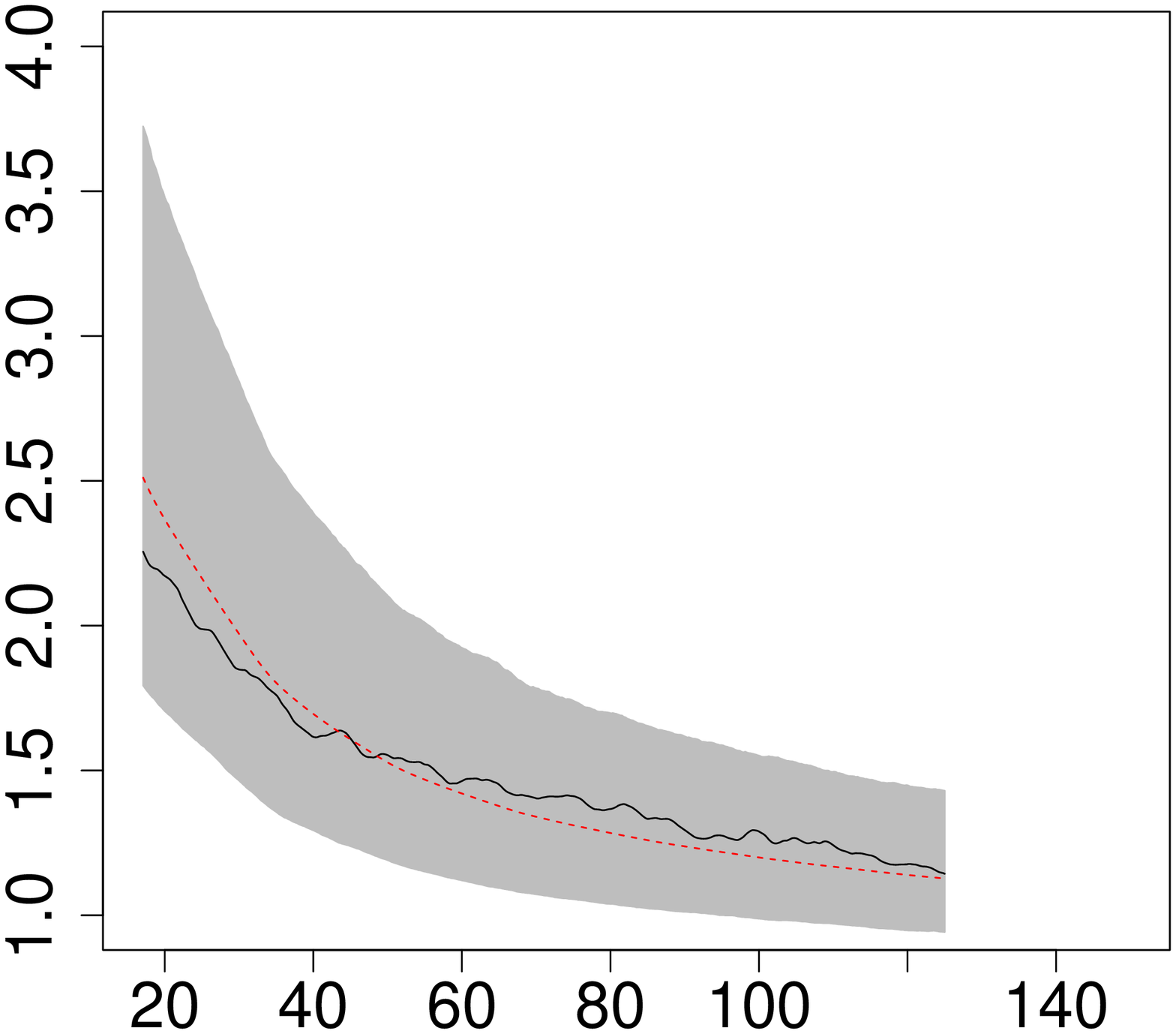}
\end{subfigure}
\begin{subfigure}{0.24\textwidth}
\centering
\includegraphics[width = \textwidth, keepaspectratio]{./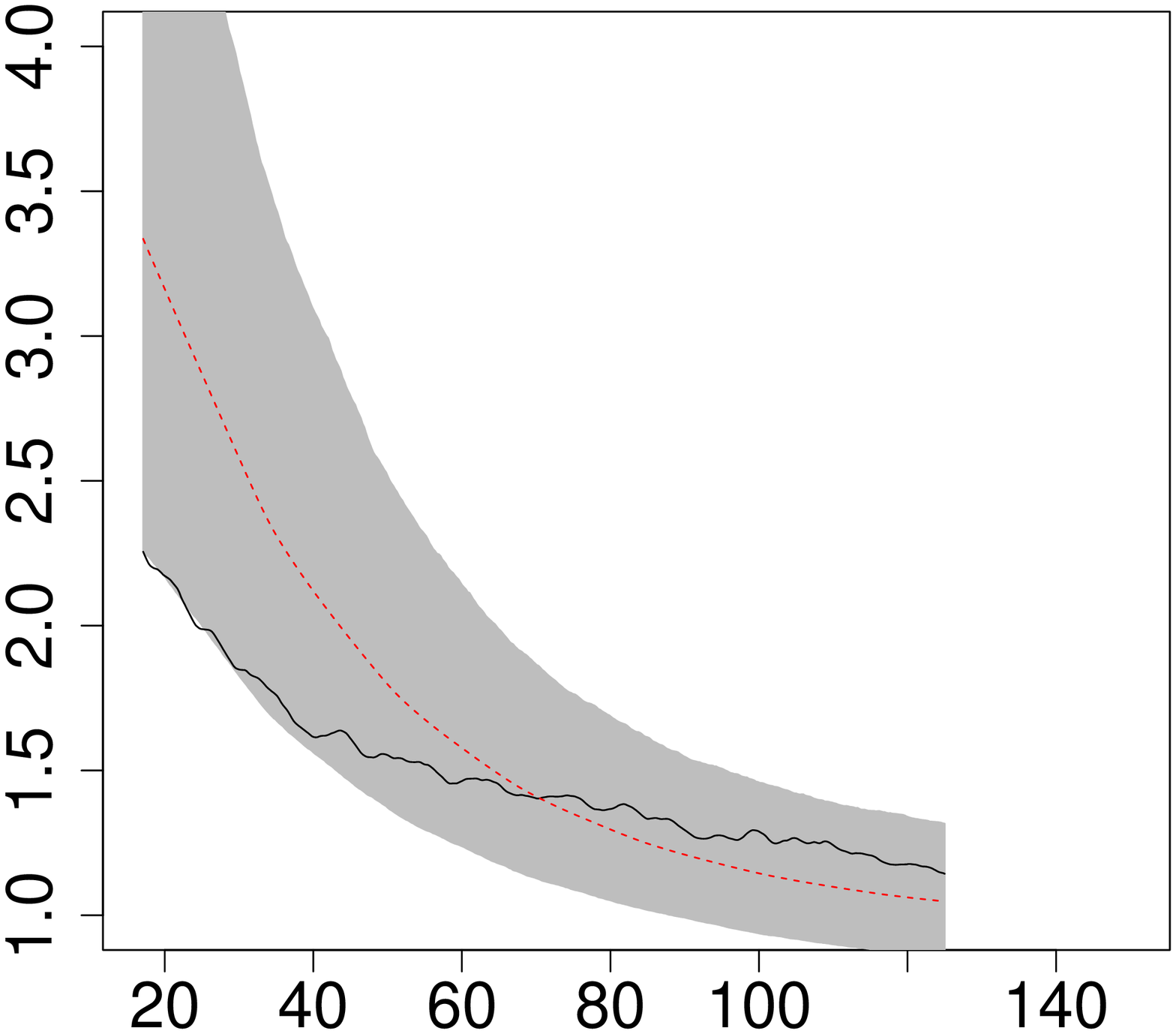}
\end{subfigure} 
\begin{subfigure}{0.24\textwidth}
\centering
\includegraphics[width = \textwidth, keepaspectratio]{./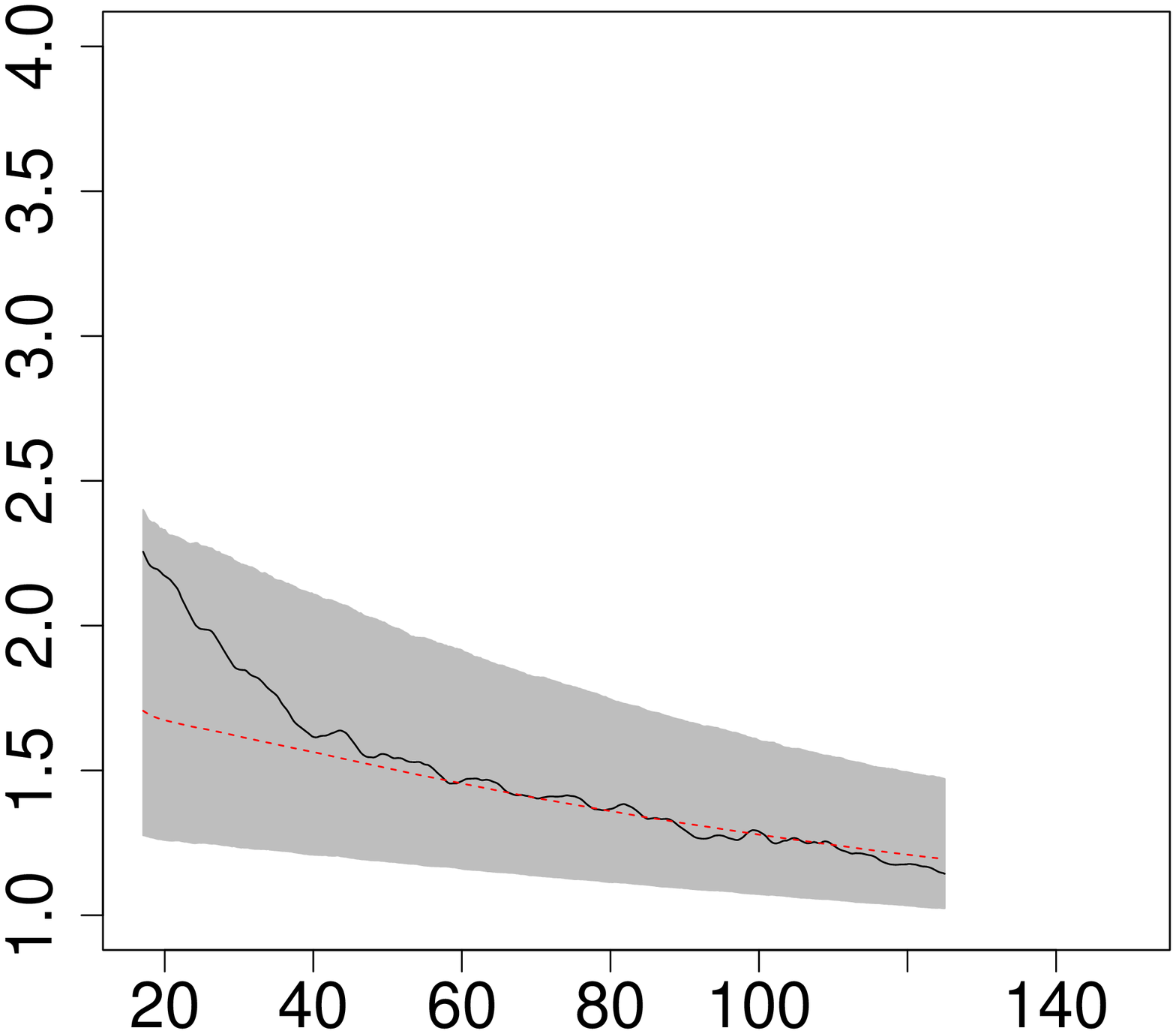}
\end{subfigure}
\begin{subfigure}{0.24\textwidth}
\centering
\includegraphics[width = \textwidth, keepaspectratio]{./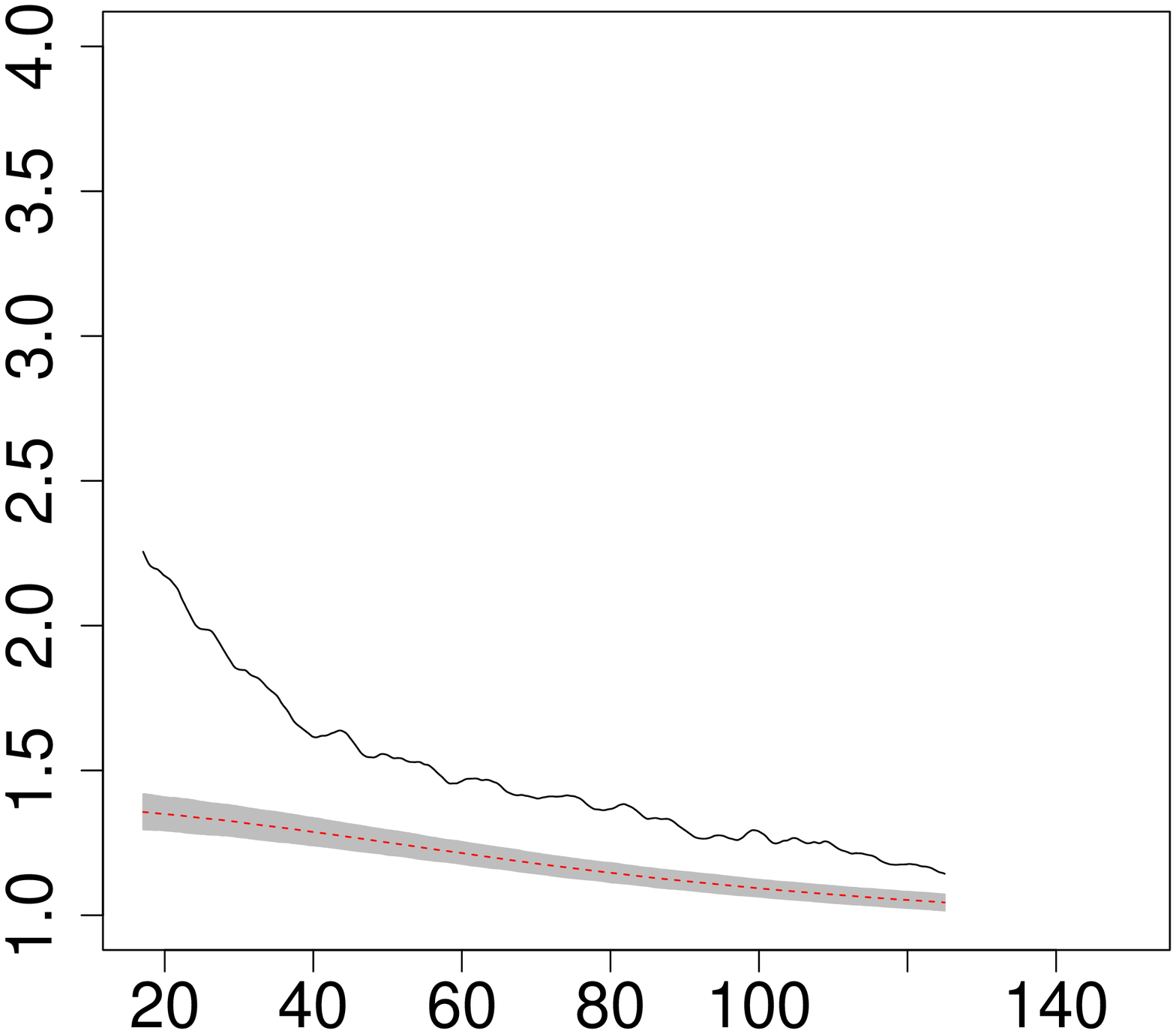}
\end{subfigure}\\
\begin{subfigure}{0.24\textwidth}
\centering
\includegraphics[width = \textwidth, keepaspectratio]{./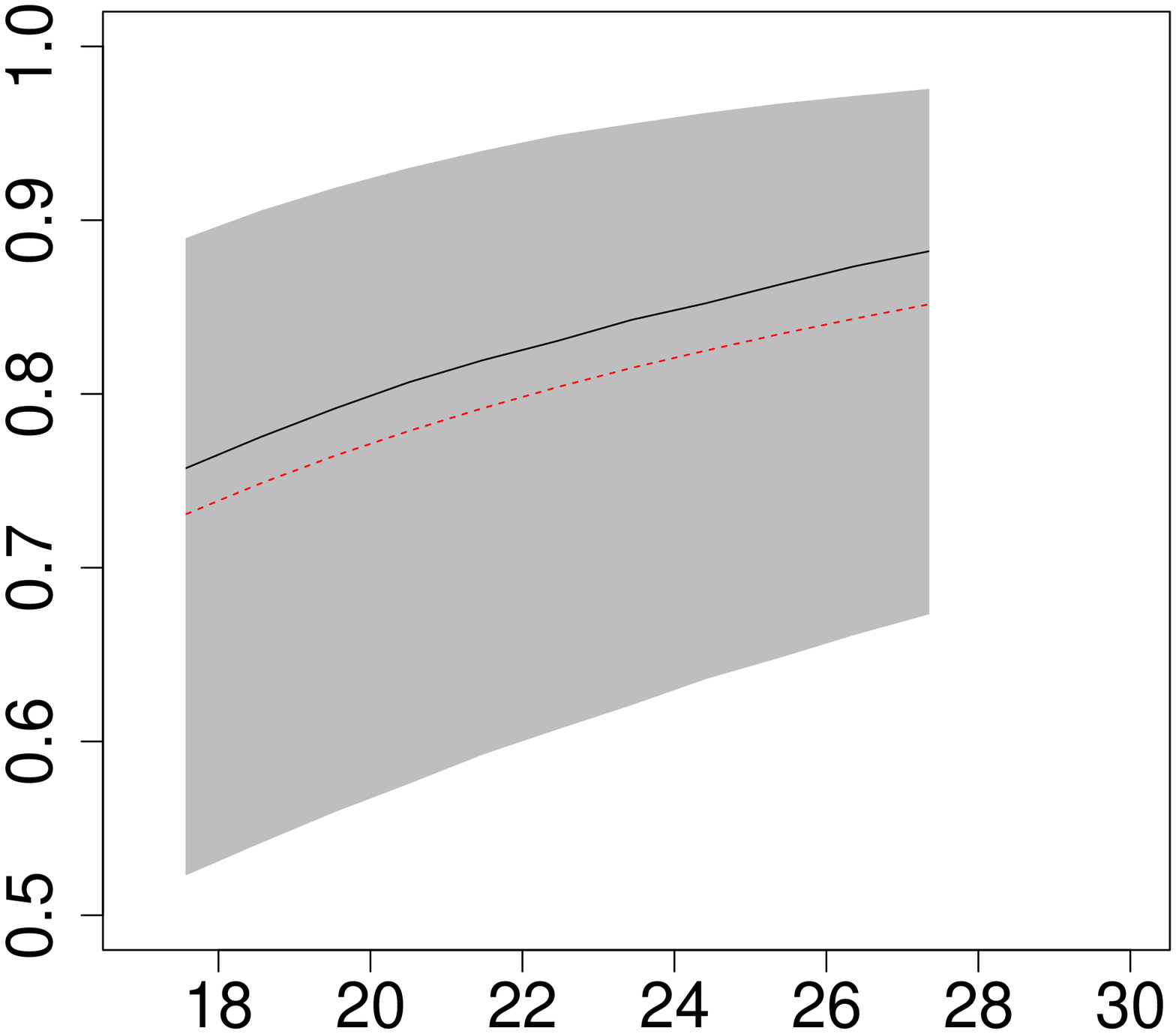}
\end{subfigure}
\begin{subfigure}{0.24\textwidth}
\centering
\includegraphics[width = \textwidth, keepaspectratio]{./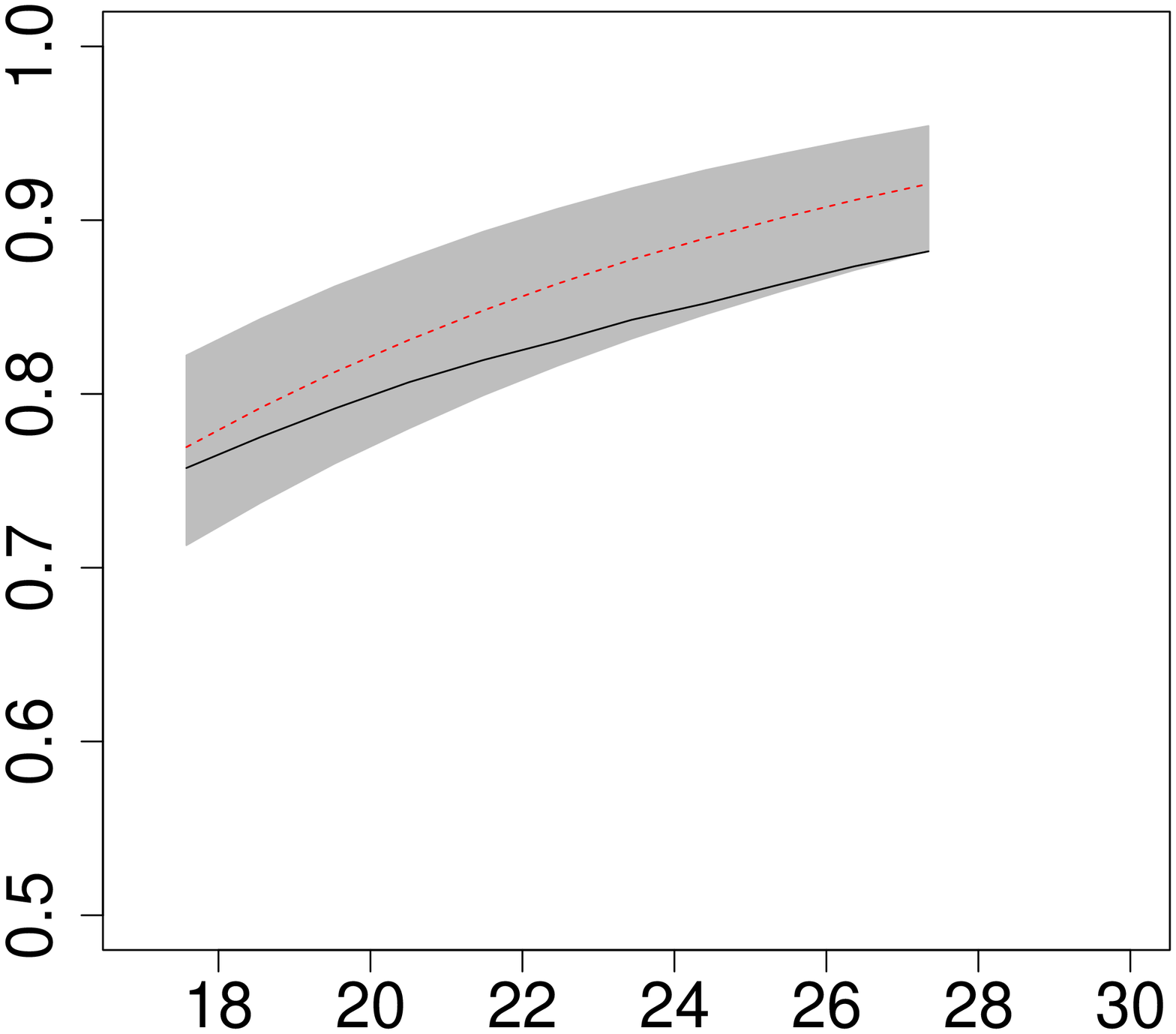}
\end{subfigure} 
\begin{subfigure}{0.24\textwidth}
\centering
\includegraphics[width = \textwidth, keepaspectratio]{./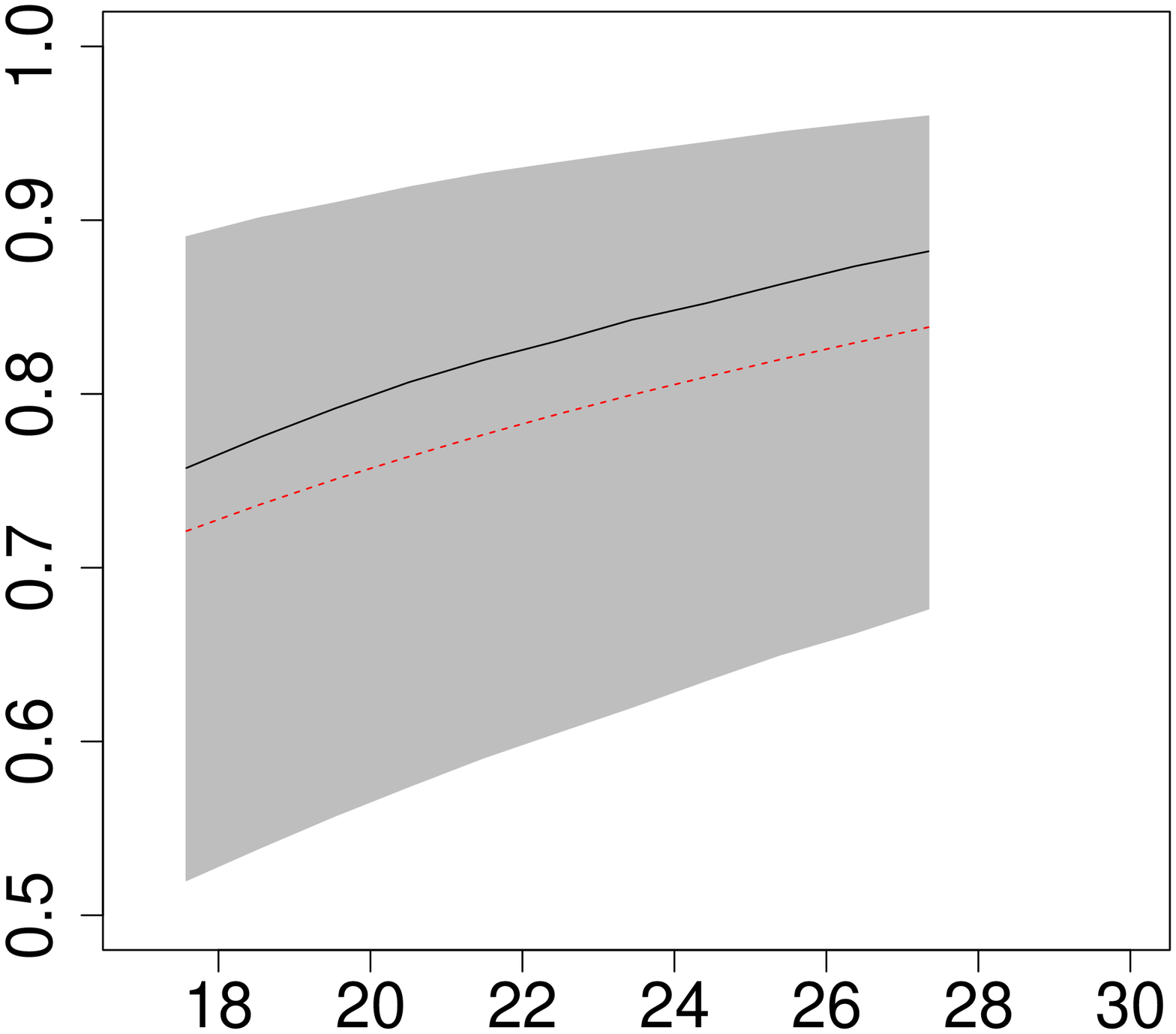}
\end{subfigure}
\begin{subfigure}{0.24\textwidth}
\centering
\includegraphics[width = \textwidth, keepaspectratio]{./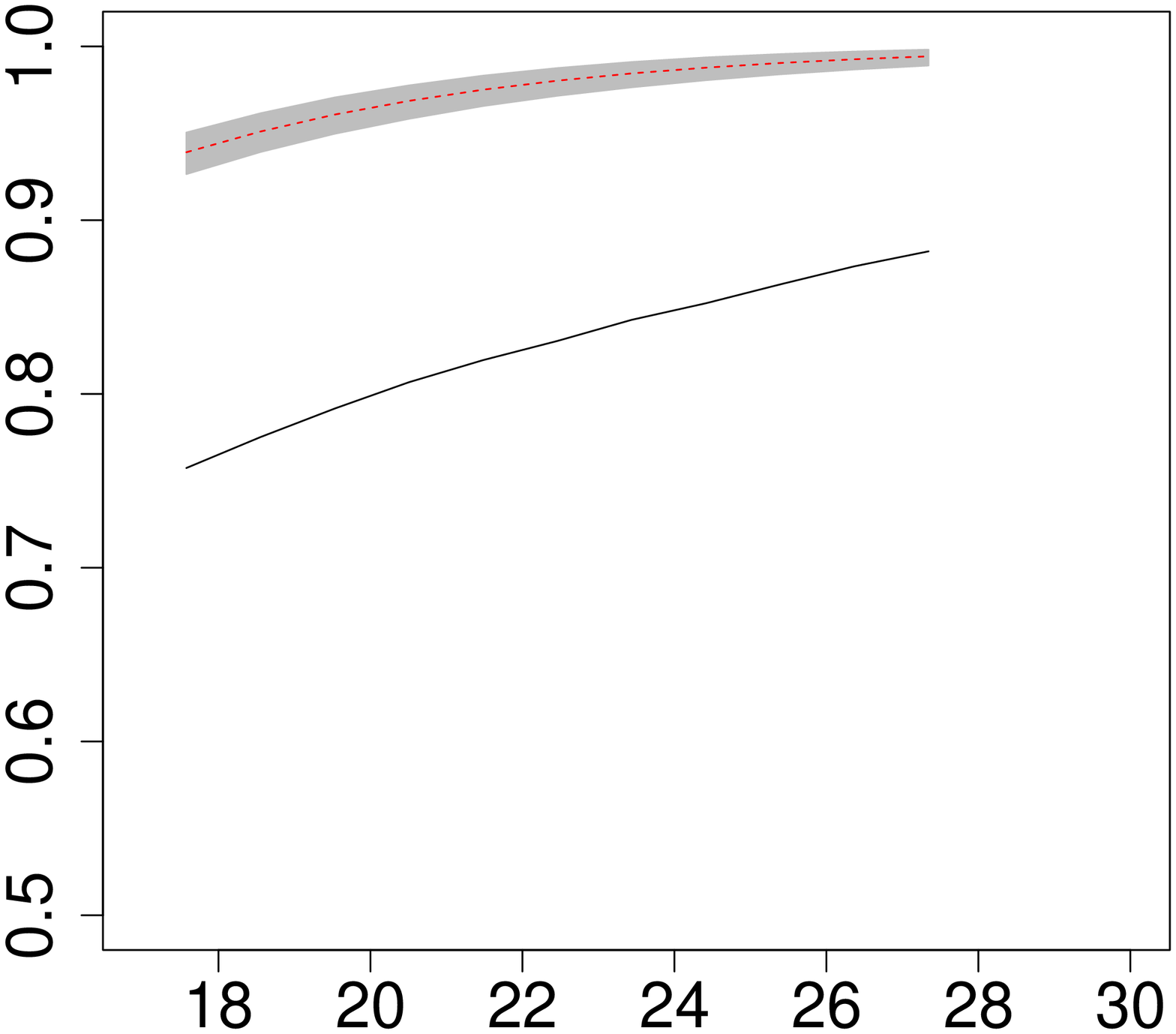}
\end{subfigure}
\caption{Plots of the L-function (top row), pair correlation function (mid row), and empty space function (bottom row). In each case, the black line corresponding to the value estimated from the real point pattern, the thin line corresponding to the model mean and the envelopes being the pointwise $90\%$ envelopes of $5 000$ simulated point patterns for the corresponding model.}
\label{fig:lEnvelopes}
\end{figure}

Not surprisingly, the Fixed model seems clearly inappropriate, as the functional summary characteristics for the observed point pattern is far outside the envelopes for all distances. 
For all other models the estimated function for the observed point pattern remains inside the envelopes. 
The estimated function for the observed pattern and the expected values of the simulated patterns are most similar for the LSCP model. 
For the standard LGCP model the empirical pattern appears to show less clustering than the one expected from the model. This can be seen by low values of the black line relative to the red line for a large range of $r$, and the function only just within the envelope at smaller distances. 
Due to the cumulative nature of the $L$-function it is hard to discern at what inter-point distances this discrepancy occurs in the patterns. 
Interestingly, the differences between the two lines are less drastic for the FixedM model and reversed. The simulated patterns seem to be less clustered.

Secondly, we compute the pair correlation function for the different models as well as $90\%$ pointwise envelopes and expected values in a similar fashion as for the $L$-function. The result is shown in the second row of Figure  \ref{fig:lEnvelopes}, with the function for the empirical pattern again deviating drastically from those for the simulated patterns, for the Fixed model.
For the LGCP model the function for the empirical pattern is only just outside the envelopes and again below the mean function, indicating less clustering. This discrepancy vanished at a a distance $r$ of about $70$ meters. For the FixedM model the deviations are mainly at short distances where the empirical pattern show greater amount of clustering than what would be expected by the model.

Finally, we compute the empty space function and corresponding $90\%$ pointwise envelopes and expected values. These are shown in the bottom row of Figure  \ref{fig:lEnvelopes}, and one can see that the Fixed model once again deviates drastically from the empirical pattern. 
Here the LSCP model show lower values than the empirical pattern while the FixedM model show larger ones. The LGCP model show an increasing over estimation of clustering and the lower border of the envelope touches the empirical value at far right of the range.
Based on the comparison between the pointwise expected values and envelopes with the empirical values, the LSCP model seem to explain the observed point pattern better than the other models.

\subsubsection{Analysis of covariates and spatial structure}

\begin{figure}[t]
\centering
\begin{subfigure}{0.24\textwidth}
\centering
$\fuzzy$
\includegraphics[keepaspectratio, width = 0.95\textwidth]{./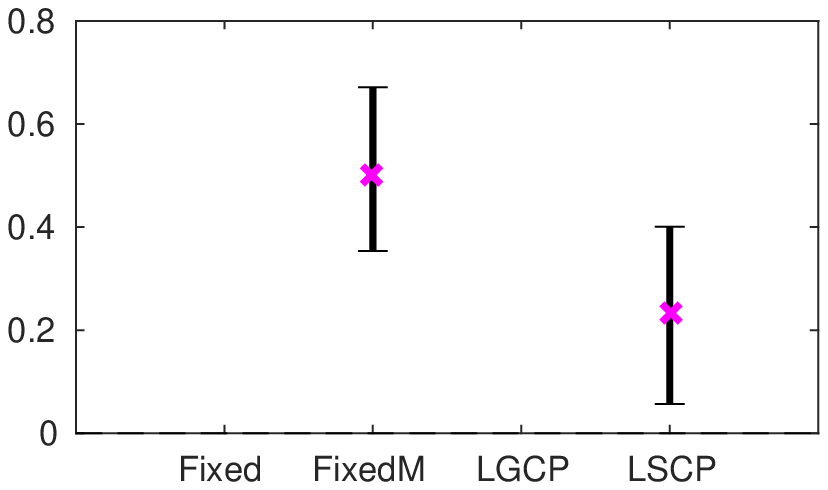}
\end{subfigure}
\begin{subfigure}{0.24\textwidth}
\centering
$\rangParam_0$
\includegraphics[keepaspectratio, width = 0.95\textwidth]{./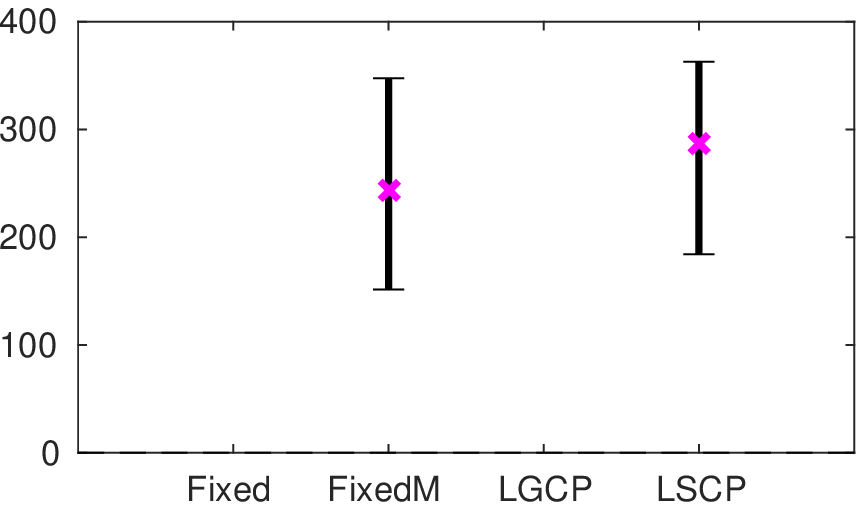}
\end{subfigure}
\begin{subfigure}{0.24\textwidth}
\centering
$\threshParam_1$
\includegraphics[keepaspectratio, width = 0.95\textwidth]{./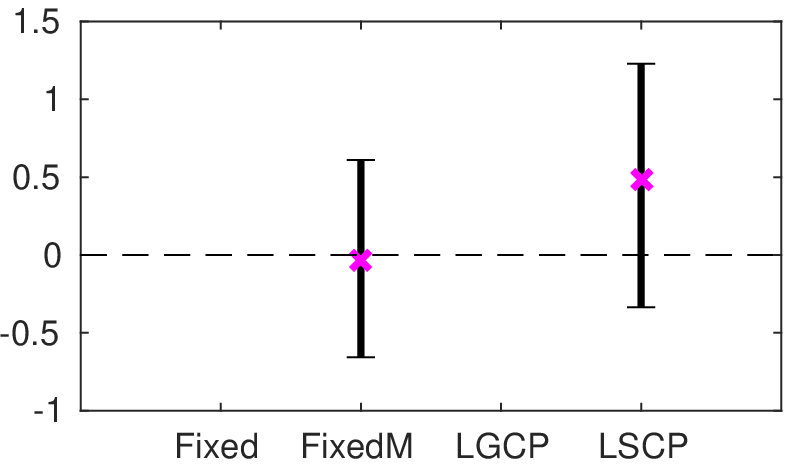}
\end{subfigure}
 \\

\begin{subfigure}{0.24\textwidth}
\centering
$\std_1$
\includegraphics[keepaspectratio, width = 0.95\textwidth]{./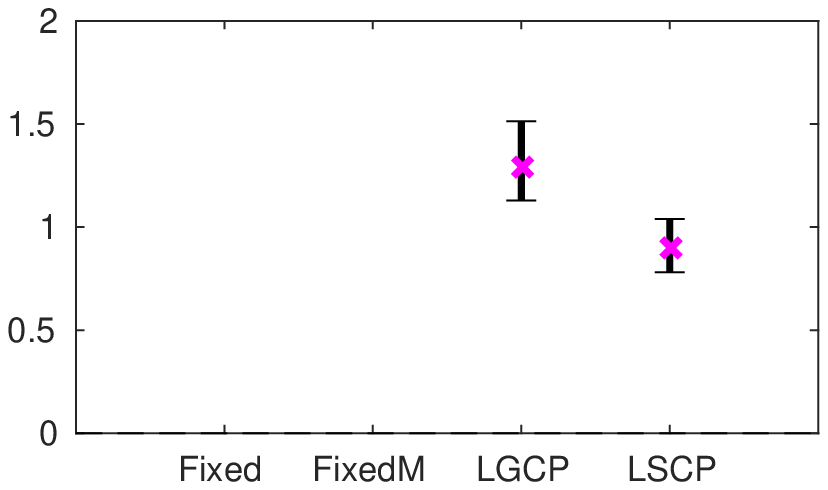}
\end{subfigure}
\begin{subfigure}{0.24\textwidth}
\centering
$\rangParam_1$
\includegraphics[keepaspectratio, width = 0.95\textwidth]{./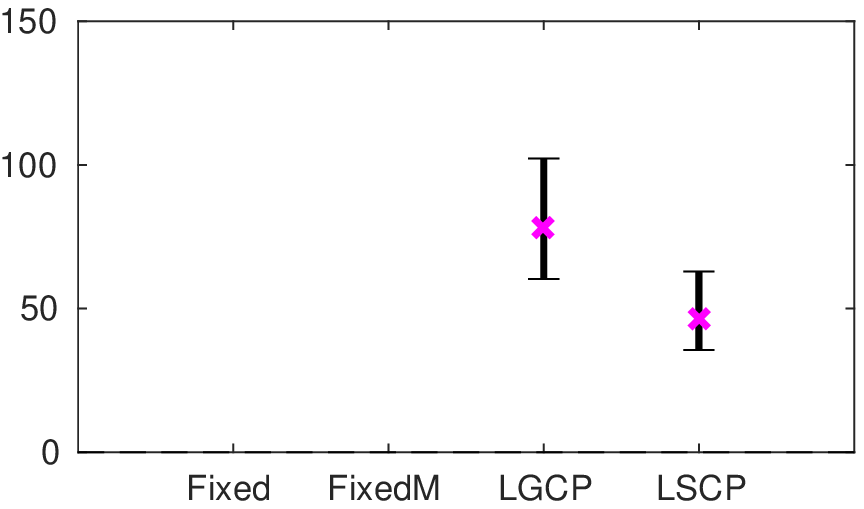}
\end{subfigure}
 \\

\caption{ The mean (cross) and 95\% credibility intervals (lines) for the field parameters.}
\label{fig:fieldParamcredintervals} 
\end{figure}

The models discussed here, relating a spatial pattern to the spatially continuous covariates may be of interest for a number of reasons. Commonly, one seeks to understand habitat preferences of a particular species as reflected in the relationship between the point pattern and the covariates. In addition, it might be of interest to understand the nature of the spatial structure that remains unexplained by the covariates. This might be gleaned from the parameters of the covariance function of the Gaussian random field(s). 

To investigate the spatial structure, we first look at the mean value and $95\%$ credibility intervals for the random field parameters of the models. These are presented in Figure \ref{fig:fieldParamcredintervals}. Observing the difference between $\rangParam_1$, and $\std_1$ values of the LGCP and \ac{lscp} models show how the empty region will affect the estimation of the spatial dependency structure. Here, the \ac{lscp} model shows a significantly lower variance and clearly lower correlation range. This is natural since the Gaussian field for the \ac{lscp} model does not need to explain both the effect of natural spatial dependency between growth of trees as well as the unknown inhibitory effect that causes trees to not grow at all in certain regions of the forest. And finally we note that $\sigma_\epsilon$ has a large effect (signal to noise ratio equals $\frac{1}{\sigma_\epsilon}$) indicating that the Mat\'ern field for $X_0$ cannot explain the classification on its own. This is clearer for the FixedM model, where classification jumps more sporadically between adjacent grid cells due to the over-simplified structure of the classes.

In Figure \ref{fig:postInts} the mean posterior log intensities, $\{\log \ints_i \}_{i=1}^N$ are presented as kriging predictions for each of the four models. 
The figure also shows the posterior probabilities $\condprob{\latf_0(\psp) > \threshParam_1}{\obsf}$, giving an indication of the region with very few trees. 
The posterior log intensity surface of the LSCP shows sharp boundaries contrary to the smoothly varying in the LGCP. The classification in the FixedM model is more noisy than that of the LSCP model and a larger proportion of the observation window is classified as being the \textit{empty region}. Once again, this is connected to the larger value of $\fuzzy$ and caused by the FixedM having to explain the intensity with a much simpler model.

\begin{figure}[t]
\centering
	\begin{subfigure}{0.59\textwidth}
	\centering
		\begin{subfigure}{0.49\textwidth}
		\centering
		\includegraphics[width = 1 \textwidth, keepaspectratio]{./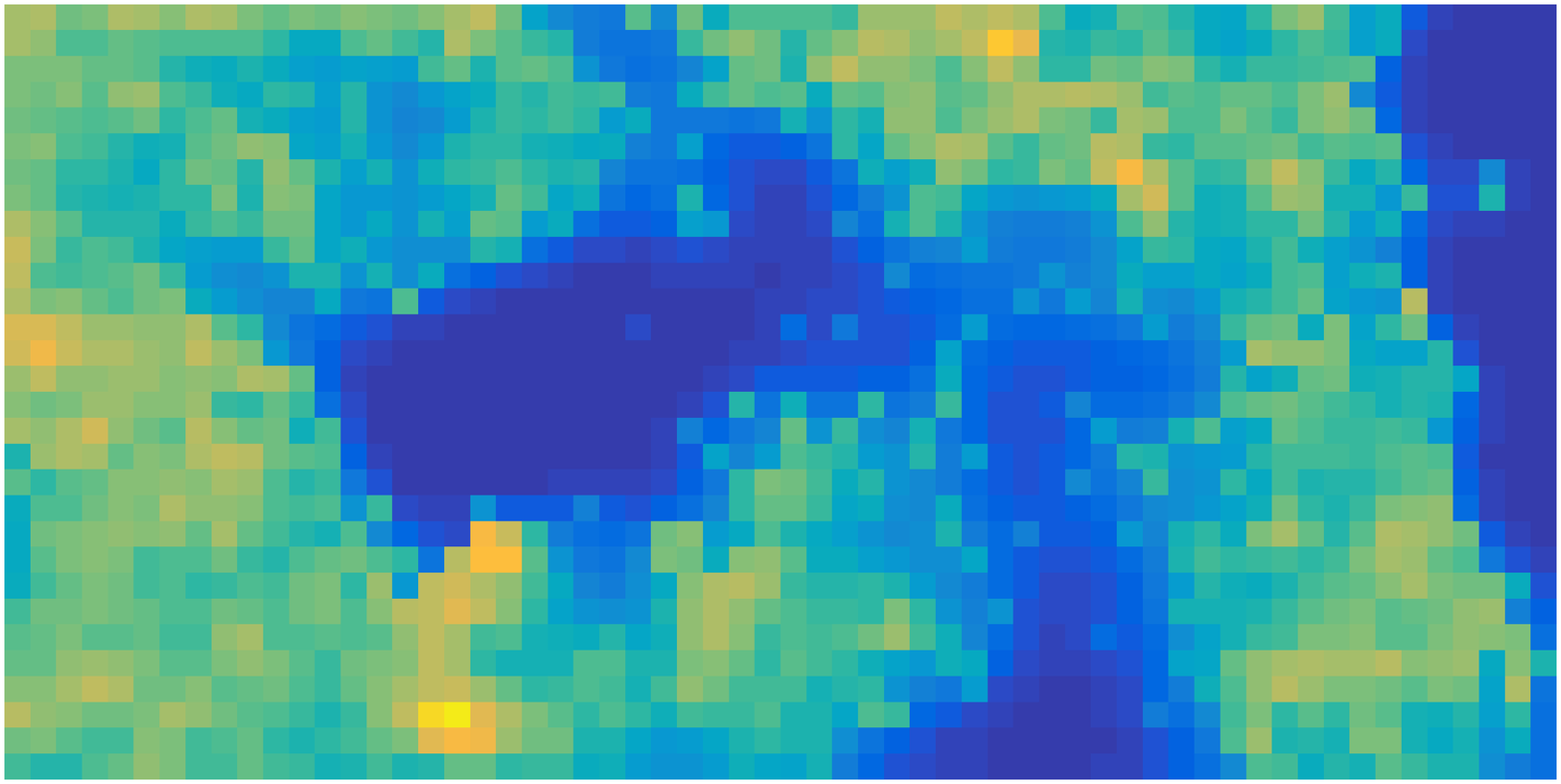}
		\caption{\ac{lscp}}
		\label{fig:lgcpmPostInts}
		\end{subfigure}		
		\begin{subfigure}{0.49\textwidth}
		\centering
		\includegraphics[width = 1 \textwidth, keepaspectratio]{./figs/lgcp_allCovsintsField.eps}
		\caption{LGCP}
		\label{fig:lgcpPostInts}		
		\end{subfigure}\\		
		\begin{subfigure}{0.49\textwidth}
		\centering
		\includegraphics[width = 1 \textwidth, keepaspectratio]{./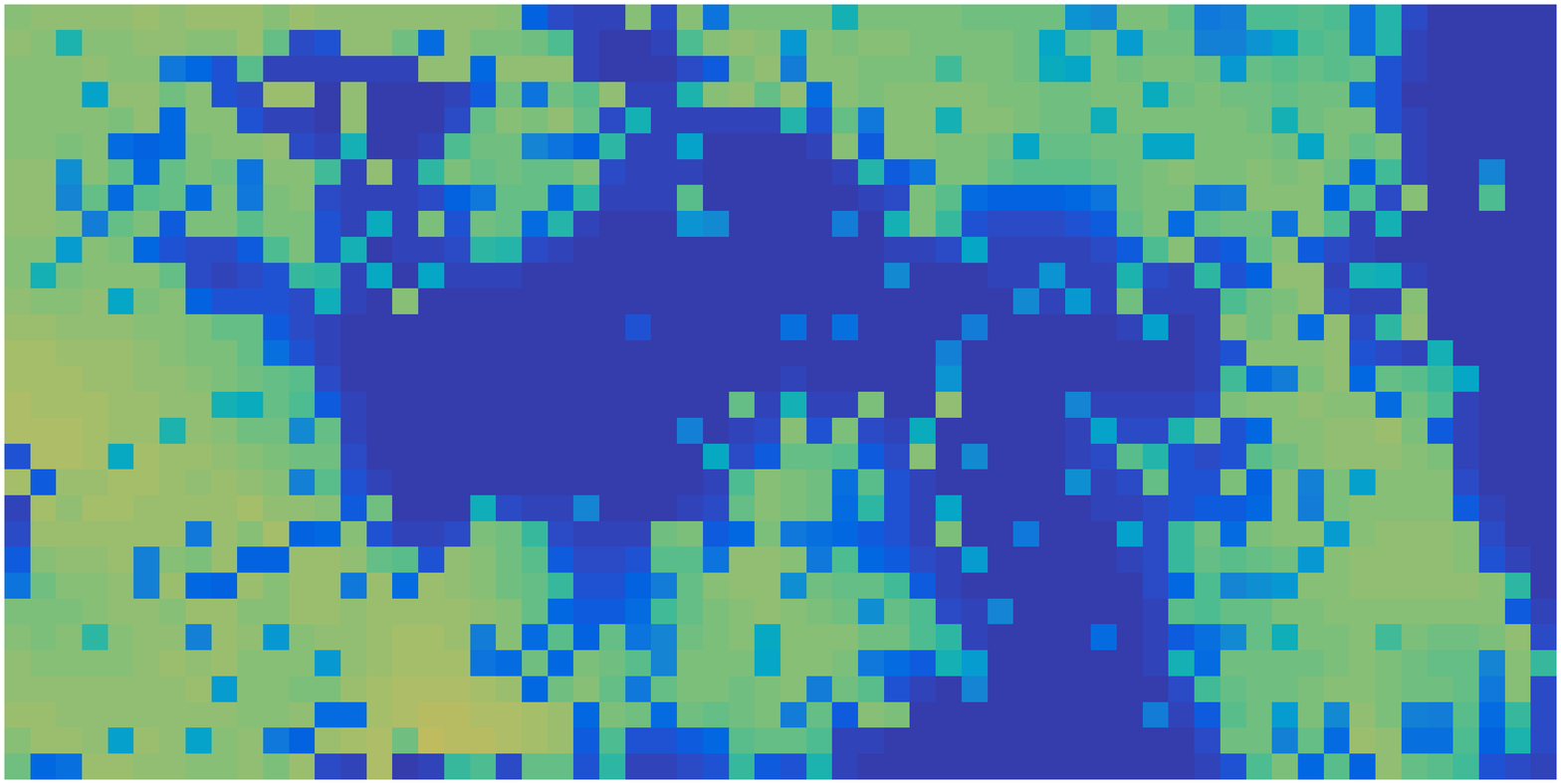}
		\caption{FixedM}
		\label{fig:fixedmPostInts}
		\end{subfigure}		
		\begin{subfigure}{0.49\textwidth}
		\centering
		\includegraphics[width = 1 \textwidth, keepaspectratio]{./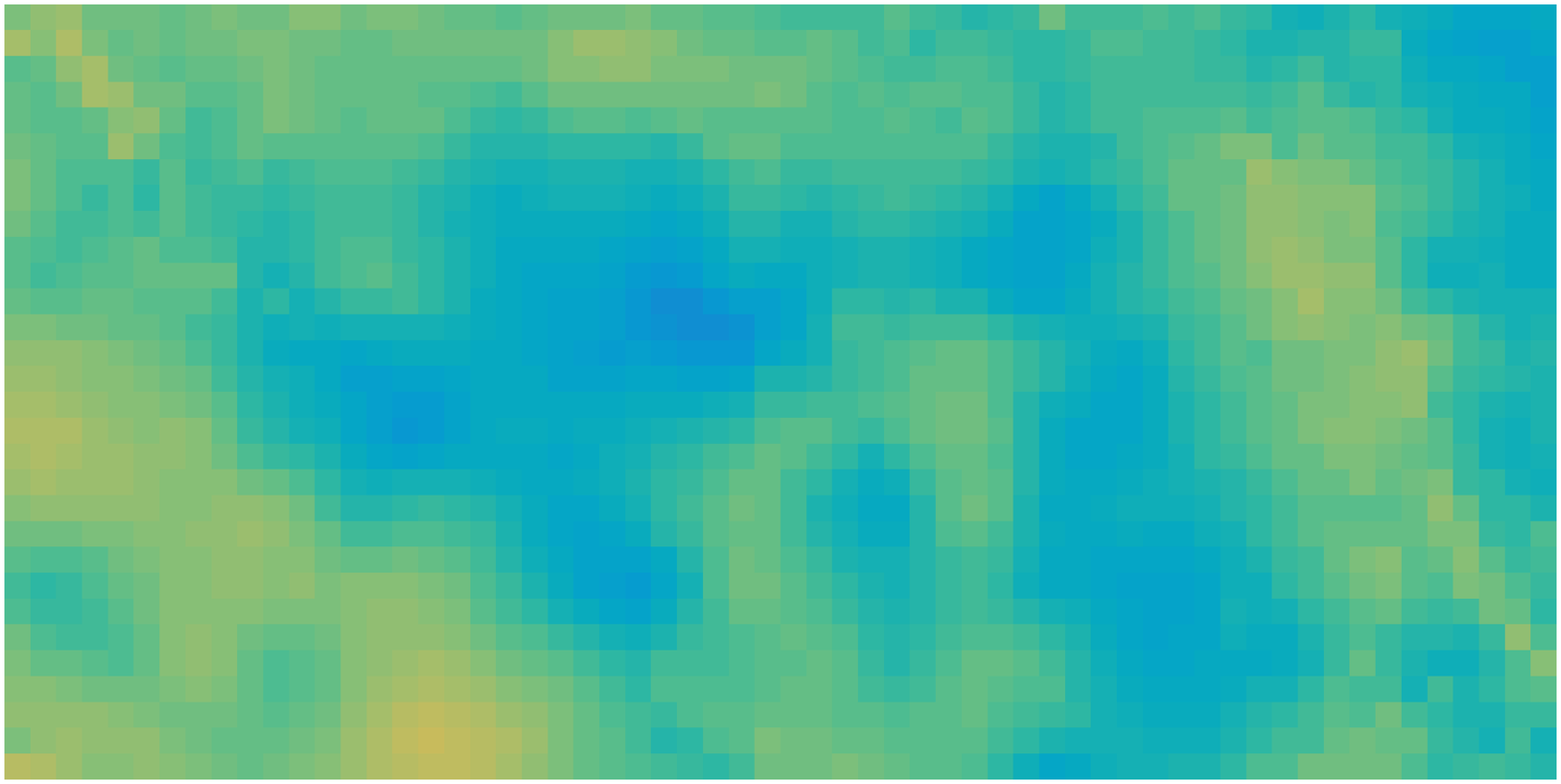}
		\caption{Fixed}
		\label{fig:fixedPostInts}
		\end{subfigure}		 
	\end{subfigure} 
	\begin{subfigure}{0.04\textwidth}
	\centering
	\includegraphics[width = 0.8 \textwidth, height = 9\textwidth]{./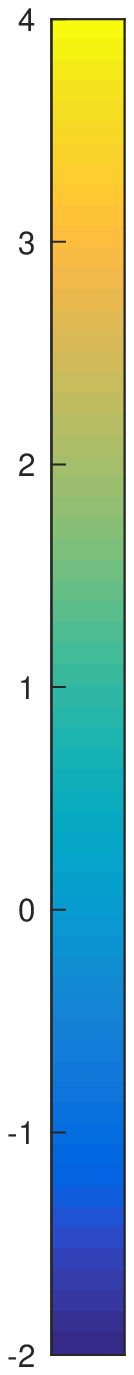}
	\vspace{0.5\textwidth}
	\end{subfigure} 
	\begin{subfigure}{0.3\textwidth}
	\centering
		\begin{subfigure}{0.95\textwidth}
		\centering
		\includegraphics[width = 1 \textwidth, keepaspectratio]{./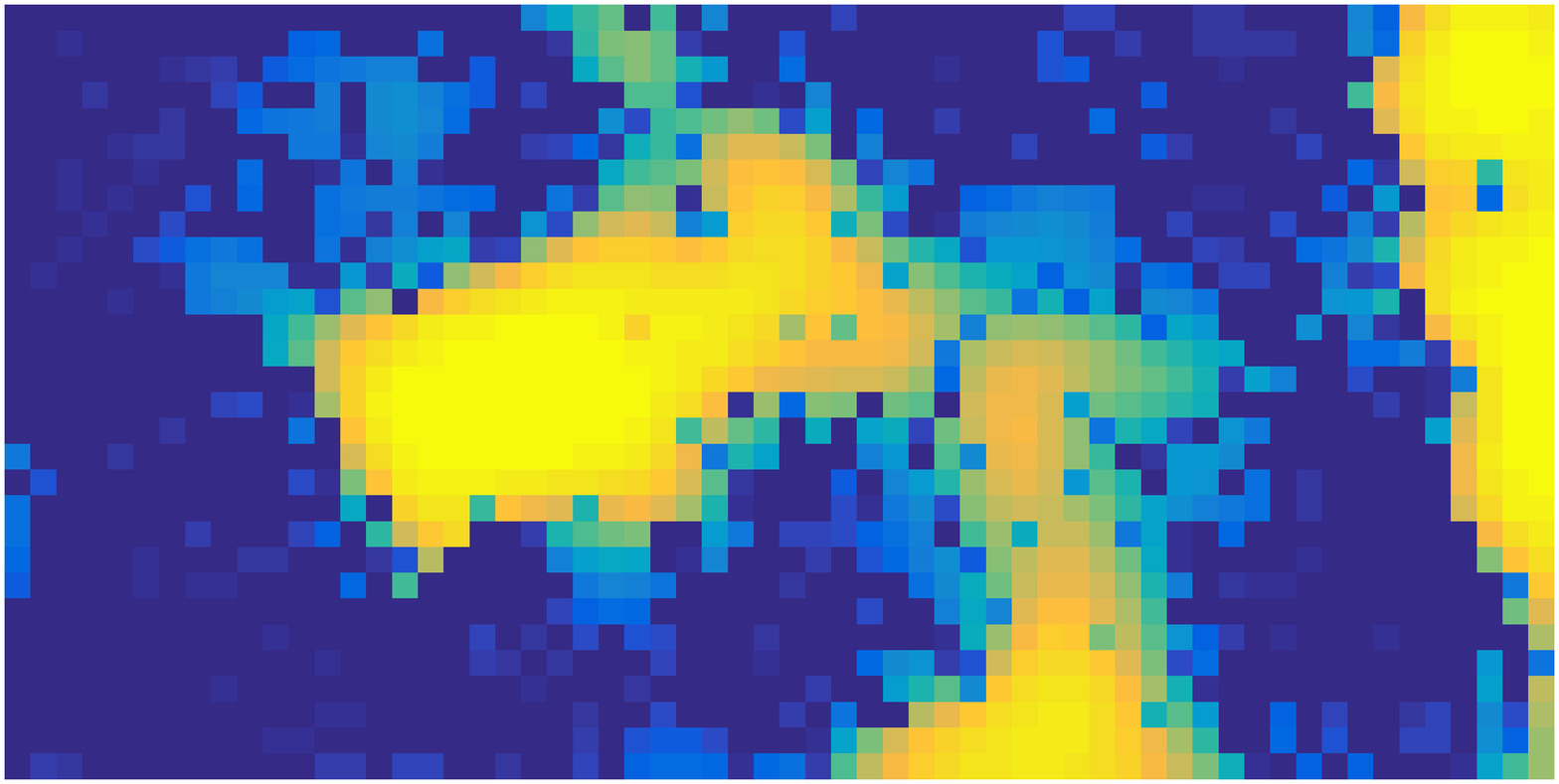}
		\caption{\ac{lscp}}
		\label{fig:lgcpmPostClass}
		\end{subfigure} \\
		\begin{subfigure}{0.95\textwidth}
		\centering
		\includegraphics[width = 1 \textwidth, keepaspectratio]{./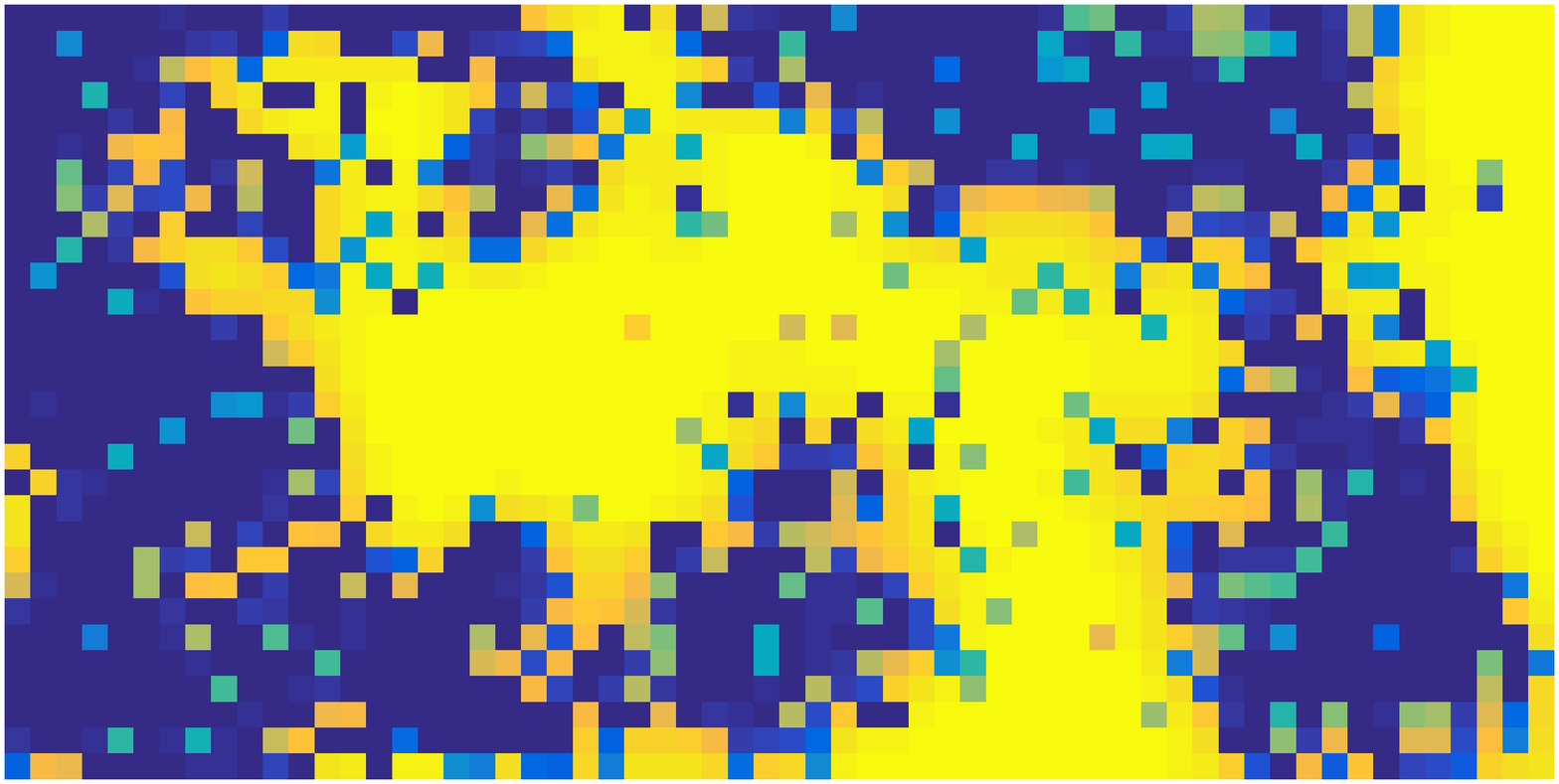}
		\caption{FixedM}
		\label{fig:fixedmPostClass}
		\end{subfigure} 
	\end{subfigure}
	\begin{subfigure}{0.04\textwidth}
	\centering
	\includegraphics[width = 1 \textwidth, height = 9\textwidth]{./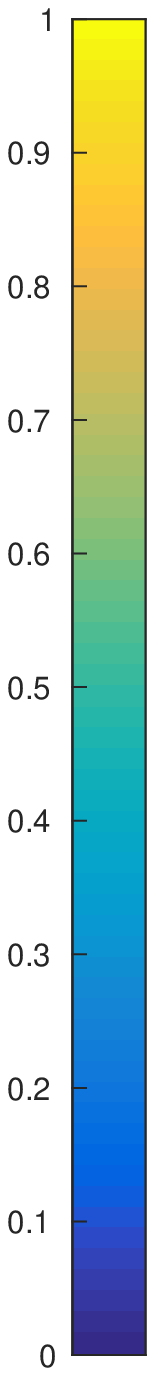}
	\vspace{0.0\textwidth}
	\end{subfigure} 	
	
	\caption{Mean posterior log intensity surface, $\ints$ (left) and mean classification for the models where applicable (right)}
	\label{fig:postInts}
\end{figure}

The relationship between the tree intensity and the covariates is also of interest, and in practice is often the focus of a study and hence the most relevant inference. Recall that 12 covariates of the original 16 covariates are considered her;  11 covariates and one intercept term. Figure \ref{fig:covcredintervals} shows the mean and $95\%$ credibility intervals for each of these covariates for all models.
The first question is which of the covariates have a significant impact on the spatial distribution of the trees and hence reflect a habitat preference of the species. 
To answer this we asses which of the regression coefficients $\beta$ are significantly different from zero. Empirical p-values are computed from the sampled posterior distributions and adjusted for the multiple testing scenario, Holm-Bonferroni correction \citep{lit:holm} is used to acquire rejection regions for each covariate. 
Table \ref{table:covariates} shows the covariates that were considered significant, at a significance level of $5\%$, for each of the four models.

The FixedM model identifies a smaller number of significant covariates than the Fixed model. This is not suprising since the covariates do not need to explain the lack of trees in the empty domain anymore.
For the LGCP and LSCP model, only the intercept is significant. This is probably an effect of the smaller number of degrees of freedom due to the increased number of parameters to estimate, i.e.\ the Gaussian fields.

\begin{figure}[t]
\centering

\begin{subfigure}{0.24\textwidth}
\centering
Intercept
\includegraphics[keepaspectratio, width = 0.95\textwidth]{./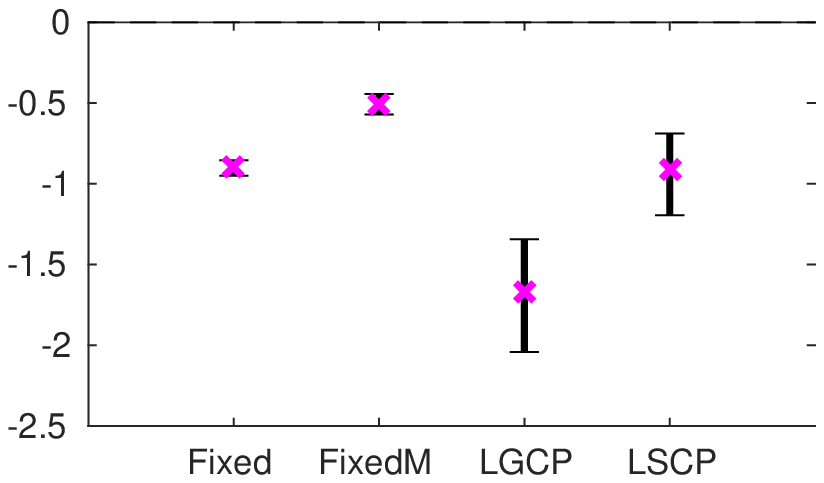}
\end{subfigure}
\begin{subfigure}{0.24\textwidth}
\centering
Elevation
\includegraphics[keepaspectratio, width = 0.95\textwidth]{./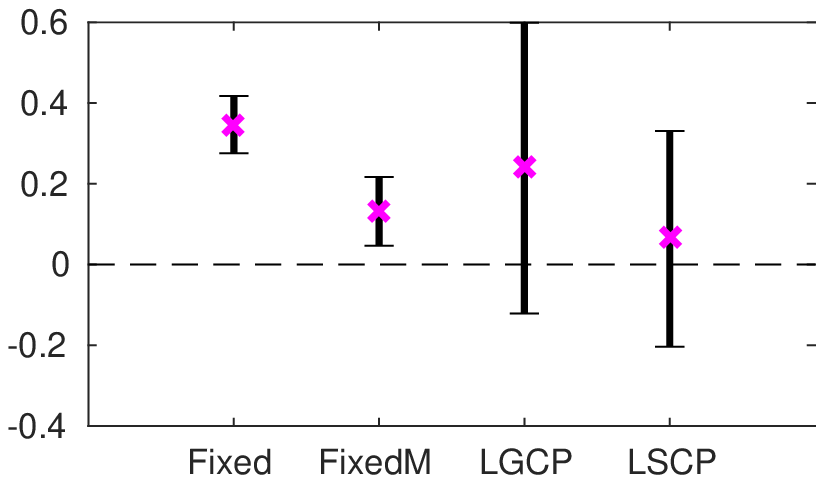}
\end{subfigure}
\begin{subfigure}{0.24\textwidth}
\centering
Slope
\includegraphics[keepaspectratio, width = 0.95\textwidth]{./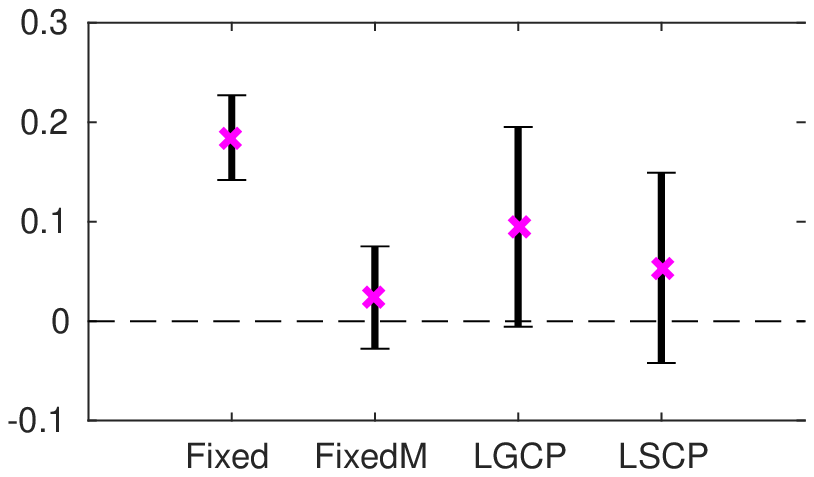}
\end{subfigure}
\begin{subfigure}{0.24\textwidth}
\centering
Al
\includegraphics[keepaspectratio, width = 0.95\textwidth]{./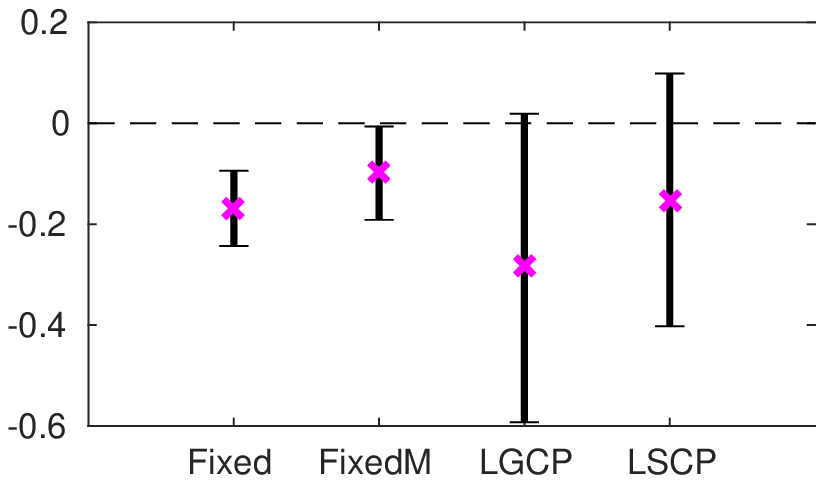}
\end{subfigure} \\
\begin{subfigure}{0.24\textwidth}
\centering
Cu
\includegraphics[keepaspectratio, width = 0.95\textwidth]{./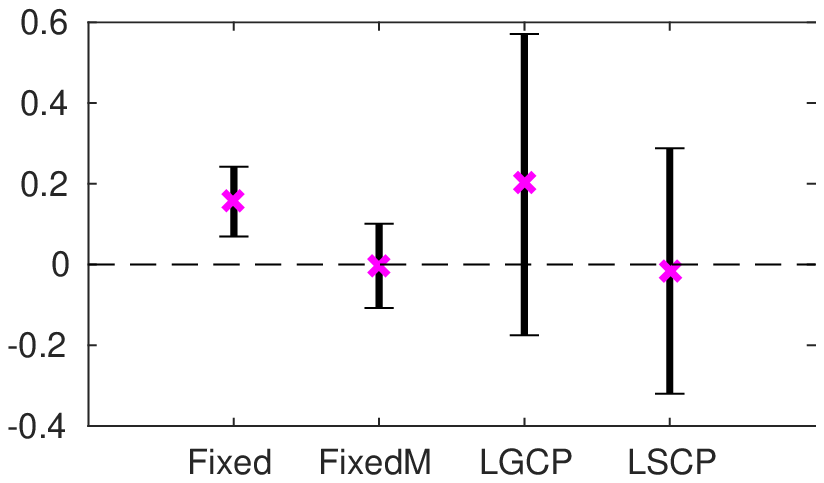}
\end{subfigure}
\begin{subfigure}{0.24\textwidth}
\centering
Fe
\includegraphics[keepaspectratio, width = 0.95\textwidth]{./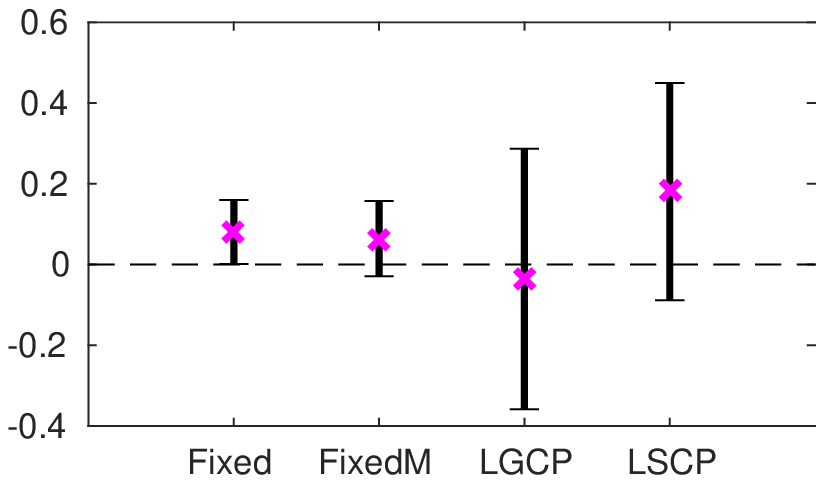}
\end{subfigure} 
\begin{subfigure}{0.24\textwidth}
\centering
Mg
\includegraphics[keepaspectratio, width = 0.95\textwidth]{./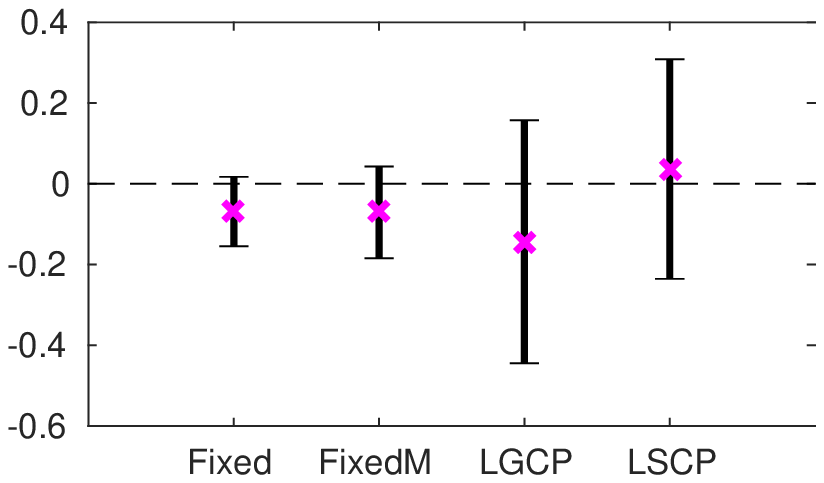}
\end{subfigure}
\begin{subfigure}{0.24\textwidth}
\centering
Mn
\includegraphics[keepaspectratio, width = 0.95\textwidth]{./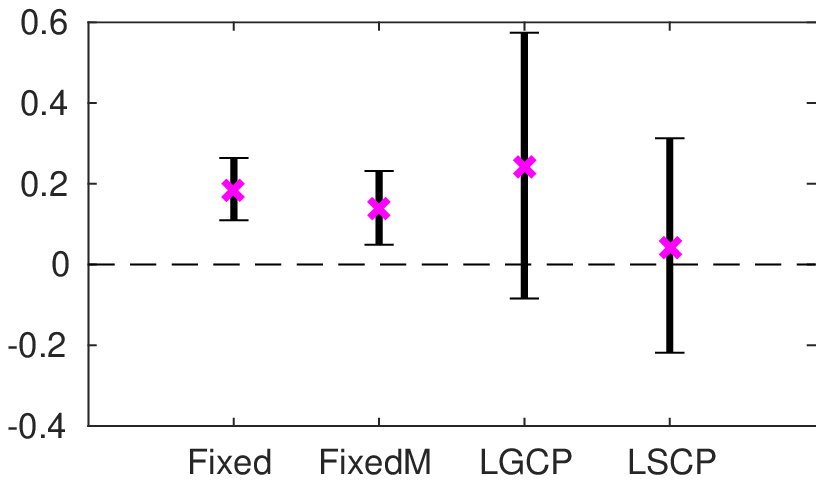}
\end{subfigure} \\

\begin{subfigure}{0.24\textwidth}
\centering
N
\includegraphics[keepaspectratio, width = 0.95\textwidth]{./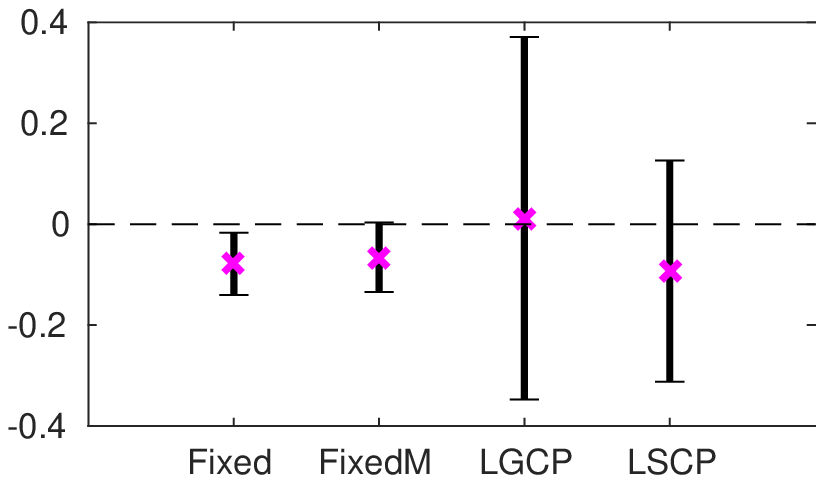}
\end{subfigure} 
\begin{subfigure}{0.24\textwidth}
\centering
Nmin
\includegraphics[keepaspectratio, width = 0.95\textwidth]{./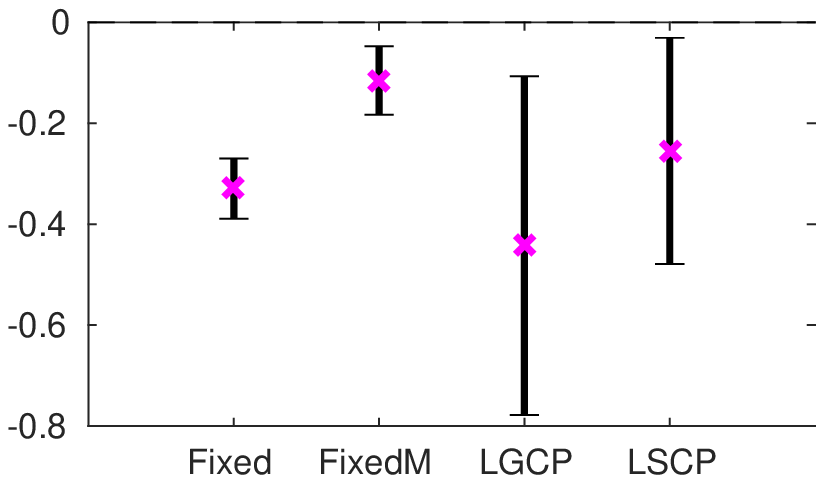}
\end{subfigure}
\begin{subfigure}{0.24\textwidth}
\centering
P
\includegraphics[keepaspectratio, width = 0.95\textwidth]{./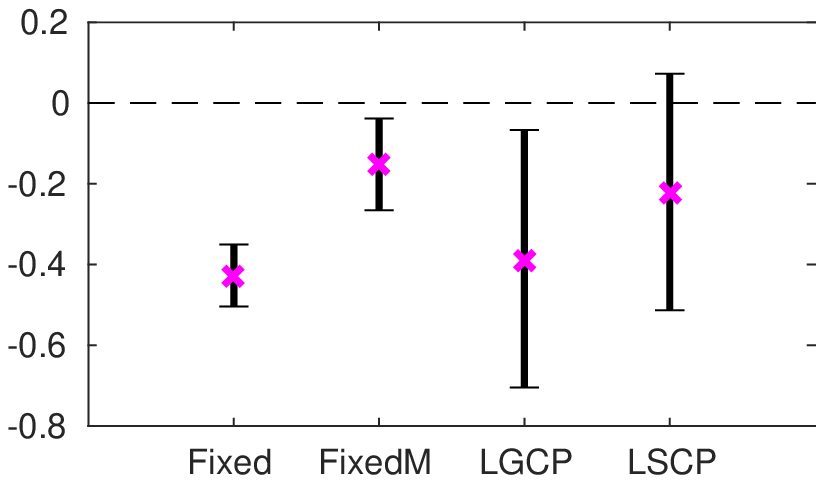}
\end{subfigure}
\begin{subfigure}{0.24\textwidth}
\centering
pH
\includegraphics[keepaspectratio, width = 0.95\textwidth]{./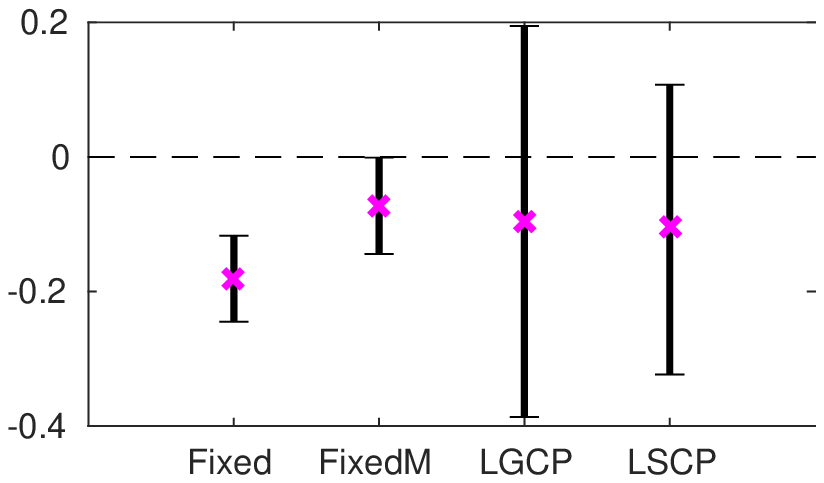}
\end{subfigure} \\

\caption{ The mean (cross) and 95\% credibility intervals (lines) for the posterior marginal distribution of fixed effect for the four different models.}
\label{fig:covcredintervals}
\end{figure}

\begin{table}[t]
\centering
\begin{tabular}{|l|l|}
\hline
Model & Covariates \\
\hline
Fixed & Int, Elev, Slope, Al, Mn, NMin, P, pH, Cu, N \\
FixedM & Int, NMin, Elev, Mn \\
LGCP & Int \\
\ac{lscp} & Int \\
\hline
\end{tabular}
\caption{Significant covariates on a 5\% level for the covariates using Holm-Bonferroni correction to correct for multiple hypothesis tests.}
\label{table:covariates}
\end{table}

\section{Discussion}
\label{sec:discussion}
We have considered the problem of Bayesian level set inversion for point process data. The proposed model can be seen as a generalization of the log-Gaussian Cox process model where the latent Gaussian field is extended to a level set mixture of Gaussian fields. 
We derived basic model properties and in Appendix \ref{sec:theoretical} showed consistency of the posterior probability measure of finite-dimensional approximations to the continuous model. A computationally efficient MCMC method for Bayesian inference, based on the pCN MALA algorithm, was presented. A topic of further research could be to investigate other, potentially even quicker, estimation methods such as INLA or variational Bayes.

We modelled a point pattern formed by the locations of the trees from a species in a tropical rainforest. The example was of interest since the point pattern show clear signs of being affected by some unknown confounding factor.
Comparisons of functional statistics between simulations from the fitted models and the observed data indicated that allowing for a second class in the model better explains the point pattern behavior. Moreover, the LSCP model stayed close to the expected values for all three functional characteristics investigated while the popular LGCP model did not. 
There are indications that the FixedM model explains the data better than the LGCP model despite the much simpler structure of the earlier model. FixedM has far less degrees of freedom than the LSCP and LGCP models, and is hence less prone to overfitting.  It also shows that it is not overfitting that allows models with two classes to outperform the LGCP. 
The analysis of the tropical rainforest showed that inference on both the Gaussian field parameters and covariates were affected by allowing for a second class in the model. It suggests that the inference drawn based on the LGCP model were biased by the confounding factor. 

Future analysis could consider using fixed effects also in the level set field, $\latf_0$, in order to investigate which covariates that explains the classification. This is another feature of the proposed model that we have not yet investigated.
Further, analysis of multivariate point patterns are possible such as for instance joint analysis of several species of plants. 
This could be performed by introducing multivariate Gaussian random fields for the classes, i.e. for $\{\latf_k\}_{k=1}^K$. 
Another possibility is letting several species share the same level set field, $\latf_0$, or classifications field, $\clf$, but use independent class fields, $\{\latf_k\}_{k=1}^K$. 
In this way, information about $\latf_0$ could be enforced from several point patterns jointly.

\section{Acknowledgements}
\label{sec:acknowledgements}
The authors gratefully acknowledge the financial support from the Knut and Alice Wallenberg Foundation, the Swedish Research Council Grant 2016-04187, and the \r{A}Forsk foundation. 
We would like to thank the people at the Center of tropical forest research, Smithsonian Tropical Research Institute for the extensive forest census plot and for making the data publicly available.
The BCI forest dynamics research project was founded by S.P. Hubbell and R.B. Foster and is now managed by R. Condit, S. Lao, and R. Perez under the Center for Tropical Forest Science and the Smithsonian Tropical Research in Panama. Numerous organizations have provided funding, principally the U.S.\ National Science Foundation, and hundreds of field workers have contributed.

Also thanks to the Barro Colorado soil survey (Jim Dalling, Robert John, Kyle Harms, Robert Stallard and Joe Yavitt and field assistants Paolo Segre and Juan Di Trani) for making the soil sample data publicly available and for answering questions and handing out the original soil sample locations on request.
The Barro Colorado soil survey was funded by NSF DEB021104,021115, 0212284,0212818 and OISE 0314581 as well as the STRI Soils Initiative and CTFS. 

\bibliographystyle{plainnat}
\bibliography{shell}

\appendix

\section{Theoretical results}\label{sec:theoretical}
In this section, we will theoretically justify the two approximations of the LSCP process that are needed for inference. The first is the finite dimensional approximation from Section \ref{sec:findim} and the second is the truncation needed for the fast Fourier transform in Section \ref{sec:estimation}.

For $k = \{0, ..., K\}$, let $\latf_k$ be a Gaussian random field on the spatial domain $\domSp = [0,1]^d \subset \R^d$, defined on a complete probability space. We will show the results using methods similar to those in \citep{lit:cotter2,lit:iglesias,lit:simpson} and for this it is convenient to represent the fields as Gaussian measures $\mu_0^{(k)}$. To simplify the presentation, we will assume a specific covariance operator related to the Mat\'ern covariance function. However, the results can be extended to more general densely-defined, self-adjoint, positive definite operators and to more general bounded domains. 

Let $\mu_0^{(k)} = \mathcal{N}(0,\mathcal{C})$, where $\mathcal{C} = \tau^2 A^{-\alpha}$ with $A=\kappa^2 - \Delta$. Here $\tau,\kappa^2$ and $\alpha$ are positive parameters and $A : \scrD(A) \subset L_2(\domSp) \rightarrow L_2(\domSp)$, further we impose periodic boundary conditions. Denote the eigenvalues of $A$ as  $\{\lambda_{j}\}_{j\in\mathbb{N}}$ , which are arranged in a nondecreasing order, and the corresponding eigenfunctions as $\{e_j\}_{j\in\mathbb{N}}$, which form a complete orthonormal basis for $L_2(\domSp)$.  The fractional power operator $A^{\alpha}: \scrD(A^{\alpha})\rightarrow L_2(\domSp)$ is defined by 
$$
A^{\alpha}u = \sum_{j\in \mathbb{N}}\lambda_j^{\alpha}\scal{u}{e_j}e_j.
$$
For any $\alpha$, the subspace $\mathcal{H}^{\alpha} := \scrD(A^{\alpha/2})$ is a Hilbert space
$$
\mathcal{H}^{\alpha} = \{u : \sum_{j\in \mathbb{N}}\lambda_j^{\alpha}|\scal{u}{e_j}|^2 < \infty\},
$$
with respect to the inner product $\scal{\phi}{\psi}_{\alpha} = \scal{A^{\alpha/2}\phi}{A^{\alpha/2}\psi}$ and corresponding norm $\|\phi\|_{\alpha} = \sum_{j\in\mathbb{N}}\lambda_j^{\alpha}\scal{\phi}{e_j}^2$. 

With this choice of covariance operator, we have that if $u\sim \mu_0^{k}$, then $u\in \mathcal{H}^{s}$ for any $s < \alpha - d/2$ $\mu_0^{k}$-almost surely \citep[Theorem 1]{lit:dunlop}. Furthermore, $u$ is almost surely p-times differentiable if $\alpha - d/2 > p$. We will need this differentiability and we formulate it as an assumption.

\begin{assum}\label{ass:1}
The classification field $\latf_0$ is almost surely a Morse function with strictly positive variance at all locations in the domain, and for $k>0$ the Gaussian fields $\latf_k$ are almost surely differentiable. 
\end{assum}
The differentiability assumption is satisfied by assuming $\alpha>2$. The Morse function requirement is slightly stronger than $C^2$, but is implied by $\alpha>4$ \citep{lit:adler}. 
Furthermore, we can use a theorem equivalent to the Sobolev embedding theorem for our $\mathcal{H}^s$ space \citep[Theorem 2.10]{lit:stuart}. That is, $\|X_k\|_{L^\infty} \le C \|X_k\|_{s}$ if $X_k \in \mathcal{H}^s$ and $s > d/2$. For our case with periodic boundary conditions the space $\mathcal{H}^s$ is even equivalent to the Sobolev space $H^s$.

We thus have that $\latf_k$ is represented as a Gaussian measure, $\mu_0^{(k)}$, on $\mathcal{H}^{\alpha}$ and we can choose an appropriate $\sigma$-algebra such as the probability space $(\mathcal{H}^\alpha, \Sigma_k, \mu_0^{(k)})$ becomes complete (see \citep{lit:iglesias}). Likewise $\latf = \{\latf\}_{k=0}^K$ can be represented by a product measure $\mu_0$ on the complete measure space $\xspace = (\Omega, \Sigma, \mu_0)$, where $\Omega$ is the product space of each $\mathcal{H}^{\alpha}$ and $\Sigma$  is the corresponding product $\sigma$-algebra.


Since the \ac{lscp} model defines the point process as a non-homogeneous Poisson process conditioned on $\latf$, the likelihood potentials for the continuous and finite dimensional models, defined in Section \ref{sec:findim}, are
\begin{align}
\Phi(\latf; \obsf) &= \int_{\domSp} \ints(s; \latf)ds - \sum_{\psp_j \in \obsf} \log \ints(\psp_j;\latf), \label{eq:phi} \\
\Phi^N(\latf; \obsf) &= \sum_{i \in N} \left( |\domSp_i| \ints(\tilde{\psp}_i;\latf) - \obsf_i  \log \ints(\tilde{\psp}_i;\latf)   \right).
\label{eq:phiN}
\end{align}
Here, $N$ is the number of discretized regions in the lattice approximation and $Y_i$ denotes the number of observations in $\domSp_i$. Further, $\tilde{\psp}_i$ is the midpoint of each $\domSp_i$, and $\psp_j$ is the location of the $j$th point in the point pattern $\obsf$. Based on these likelihoods, we can now define the corresponding posterior measures as follows. 

\begin{prop}\label{prop:posteriors}
If Assumption \ref{ass:1} holds, we can define posterior measures using Radon-Nikodym derivative with respect to $\mu_0$:
\begin{equation}
\begin{split}\label{eq:postmu}
\frac{d\mu}{d\mu_0}(\latf) &= \frac{1}{C_{\mu}(\obsf)}\exp \left( -\likPot(\latf; \obsf) \right), \\
\frac{d\mu^N}{d\mu_0}(\latf) &= \frac{1}{C_{\mu^N}(\obsf)}\exp \left( -\likPot^N(\latf ; \obsf) \right), 
\end{split}
\end{equation}
where $C_{\mu}(\obsf)$ and $C_{\mu^N}(\obsf)$ are normalizing constants.		
\end{prop}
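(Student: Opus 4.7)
The goal is to verify that $e^{-\Phi(\,\cdot\,;Y)}$ and $e^{-\Phi^N(\,\cdot\,;Y)}$ are non-negative, $\Sigma$-measurable and $\mu_0$-integrable with strictly positive integrals; once this is done, the Radon-Nikodym derivatives in \eqref{eq:postmu} are automatically well-defined and the right-hand sides of \eqref{eq:postmu} specify probability measures $\mu$ and $\mu^N$.

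First I would handle measurability. Under Assumption \ref{ass:1}, $X_0$ has almost surely continuous sample paths, so the indicator $\mathbf{1}\{c_{k-1} < X_0(x)+\mu_0(x) \le c_k\}$ is well-defined pointwise and jointly measurable in $(x,X_0)$. The Morse assumption on $X_0$ ensures that $\{x : X_0(x)+\mu_0(x) = c_k\}$ has zero Lebesgue measure almost surely, so any ambiguity on the boundary of the excursion sets is negligible in the integral $\int_\domSp \lambda(s;X)\,ds$. Combined with measurability of each $X_k$, the log-intensity $\log\lambda(x;X)$ is jointly measurable, and by Tonelli both $\int_\domSp\lambda(s;X)\,ds$ and the point evaluations $\lambda(\psp_j;X)$, $\lambda(\tilde\psp_i;X)$ defining $\Phi$ and $\Phi^N$ are $\Sigma$-measurable.

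Next I would establish integrability. Since $\int_\domSp\lambda(s;X)\,ds \ge 0$ we have the crude bound
\begin{equation}
e^{-\Phi(X;Y)} \;\le\; \prod_{\psp_j\in Y}\lambda(\psp_j;X).
\end{equation}
At each $\psp_j$ exactly one indicator is active, so $\log\lambda(\psp_j;X) = X_{k_j}(\psp_j)+\mu_{k_j}(\psp_j)$ for some class index $k_j$ depending on $X_0$. Bounding this by $\sum_{k=1}^K |X_k(\psp_j)| + M$, where $M$ uniformly bounds the continuous means $\mu_k$ on $\domSp$, gives
\begin{equation}
e^{-\Phi(X;Y)} \;\le\; e^{nM}\prod_{k=1}^K \exp\!\Bigl(\sum_{\psp_j\in Y}|X_k(\psp_j)|\Bigr),
\end{equation}
with $n=|Y|$. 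Taking expectation under $\mu_0$ and using independence of the $X_k$'s factorises the bound over $k$. For each $k$, the vector $(X_k(\psp_1),\ldots,X_k(\psp_n))$ is jointly Gaussian on $\R^n$, so by decomposing $|X_k(\psp_j)|$ as a maximum over sign choices one writes $\exp(\sum_j|X_k(\psp_j)|) \le \sum_{\epsilon\in\{\pm 1\}^n}\exp(\sum_j \epsilon_j X_k(\psp_j))$, and each summand has finite expectation by the Gaussian MGF formula (or by Fernique's theorem). The argument for $\Phi^N$ is the same after replacing $\psp_j$ by the grid midpoints $\tilde\psp_i$ and weighting by $|\domSp_i|$.

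Finally, for positivity of the normalising constants, the integrands $e^{-\Phi}$ and $e^{-\Phi^N}$ are strictly positive everywhere on $\Omega$, so both $C_\mu(Y)$ and $C_{\mu^N}(Y)$ are strictly positive. The main obstacle I anticipate is the measurability step and the handling of the level-set boundaries $\{X_0+\mu_0=c_k\}$; this is precisely where Assumption \ref{ass:1} is used, both to give $X_0$ almost-sure continuity so the class partition is well-defined pathwise and to make the boundary Lebesgue-null so that $\Phi$ does not depend on the (arbitrary) choice of inequality at the thresholds.
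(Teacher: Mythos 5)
Your proof is correct, but it takes a genuinely different route from the paper. For normalizability you drop the non-negative integral term, bound $e^{-\Phi(X;Y)} \le \prod_{\psp_j \in Y}\lambda(\psp_j;X)$, and then control a product of exponentials of finitely many Gaussian point evaluations via the finite-dimensional Gaussian moment generating function; for measurability you argue directly that $(x,X)\mapsto X(x)$ is jointly measurable (a Carath\'eodory-type argument, using continuity of point evaluation on $\mathcal{H}^s$ for $s>d/2$) and apply Tonelli. The paper instead proves that $\likPot$ is continuous $\mu_0$-almost surely as a map on $\mathcal{H}^s$ --- this requires showing continuity of the classification functions $\mixProb_k$, which rests on the level sets being Lebesgue-null (Lemma \ref{lemma:finitecurve} together with Propositions 2.6 and 2.8 of \citet{lit:iglesias}) --- and invokes Lemma 6.1 of \citet{lit:iglesias} for measurability, while normalizability comes from the two-sided bounds of Lemma \ref{lemma:assHolds}, i.e. $\likPot(\latf;\obsf) \ge M(\epsilon,r) - \epsilon\|\latf\|_s^2$ combined with Fernique's theorem on the function space. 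Your argument is more elementary and self-contained: it avoids the Sobolev embedding and infinite-dimensional Fernique machinery entirely, and it even sidesteps the a.s.-continuity step (your use of the Morse property to make the threshold sets Lebesgue-null is only needed so that $\Phi$ is insensitive to the choice of strict versus non-strict inequality, not for measurability itself). What the paper's heavier approach buys is reuse: the quantitative bounds of Lemma \ref{lemma:assHolds} and the a.s. continuity of $\mixProb_k$ are exactly the ingredients of the Hellinger-distance estimates in the proofs of Theorems \ref{theorem:postconvergence} and \ref{theorem:postTruncConvergence}, so establishing them at this stage places the model in the standard well-posedness framework of \citet{lit:stuart} and makes the convergence results follow with no additional structural work, whereas your crude bound yields well-definedness only. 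One cosmetic point: you assert strict positivity of $e^{-\Phi}$ \emph{everywhere} on $\Omega$; this holds only $\mu_0$-almost everywhere (on the full-measure set where the paths lie in $\mathcal{H}^s$, $s>d/2$, hence are bounded), which is all that positivity of $C_\mu(\obsf)$ requires.
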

The proof is given in Appendix \ref{sec:proofs}. Since only the discretized model can be used for inference, it is important to know that the approximation $\mu^N$ converges to the true posterior, $\mu$, as the discretization becomes finer. The following theorem shows that this indeed is the case with respect to the total variation distance, $d_{\text{TV}}(\mu, \mu^N) = 2\sup_{E \in \mathcal{F}_{\xspace}} |\mu(E) - \mu^N(E)|$.

\begin{thm}\label{theorem:postconvergence}
Let Assumption \ref{ass:1} hold and let $\mu^N$ and $\mu$ be the posterior measures defined in \eqref{eq:postmu}. Then $d_{\text{TV}}(\mu, \mu^N) \to 0$ as $N\to \infty$.
\end{thm}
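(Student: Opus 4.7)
The plan is to follow the standard program for Bayesian inverse problems in function space \citep{lit:stuart, lit:cotter2, lit:iglesias}: first establish pointwise $\mu_0$-a.s.\ convergence of the discretized likelihood potential $\Phi^N(\cdot;\obsf)\to \Phi(\cdot;\obsf)$, then obtain a $\mu_0$-integrable envelope for $e^{-\Phi^N}$, apply dominated convergence to control the normalizing constants and the $L^1(\mu_0)$-distance of the densities, and finally invoke the standard bound
\begin{equation}
d_{\text{TV}}(\mu,\mu^N) \;\le\; \frac{2}{\min\{C_{\mu}(\obsf),C_{\mu^N}(\obsf)\}}\int_{\Omega}\left|e^{-\Phi(\latf;\obsf)}-e^{-\Phi^N(\latf;\obsf)}\right|d\mu_0(\latf).
\end{equation}

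For the pointwise convergence I would note that $\Phi^N$ is a Riemann-sum approximation of $\Phi$: the first term $\sum_i|\domSp_i|\ints(\tilde\psp_i;\latf)$ discretizes $\int_{\domSp}\ints(s;\latf)\,ds$, while the second term $\sum_i Y_i\log\ints(\tilde\psp_i;\latf)$ replaces each $\log\ints(\psp_j;\latf)$ by its value at the midpoint of the lattice cell containing $\psp_j$. The only obstacle to ordinary Riemann-sum convergence is the piecewise structure: $\latf=\log\ints$ jumps across the level sets $\{\psp:\latf_0(\psp)=\threshParam_k\}$. Assumption \ref{ass:1} rules this out: $\latf_0$ is $\mu_0$-a.s.\ Morse with strictly positive variance, so by transversality $\nabla\latf_0$ does not vanish on $\{\latf_0=\threshParam_k\}$; hence each level set is almost surely a finite union of smooth codimension-one submanifolds and has Jordan content zero. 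Combined with the a.s.\ continuity of each $\latf_k$ for $k\ge 1$, the integrand $\ints(\cdot;\latf)$ is Riemann integrable, so the first sum converges. Since the level sets are Lebesgue-null, conditional on $\latf$ the Poisson process of the data a.s.\ places no point on them, and continuity of $\latf$ at each $\psp_j$ gives $\log\ints(\tilde\psp_i;\latf)\to\log\ints(\psp_j;\latf)$ for the cell containing $\psp_j$, yielding $\Phi^N(\latf;\obsf)\to\Phi(\latf;\obsf)$ $\mu_0$-a.s.

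For the uniform envelope, $\Phi^N(\latf;\obsf)\ge -|\obsf|\,\|\latf\|_{L^\infty(\domSp)}$ since the integral part is nonnegative and $|\log\ints|\le\|\latf\|_{L^\infty}$, and the same inequality holds for $\Phi$. The Sobolev-type embedding $\mathcal{H}^s\hookrightarrow L^\infty(\domSp)$ for $s>d/2$ combined with Fernique's theorem for the Gaussian measure $\mu_0$ on $\mathcal{H}^s$ shows that $\exp(c\|\latf\|_{L^\infty})$ is $\mu_0$-integrable for every $c>0$, so both $e^{-\Phi^N}$ and $e^{-\Phi}$ are dominated by a common $\mu_0$-integrable function. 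Dominated convergence then yields $C_{\mu^N}(\obsf)\to C_{\mu}(\obsf)$ and $\int|e^{-\Phi}-e^{-\Phi^N}|\,d\mu_0\to 0$; the same envelope keeps $C_\mu$ and $C_{\mu^N}$ uniformly bounded away from zero, so the TV bound closes the argument.

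The principal obstacle is the Riemann-sum convergence in the presence of the random level-set discontinuities: one needs the interfaces $\{\latf_0=\threshParam_k\}$ to be almost surely negligible both for Lebesgue measure and for the observed point configuration. The Morse part of Assumption \ref{ass:1} is exactly the regularity required to reduce this to a deterministic statement once a typical realization of $\latf_0$ is fixed; the remaining steps then follow standard, if slightly technical, inverse-problems arguments.
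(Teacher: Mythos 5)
Your proposal is correct in outline, but it takes a genuinely different, softer route than the paper. You establish pointwise $\mu_0$-a.s.\ convergence $\likPot^N(\latf;\obsf)\to\likPot(\latf;\obsf)$ via a Riemann-sum argument and then close with dominated convergence, using the $N$-uniform envelope $e^{-\likPot^N(\latf;\obsf)}\le e^{C\|\obsf\|_{\yspace}\|\latf\|_{s}}$ together with Fernique's theorem, and finally the elementary total-variation bound in terms of the $L^1(\mu_0)$ distance of the unnormalized densities. The paper instead argues quantitatively: it passes to the Hellinger distance (Lemma 6.36 of \citet{lit:stuart}), splits $2d_{\text{Hell}}(\mu,\mu^N)^2\le I_1+I_2$, shows $I_2\le CI_1$, and bounds $I_1$ through the explicit estimate $|\likPot(\latf)-\likPot^N(\latf)|\le Ce^{C\|\latf\|_s}|\domSp|h+C\|\obsf\|_{\yspace}h+C\|\latf\|_s\bigl(e^{C\|\latf\|_s}\mathcal{L}(S^C(\latf))+\|S^C(\latf)\cap\obsf\|_{\yspace}\bigr)$, where $S^C(\latf)$ collects the lattice cells hit by a level set; Lemma \ref{lemma:finitecurve} then drives $\expect{\mathcal{L}_2(S^C(\latf))}$ and $\expect{\|S^C(\latf)\cap\obsf\|_{\yspace}}$ to zero via Rice's theorem \citep{lit:azais} and Theorem 15.9.4 of \citet{lit:adler}. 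What each buys: the paper's estimates give an explicit $O(h)$ contribution plus controlled geometric error terms, and the same machinery is reused verbatim in the proof of Theorem \ref{theorem:postTruncConvergence}, whereas your argument is shorter and avoids the level-curve geometry entirely, at the price of yielding no rate. Your handling of the normalizing constants is fine, though the uniform lower bound comes from $C_{\mu^N}(\obsf)\to C_{\mu}(\obsf)>0$ (or from Lemma \ref{lemma:assHolds}(ii) on a ball of positive prior mass), not from the envelope itself.

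Two local justifications need repair, though neither threatens the strategy. First, being Morse does not make $\nabla\latf_0$ nonvanishing on $\{\psp:\latf_0(\psp)=\threshParam_k\}$: a threshold $\threshParam_k$ could be a critical value of a Morse function, so the appeal to transversality and Jordan content zero does not follow as stated. You also do not need it: midpoint Riemann sums of a bounded function on a compact box converge to the integral as soon as the discontinuity set is Lebesgue-null, and $\mathcal{L}_2(S^0(\latf_0))=0$ holds $\mu_0$-a.s.\ by Fubini from the strictly positive pointwise variance (this is the first claim of Lemma \ref{lemma:finitecurve}, i.e.\ Proposition 2.8 of \citet{lit:iglesias}). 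Second, the step ``conditional on $\latf$ the Poisson process a.s.\ places no point on the level sets'' conditions in the wrong direction: in the theorem $\obsf$ is a \emph{fixed} observed point pattern, and the a.s.\ statement must be with respect to the prior $\mu_0$, not the data-generating mechanism. The correct, equally short argument is that for each of the finitely many fixed locations $\psp_j$ and thresholds $\threshParam_k$ one has $\prob{\latf_0(\psp_j)=\threshParam_k}=0$ by the positive-variance assumption, so $\mu_0$-a.s.\ every $\psp_j$ has a neighborhood with constant classification, and continuity of $G(\latf)$ at $\psp_j$ follows. With these one-line patches your dominated-convergence proof goes through.
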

The proof is given in Appendix \ref{sec:proofs}. Also the latent fields, $\latf$, need to be approximated by finite dimensional representations for inference. We will do this by truncating the basis expansion of the field to $p$ terms:
$$
\latf \approx \tilde{X} = \sum_{j=1}^p\xi_j\lambda_j^{\alpha}e_j,
$$
where $\xi_j$ are independent standard normal variables. We will refer to the model using a discretization of the observational domain and finite dimensional approximations of $\latf$ as the \textit{fully discretized model}. The advantage with using this truncation is that we can use the fast Fourier transform for simulating the field. To show that we still have convergence under this approximations, note that the finite dimensional approximation of $\latf$ can be viewed as an orthogonal projection of $\latf$ on to the space spanned by the eigenfunctions $\{e_j\}_{j \le p}$ as is done in \citet{lit:cotter2}. We define the projection operator $P^p$ such that $\tilde{\latf}(\psp) = P^p \latf(\psp)$. It is now possible to define a posterior probability measure for $\tilde{\mu}^N$ by it's Radon-Nikodym derivative as
\begin{align}
\frac{d\tilde{\mu}^N}{d\mu_0}(\latf) = \frac{1}{C_{\tilde{\mu}^N}(\obsf)}\exp \left( -\likPot^N(P^p \latf; \obsf) \right) \label{eq:postmuNp}.
\end{align}
An important consequence of this definition is that the posterior measure is absolutely continuous with respect to $\mu_0$ and measurable with respect to $\Sigma$. The interpretation of $\tilde{\mu}^N$ is that the data will only affect the projection, $P^p\latf$. We can now show that also under this approximation, we get convergence to the true posterior.

\begin{thm}\label{theorem:postTruncConvergence}
	 Let the measure $\tilde{\mu}^N$ be defined by \eqref{eq:postmuNp}, and let the measure $\mu$ be defined by \eqref{eq:postmu}. If $\mu_0$ satisfies Assumption \ref{ass:1}, then $d_{\text{TV}}(\mu, \tilde{\mu}^N) \to 0$ as $N\to \infty$ and $p\to \infty$. 
\end{thm}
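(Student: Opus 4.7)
The plan is to reduce to Theorem \ref{theorem:postconvergence} via the triangle inequality
\begin{equation}
d_{\text{TV}}(\mu, \tilde{\mu}^N) \le d_{\text{TV}}(\mu, \mu^N) + d_{\text{TV}}(\mu^N, \tilde{\mu}^N),
\end{equation}
where the first summand is handled by the previous theorem. What remains is to show $d_{\text{TV}}(\mu^N, \tilde{\mu}^N) \to 0$ as $p \to \infty$ (uniformly enough in $N$, or just for each fixed $N$ followed by a joint limit after choosing $p=p(N)$ growing fast enough). Since $\mu^N$ and $\tilde{\mu}^N$ both have densities with respect to $\mu_0$, I would apply the standard bound from the Bayesian inverse problems literature (see \citet{lit:cotter2,lit:stuart}) relating total variation distance to the difference of the potentials:
\begin{equation}
d_{\text{TV}}(\mu^N, \tilde{\mu}^N) \le C(N) \left( \mathbb{E}_{\mu_0}\bigl| \Phi^N(\latf; \obsf) - \Phi^N(P^p \latf; \obsf) \bigr|^2 \right)^{1/2},
\end{equation}
where the constant depends on the normalising constants, which are bounded away from zero for each fixed $N$ since $\Phi^N \ge -C_N$ on a set of positive $\mu_0$-measure (the integrand in $\Phi^N$ is non-negative and the log term is bounded on bounded sets of $X_k$).

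Next I would show that $\Phi^N(P^p \latf; \obsf) \to \Phi^N(\latf; \obsf)$ as $p\to\infty$, $\mu_0$-almost surely. The potential $\Phi^N$ is a finite sum depending only on the values of the latent fields at the finitely many lattice centres $\tilde{\psp}_i$, so it suffices to show that $P^p \latf_k(\tilde{\psp}_i) \to \latf_k(\tilde{\psp}_i)$ pointwise and that the indicator $\mathbf{1}\{c_{k-1} < P^p X_0(\tilde{\psp}_i) + \mu_0(\tilde{\psp}_i) + \epsilon_i < c_k\}$ converges $\mu_0$-a.s.\ to the corresponding indicator built from $X_0$. Pointwise convergence of the fields follows since $\|P^p \latf - \latf\|_{\mathcal{H}^s} \to 0$ combined with the Sobolev embedding $\mathcal{H}^s \hookrightarrow L^\infty$ for $s > d/2$ (Assumption \ref{ass:1} and the choice $\alpha > d/2$ guarantee this). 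For the indicators, the main obstacle is their discontinuity: convergence can only hold at points where $X_0(\tilde{\psp}_i) + \mu_0(\tilde{\psp}_i) + \epsilon_i$ does not lie exactly on a threshold $c_k$, but by Assumption \ref{ass:1} the variable $X_0(\tilde{\psp}_i)$ has a Gaussian marginal of strictly positive variance (and $\epsilon_i$ is independent Gaussian in the nugget-augmented setting), so the level crossings form a $\mu_0$-null event.

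The last step is to upgrade pointwise convergence to $L^2(\mu_0)$ convergence of $\Phi^N(P^p\latf;\obsf) - \Phi^N(\latf;\obsf)$ via dominated convergence. The bookkeeping here is where I expect most of the technical work: the only unbounded contribution to $\Phi^N$ is the term $|\domSp_i|\exp(\latf_k(\tilde{\psp}_i))$, so I would bound it by $|\domSp_i|\exp(\|\latf_k\|_{L^\infty})$, use the Sobolev embedding together with Fernique's theorem to obtain $\mathbb{E}_{\mu_0}[\exp(q\|\latf_k\|_{L^\infty})] < \infty$ for some $q>2$, and likewise dominate $|\Phi^N(P^p\latf)|^2$ uniformly in $p$ using $\|P^p \latf_k\|_{L^\infty} \le C\|\latf_k\|_{\mathcal{H}^s}$. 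Dominated convergence then yields $\mathbb{E}_{\mu_0}|\Phi^N(P^p\latf;\obsf)-\Phi^N(\latf;\obsf)|^2 \to 0$ as $p\to\infty$, which combined with the standard bound closes the argument. The main obstacle, as noted, is handling the discontinuity of the level-set indicator, for which the Morse-function assumption on $\latf_0$ is the crucial ingredient.
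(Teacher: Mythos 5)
Your proposal is correct and shares the paper's skeleton: the same triangle inequality $d_{\text{TV}}(\mu,\tilde{\mu}^N) \le d_{\text{TV}}(\mu,\mu^N) + d_{\text{TV}}(\mu^N,\tilde{\mu}^N)$ with the first term killed by Theorem \ref{theorem:postconvergence}, and the second term reduced---via the standard perturbation bound for posterior measures, which the paper re-derives inline as its $I_1 + I_2$ Hellinger decomposition with $I_2 \le C I_1$---to showing $\mathbb{E}_{\mu_0}\left|\likPot^N(\latf) - \likPot^N(P^p\latf)\right|^2 \to 0$. Where you genuinely diverge is in how this expectation is handled. The paper argues quantitatively: it splits $|G(\latf) - G(P^p\latf)|$ into the field difference $|\latf_k - P^p\latf_k|$, controlled by the variance $\sigma_p^2$ of $Q^p\latf = \latf - P^p\latf$, and the indicator difference $|\pi_k(\latf_0) - \pi_k(P^p\latf_0)|$, whose flip probability at each lattice point is bounded by conditioning on $\{P^p\latf_0(\psp_i) \in [\threshParam_k, \threshParam_k + \epsilon]\}$ with the choice $\epsilon = \sqrt{\sigma_p}$ and a Gaussian tail bound, yielding an explicit $O(\sqrt{\sigma_p})$ rate per point. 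You instead argue softly: a.s.\ pointwise convergence of $P^p\latf$ (tail of the $\mathcal{H}^s$ expansion plus Sobolev embedding), a.s.\ convergence of the indicators because exact threshold-hitting is a null event, then dominated convergence with a Fernique-integrable dominating function, using $\|P^p\latf\|_s \le \|\latf\|_s$ for uniformity in $p$. Both routes are valid, but they buy different things: the paper's explicit estimate is uniform in $N$, since the weights satisfy $\sum_i (|\domSp_i| + Y_i) = |\domSp| + \|\obsf\|_{\yspace}$, so it gives a genuine joint limit in $(N,p)$; your DCT argument is per fixed $N$ (the number of lattice points at which the indicators must converge grows with $N$, and DCT supplies no uniformity), so strictly you obtain only the iterated limit and must invoke the diagonal choice $p = p(N)$ that you flag---acceptable under the theorem's loose phrasing, but the quantitative estimate is what one needs for convergence along arbitrary sequences $N_m, p_m \to \infty$. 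One small correction: for your null-event argument the Morse hypothesis is not the crucial ingredient---the strictly positive pointwise variance of $\latf_0$ in Assumption \ref{ass:1} already suffices at the finitely many lattice centres; the Morse property is needed in Lemma \ref{lemma:finitecurve} (level-set length and Euler-characteristic bounds) for the $N \to \infty$ half of the argument, i.e., inside Theorem \ref{theorem:postconvergence}, which you inherit as a black box.
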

The proof is given in Appendix \ref{sec:proofs}.

\section{Proofs}
\label{sec:proofs}
\begin{proof}[Proof of Proposition \ref{prop:intensmoments}]
For the first moment, note that
\begin{align}
\expect{ \ints(\psp) } &= \mathbb{E}\left[\exp(\latf(\psp)) \right]  = \sum_{k=1}^K   \mathbb{E}\left[\exp(\latf(\psp)) | \latf_0(\psp) \in (c_{k-1},c_k] \right] \prob{ \latf_0(\psp) \in (c_{k-1},c_k]}  \\
&= \sum_{k=1}^K   \mathbb{E}\left[\exp(\latf_k(\psp) + \mu_k(\psp))  \right] \prob{ \latf_0(\psp) \in (c_{k-1},c_k]} \\
&=\sum_{k=1}^K  \exp\left(\mu_k(\psp) + \frac{r_k(0)}{2}\right) \prob{ \latf_0(\psp) \in (c_{k-1},c_k]},
\end{align}
where the final equality follows from the explicit form of the expectation of a log normal random variable. The second moment follows by similar calculations.   
\end{proof}

\begin{proof}[Proof of Proposition \ref{prop:emptyspace}]
The inhomogeneous empty space function, $F(\psp_0, r)$ is defined as the probability of having at least one point inside a ball of radius $r$ centered at $\psp_0$, i.e. $F(\psp_0, r) = \prob{N(\obsf; B(\psp_0, r)) > 0}$. Here, $N(\obsf; A)$ is the number of points inside the domain $A$ for a realization of the point process, $\obsf$. Hence $F(\psp_0, r) = 1 - \prob{N(\obsf; B(\psp_0, r)) = 0}$. Now,
\begin{align}
\prob{N(B(\obsf; \psp_0, r)) = 0} &= \expect{ \exp\left( -\int_{B(\psp_0, r)} e^{\sum_{k=1}^K \clf_k(\psp) \left(\latf_k(\psp) + \mu_k(\psp)\right)} d\psp \right) } \\
&= \expect{ \exp\left( -\int_{B(\psp_0, r)} \sum_{k=1}^K \clf_k(\psp) e^{ \latf_k(\psp) + \mu_k(\psp)} d\psp \right) }\\
&= \expect{ \prod_{k=1}^K \exp\left( -  \int_{\domSp_k \cap B(\psp_0,r)}  e^{ \mu_k(\psp)} e^{ \latf_k(\psp)} d\psp \right) }.
\end{align}
\end{proof}

Due to the product space interpretation of $\latf$ as the collection $\{\latf_k\}_k$, we define norms on $\latf$ as $\|\latf\|_{(\cdot)} = \sum_{k=0}^K \|\latf_k\|_{(\cdot)}$.
That is, a norm on realizations of all Gaussian random fields jointly are defined as the sum of the norm for each of the $K+1$ fields. 

To simplify the proofs we note that the potential $\Phi$ can be written as a composition of two functions: The potential
$\Phi(\latf; \obsf) = \Phi_P(G(\latf);\obsf)$ where $\Phi_P: L_2(\domSp) \times \yspace \rightarrow \mathbb{R}$ is the continuous Poisson log-likelihood function and
$G: \mathcal{H}^{\alpha} \rightarrow L_2(\domSp)$  is 
$$G(X ) = \sum_{k=1}^K \mixProb_k(\cdot) X_k(\cdot) = \log(\lambda(\cdot)), $$
where $\mixProb_k$ is the classification function, $\mixProb_k(\psp) = \indicator{\threshParam_{k-1} \le \latf_0(\psp) < \threshParam_k}$.
Similarly $\Phi^N(\latf;\obsf) = \Phi^N_P(G(\latf);\obsf)$ where $\Phi^N_P$ is the Poisson log-likelihood function for the discretized domain. 

To prove Proposition \ref{prop:posteriors}, we will need two lemmas, where the first gives bounds for the likelihood potentials.
\begin{lemma} \label{lemma:assHolds}
Let $\|\obsf\|_{\yspace}$ denote the number of points in a given point pattern. For $\Phi$ in \eqref{eq:phi} and $\Phi^N$ in \eqref{eq:phiN} we then have that:
\begin{enumerate}
	\item[(i)] For every $r>0$, $\epsilon > 0$, and $s > 1$ with $\latf\in \mathcal{H}^s$ and $\obsf \in \yspace$ with $||\obsf||_{\yspace} \leq r$, there exists a constant $M(\epsilon, r) \in \R$ such that $\Phi(\latf;\obsf) \geq M( \epsilon, r) - \epsilon ||\latf ||^2_{s}$. 
	\item[(ii)] For every $r>0$, and $s > 1$ all $\latf\in \mathcal{H}^s$ and all $\obsf\in \yspace$ with $\max\{||\latf||_{s},||\obsf||_{\yspace}\}<r$ we have $\Phi(\latf; \obsf) \leq |\domSp|e^{Cr} + C^2r^2$.
\end{enumerate}	
\end{lemma}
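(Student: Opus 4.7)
The plan is to reduce both parts to one uniform estimate on $\|\log\lambda\|_{L^\infty}$ in terms of $\|\latf\|_s$, and then combine it with non-negativity of the exponential integral. At every $\psp\in\domSp$ the classification indicators $\{\mixProb_k(\psp)\}_{k=1}^K$ form a partition of unity taking exactly one non-zero value, hence the pointwise identity
\begin{equation}
|\log\lambda(\psp)| \;=\; \Bigl|\sum_{k=1}^K \mixProb_k(\psp)\,X_k(\psp)\Bigr| \;\le\; \max_{1\le k\le K}|X_k(\psp)|.
\end{equation}
Combining this with the Sobolev embedding $\|X_k\|_{L^\infty}\le C\|X_k\|_s$, available since $s>d/2=1$ (as recalled just before Assumption \ref{ass:1}), and with the product-norm convention $\|\latf\|_s=\sum_{k=0}^K\|X_k\|_s$ used throughout this section, I obtain the uniform estimate $\|\log\lambda\|_{L^\infty}\le C\|\latf\|_s$. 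This is the workhorse bound that drives everything else.

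For part (i) I would discard the non-negative integral term in $\Phi$ and bound the remaining sum by $\|\obsf\|_\yspace\,\|\log\lambda\|_{L^\infty}\le rC\|\latf\|_s$, so $\Phi(\latf;\obsf)\ge -rC\|\latf\|_s$. One application of Young's inequality $rC\|\latf\|_s\le (rC)^2/(4\epsilon)+\epsilon\|\latf\|_s^2$ then delivers the claim with $M(\epsilon,r)=-(rC)^2/(4\epsilon)$. For part (ii) I would bound the integral by $\int_\domSp e^{\log\lambda}\,d\psp \le |\domSp|\exp(\|\log\lambda\|_{L^\infty})\le |\domSp|e^{Cr}$ (using $\|\latf\|_s<r$), and the log-likelihood sum by $|\sum_{\psp_j\in\obsf}\log\lambda(\psp_j)|\le \|\obsf\|_\yspace\,C\|\latf\|_s\le Cr^2$; absorbing $C$ into $C^2$ (harmless for $C\ge 1$) produces the stated form $|\domSp|e^{Cr}+C^2r^2$.

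For $\Phi^N$ the same bounds carry over unchanged: the discretized integral $\sum_i|\domSp_i|\lambda(\tilde\psp_i)$ is again non-negative and is bounded above by $\bigl(\sum_i|\domSp_i|\bigr)\exp(\|\log\lambda\|_{L^\infty})=|\domSp|e^{Cr}$, while $\sum_i Y_i|\log\lambda(\tilde\psp_i)|\le\|\obsf\|_\yspace C\|\latf\|_s$ since $\sum_iY_i=\|\obsf\|_\yspace$. I anticipate no real obstacle: the only subtlety is that the classification indicators $\mixProb_k$ are non-smooth, but their partition-of-unity property sidesteps this issue in the pointwise estimate above, after which the proof is a standard Sobolev-plus-Young calculation.
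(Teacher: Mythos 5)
Your proposal is correct and takes essentially the same route as the paper's proof: both discard the non-negative integral term, bound $\|G(\latf)\|_{L^{\infty}(\domSp)} \le C\|\latf\|_{s}$ via the Sobolev embedding, complete the square (Young's inequality) to obtain $M(\epsilon,r) = -C^2r^2/(4\epsilon)$ for part (i), and use the same $|\domSp|e^{Cr} + C^2r^2$ estimates for part (ii) and for $\Phi^N$. The only difference is cosmetic: you spell out the partition-of-unity pointwise step $|G(\latf)(\psp)| \le \max_k |\latf_k(\psp)|$ that the paper leaves implicit in the inequality $\|G(\latf)\|_{L^{\infty}(\domSp)} \le \|\latf\|_{L^\infty(\domSp)}$.
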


\begin{proof} \label{proof:assHolds}
	To show  (i) note that
	\begin{align}
	\Phi_P(G(\latf);\obsf) &= \int_{\domSp} \exp\left(G(\latf) \right)ds - \sum_{s_j \in \obsf}  G(\latf) \ge  - \sum_{s_j \in \obsf} G(\latf) 
	\ge - \|\obsf\|_{\yspace}  \|G(\latf)\|_{L^{\infty}(\domSp)}  	\\
	&\ge - r \|G(\latf)\|_{L^{\infty}(\domSp)}.
	\end{align}
	By Assumption \ref{ass:1} and the Sobolev embedding theorem we have that $\|\latf\|_{L^\infty(\domSp)} \le C \|\latf\|_{s}$. Thus $\|G(\latf)\|_{L^{\infty}(\domSp)}  \le \|\latf\|_{L^\infty(\domSp)} \le C \|\latf\|_{s} $ and we have $\Phi_P(G(\latf),\obsf) \ge -rC \|\latf\|_{s}$.
	Now, $0 \le (\frac{Cr}{2\sqrt{\epsilon}}- \sqrt{\epsilon} \|\latf\|_{s})^2 = \frac{C^2r^2}{4\epsilon} + \epsilon \|\latf\|^2_{s} - Cr \|\latf\|_{s}$. Hence
	$$Cr\|\latf\|_{s} \le \epsilon \|\latf\|^2_{s} + \frac{C^2r^2}{4\epsilon} = \epsilon \|\latf\|^2_{s} - M(\epsilon, r). $$		
	By the same argument, 
	\begin{align*}
	\Phi_P^N(G(\latf);\obsf) &= \sum_{i \in I^N} \left( |\domSp_i| \exp\left(G(\latf)(s) \right) - \obsf_i G(\latf)(s_i)   \right) 
	\ge M(\epsilon, r) - \epsilon\|\latf\|^2_{s}.
	\end{align*}

	Statement (ii) holds for $\Phi$ since
	\begin{align}
	\Phi_P(G(\latf);\obsf) &= \int_{\domSp} \exp\left(G(\latf)(\psp) \right)d\psp - \sum_{\psp_j \in \obsf}  G(\latf)(\psp_j)\\
	&\le |\domSp|e^{ \|G(\latf)\|_{L^{\infty}(\domSp)}} + \|\obsf\|_{\yspace} \|G(\latf)\|_{L^{\infty}(\domSp)}  \\
	&\le |\domSp|e^{Cr} + Cr^2 \le |\domSp|e^{Cr} + C^2r^2,
	\end{align}
	and the same for $\Phi^N$ since
	\begin{align}
	\Phi_P^N(G(\latf),\obsf) &= \sum_{i \in I^N} \left( |\domSp_i| e^{\|G(\latf)\|_{s}}  - \obsf_i G(\latf)(s_i)   \right) 
	\le |\domSp|e^{Cr} + C^2r^2.
	\end{align}	
\end{proof}

The second lemma we need concerns the regularity of the level sets of $\latf_0$. Let $S^0_k(\latf_0) = \{\psp: \latf_0(\psp) = \threshParam_k\}$ be the level set of $\latf_0$ for the level $c_k$ and set $S^0(\latf_0) = \cup_{k=1}^K S^0_k(\latf_0)$. Further, let $J_s$ denote the set of indices for all subregions $\domSp_j$ that do not intersect with $S^0(\latf_0)$, that is, $j \in J_s$ if $\domSp_j \cap \domSp_k^0 = \emptyset$ for all $1 \leq k \leq K$, and define $S(\latf) =  \cup_{j\in J_s}\domSp_j$ as the set of all subregions where the level sets are not included. We then have the following result about $S(\latf)$, and $\mathcal{L}_d(S(\latf))$where $\mathcal{L}_d$ denotes the Lebesgue measure in dimension $d$.
\begin{lemma}
\label{lemma:finitecurve}
Let Assumption \ref{ass:1} hold, then
\begin{itemize}
\item $\mathcal{L}_2(S^0(\latf)) = 0$ a.s.
\item $\expect{\mathcal{L}_2(S^C(\latf))} \rightarrow 0$ as $N\rightarrow \infty$.
\item For any finite set of points $\obsf$, $\expect{ \|S^C(\latf) \cap \obsf\|_{\yspace} } \rightarrow 0$ as $N\rightarrow \infty$.
\end{itemize}
\end{lemma}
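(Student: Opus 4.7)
My plan is to handle the three assertions in order, combining the Morse/continuity content of Assumption~\ref{ass:1} with standard measure-theoretic continuity arguments.

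First, for $\mathcal{L}_2(S^0(\latf_0))=0$ almost surely, I would exploit Assumption~\ref{ass:1}: almost every realization of $\latf_0$ is $C^2$ with only non-degenerate, hence isolated, critical points, so on the compact domain $\domSp$ the set of critical values is finite. For each threshold $\threshParam_k$ that is not a critical value, the implicit function theorem makes $\{\psp \in \domSp : \latf_0(\psp) = \threshParam_k\}$ a $C^1$ one-dimensional submanifold of $\R^2$, hence $\mathcal{L}_2$-null. If $\threshParam_k$ happens to coincide with a critical value, the level set is still a finite union of smooth arcs together with a finite set of isolated critical points, which is still $\mathcal{L}_2$-null. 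A finite union over $k$ delivers the first claim.

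For the second claim, every point of $S^C(\latf) = \domSp \setminus S(\latf)$ lies inside some cell $\domSp_j$ that meets $S^0(\latf_0)$, so $S^C(\latf)$ is contained in the closed $\delta_N$-neighborhood of $S^0(\latf_0)$ in $\domSp$, where $\delta_N$ is the diameter of a lattice cell. As $N \to \infty$, these nested closed neighborhoods decrease to the compact set $S^0(\latf_0)$, and continuity from above of Lebesgue measure combined with the first claim yields $\mathcal{L}_2(S^C(\latf)) \to 0$ almost surely. Since $\mathcal{L}_2(S^C(\latf)) \le |\domSp|$, bounded convergence transfers the limit to the expectation.

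For the third claim, finiteness of $\obsf$ allows the decomposition $\expect{\|S^C(\latf) \cap \obsf\|_{\yspace}} = \sum_{\psp \in \obsf} \prob{\psp \in S^C(\latf)}$. For each fixed $\psp$, on the event $\{\latf_0(\psp) \notin \{\threshParam_1, \ldots, \threshParam_K\}\}$ the continuity of $\latf_0$ supplies a neighborhood of $\psp$ on which $\latf_0$ is uniformly bounded away from every $\threshParam_k$, so the shrinking cell containing $\psp$ eventually avoids $S^0(\latf_0)$. Since $\latf_0(\psp)$ is Gaussian with strictly positive variance by Assumption~\ref{ass:1}, this conditioning event has probability one, and bounded convergence yields $\prob{\psp \in S^C(\latf)} \to 0$ for each of the finitely many $\psp$.

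I expect the only slightly delicate point to be the borderline case in the first claim where a threshold $\threshParam_k$ coincides with a Morse critical value; this is dealt with by the Morse property that critical points are finite and isolated. Everything else reduces to standard continuity of measure and bounded convergence, with the strictly positive variance hypothesis invoked only to rule out Gaussian marginals placing mass at a threshold in the third part.
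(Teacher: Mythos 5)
Your proof is correct, but it follows a genuinely different and more elementary route than the paper's. For the first claim the paper simply cites Proposition 2.8 of \citet{lit:iglesias}, whereas you rederive it directly from the Morse property (finitely many non-degenerate critical points on the compact domain, implicit function theorem off the critical values, isolated crossings at them); your handling of the borderline case $\threshParam_k$ equal to a critical value is sound, since even there the level set is a $1$-rectifiable set plus finitely many points, hence $\mathcal{L}_2$-null. For the second claim the paper gives a \emph{quantitative} bound: it counts the cells a curve of length $l$ can meet, obtaining $\expect{\mathcal{L}_2(S^C(\latf))} \le 4h\left(\expect{\mathcal{L}_1(S^0(\latf_0))} + h\expect{N^*}\right)$, and then invokes Rice's theorem to bound the expected level-curve length and the expected number of excursion components, yielding an explicit $O(h)$ rate; your argument instead encloses $S^C(\latf)$ in the $\delta_N$-neighborhood of the closed set $S^0(\latf_0)$ and uses continuity of measure plus bounded convergence, which gives convergence with no rate but also needs no Rice-type moment machinery (the only wording to tighten is ``nested'': if $\delta_N$ is not monotone, use instead that $\delta \mapsto \mathcal{L}_2(\{\psp : d(\psp, S^0(\latf_0)) \le \delta\})$ is monotone with limit $\mathcal{L}_2(S^0(\latf_0)) = 0$ as $\delta \downarrow 0$). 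For the third claim the contrast is sharpest: the paper appeals to Theorem 15.9.4 of \citet{lit:adler} on expected Lipschitz--Killing curvatures of $B(y,\epsilon_N)\cap \latf_0^{-1}(\threshParam_k)$, while you only use pointwise decomposition $\expect{\|S^C(\latf) \cap \obsf\|_{\yspace}} = \sum_{\psp \in \obsf} \prob{\psp \in S^C(\latf)}$, continuity of $\latf_0$, and the non-atomicity of the Gaussian marginal guaranteed by the strictly positive variance in Assumption \ref{ass:1} --- a considerably lighter argument that reaches the same conclusion. In short, the paper's proof buys an explicit discretization rate (potentially useful for quantifying the convergence in Theorem \ref{theorem:postconvergence}), while yours buys simplicity and weaker hypotheses; since the lemma and its downstream use in Theorem \ref{theorem:postconvergence} only require convergence to zero, your soft argument fully suffices.
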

\begin{proof}
That $\mathcal{L}_2(S^0(\latf)) = 0$ a.s. follows from Proposition 2.8 in \citet{lit:iglesias}. 

We will now show that $\expect{\mathcal{L}_2( S^C(\latf ))}$ goes to zero. Note that a curve segment of length $l$ can at most cover 4($\frac{l}{h} + 1$) subregions $\domSp_j$. Hence, the number of subregions $\domSp_j$ that have a level crossing, $N-|J_s|$, is bounded by $\sum_{i=1}^{N^*}4(l_i/h+1)$, where $N^*$ is the number of disjoint line segments in $S^0(X_0)$ and $l_i$ the length of $i$th segment. This gives that
 $$
 \expect{\mathcal{L}_2(S^C(\latf))} \leq h^2\left(\frac{4}{h}\expect{\mathcal{L}_1(S^0(X_0))} + 4\expect{N^*}\right) \leq 4h(\expect{\mathcal{L}_1(S^0(X_0))} + hN^*).
 $$
Thus, the result follows if we can bound $\expect{\mathcal{L}(S^0(X_0)))}$ and $\expect{N^*}$.
By assumption $\latf_0$ satisfies the conditions of Rice Theorem \citep{lit:azais}, which gives that $\expect{\mathcal{L}_1(S^0(X_0)))}  < \infty$.  Let $N_k$ denote the number of local maxima of $X_0$ over the level $c_k$ and let $N^0 = \sum_k N_k$. Since $\expect{N^*}$ is bounded by $\expect{N^0}$, and Rice Theorem bounds $\expect{N^0}$, the result follows.

Finally, we show that $\expect{ \|S^C(\latf) \cap \obsf\|_{\yspace} }$ goes to zero. We only consider the case $K=1$ and  $Y=\{y\}$, as the general result follows directly given that the claim holds for this special case.  Let $B(y,\epsilon_N)$ be a ball centered at $y$, where $\epsilon_N$ is chosen so that the subregions covering $y$ are contained in the ball. To prove the result we need to show that $\mathbb{P}(\mathcal{L}_i(B(y,\epsilon_N)\cap  X^{-1}_0(c_1)) >0) \rightarrow 0$ as $N\rightarrow 0$, for both $i=0,1$, where $\mathcal{L}_i$ are the Lipschitz-Killing curvatures.  Since $X_0$ is a Morse function and $\mathcal{L}_i(B(y,\epsilon_N))\rightarrow 0$, Theorem 15.9.4 in \cite{lit:adler} shows that $\mathbb{E}[\mathcal{L}_i(B(y,\epsilon_N)\cap  X^{-1}_0(c_1)))] \rightarrow 0$ for $i=0,1$. Thus $ \mathbb{P}(\mathcal{L}_1(B(y,\epsilon_N)\cap  X^{-1}_0(c_1))>0) \rightarrow 0$ as $\mathcal{L}_1(B(y,\epsilon_N)\cap  X^{-1}_0(c_1))$ is non-negative random variable. Since any $B(y,\epsilon_N)$ converges to a point, it follows that $\mathcal{L}_0(B(y,\epsilon_N)\cap  X^{-1}_0(c_1))$ (the Euler characteristic) converges to a non-negative random variable, and thus $\mathbb{P}(\mathcal{L}_0(B(y,\epsilon_N)\cap  X^{-1}_0(c_1)) >0 ) \rightarrow 0$.

\end{proof}

\begin{proof}[Proof of Proposition \ref{prop:posteriors}] \label{proof:posteriors}
We only state the proof for $\mu$ since the proof for $\mu^N$ follows similarly. To show the result we must show that the $\Phi$ is a measurable function, and then that the measure is normalizable. To prove measurability it suffices, by Lemma 6.1 in \citet{lit:iglesias}, to show that that $\likPot$ is continuous $\mu_0$-almost surely. Thus for $\hat{X},\tilde{X}\in\mathcal{H}^s, s > 1$, we must show that $|\likPot(\hat{\latf}) - \likPot(\tilde{\latf})| \rightarrow 0$ as $\|\hat{\latf} - \tilde{\latf}\|_{s} \rightarrow 0$. Note that
\begin{align}\label{eq:proofineq10}
|\likPot(\hat{\latf}) - \likPot(\tilde{\latf})| \le \int_{\domSp} |e^{G(\hat{\latf})(\psp)} - e^{G(\tilde{\latf})(\psp)}| d\psp + \sum_{\psp_j \in \obsf}| G(\hat{\latf})(\psp_j) - G(\tilde{\latf})(\psp_j)|.
\end{align}
We show continuity of the two terms separately. For the first term in \eqref{eq:proofineq10} it follows that
\begin{align}
\int_{\domSp} |e^{G(\hat{\latf})(\psp)} - e^{G(\tilde{\latf})(\psp)}| d\psp &\le \int_{\domSp}  \exp \left(|G(\hat{\latf})(\psp)| + |G(\tilde{\latf})(\psp)|\right)|G(\hat{\latf})(\psp) - G(\tilde{\latf})(\psp)| d\psp \\
&\le C|\domSp| e^{C\|\hat{\latf}\|_{s} + C\|\tilde{\latf}\|_{s}}\|G(\hat{\latf}) - G(\tilde{\latf})\|_{s}.
\end{align}		
Here the first inequality is due to the mean value theorem, and the second inequality comes from using Sobolev's embedding theorem, and H\"{o}lders inequality.
Since $\|G(\hat{\latf}) - G(\tilde{\latf})\|_{s} \le \sum_{k=1}^K \|\latf\|_{s} \|\pi_k(\hat{\latf}_0) - \pi_k(\tilde{\latf}_0)\|_{s} + \|\hat{\latf} - \tilde{\latf}\|_{s}$, it suffices to show that $\pi_k$ is continuous. By Lemma \ref{lemma:finitecurve}, $\mathcal{L}(S^0(\latf)) = 0$ a.s.~and since $\mixProb_k(\cdot)$ is constant on $S^0(\latf)^C$ it is also a.s.~continuous. By Proposition 2.6 in \citet{lit:iglesias}, $\mixProb_k(\cdot)$ is therefore continuous on $L_2(\domSp)$ and thus also on $\mathcal{H}^1$ since it is a.s.~constant.

The second term in \eqref{eq:proofineq10} can be bounded by $C \|\obsf\|_{\yspace} \|\hat{\latf} - \tilde{\latf}\|_{s}$ a.s.~since $|G(\hat{\latf})(\psp) - G(\tilde{\latf})(\psp)| \le \| \hat{\latf} - \tilde{\latf}\|_{L^{\infty}(\domSp)}$. Finally, by Lemma \ref{lemma:assHolds} the function $\likPot$ is bounded from above and below, and thus the measure can be normalized.
\end{proof}

From here on we will simplify the notation by omitting the observed point pattern from the likelihood potential and the constants, i.e. $\likPot(\latf) = \likPot(\latf;\obsf)$ and $C_{\mu} = C_{\mu}(\obsf)$. 

\begin{proof}[Proof of Theorem \ref{theorem:postconvergence}]
By \citet[Lemma 6.36]{lit:stuart}, the Hellinger distance bounds the total variation norm, so it suffices to show convergence in Hellinger distance. Take $X\in\mathcal{H}^s, s > 1$. By the triangle inequality,
\begin{align}
2d_{\text{Hell}}&(\mu, \mu^N)^2 = \int \left( \sqrt{\frac{d\mu}{d\nu}} - \sqrt{\frac{d\mu^N}{d\nu}} \right)^2 d\mu_0(\latf) 
= \int \left( \frac{e^{-\frac{1}{2}\likPot(\latf)}}{\sqrt{C_{\mu}}} - \frac{e^{-\frac{1}{2}\likPot^N(\latf)}}{\sqrt{C_{\mu^N}}} \right)^2 d\mu_0(\latf) \\
&\le 
\frac{1}{C_{\mu}} \int  \left| e^{-\frac{1}{2}\likPot(\latf)} - e^{-\frac{1}{2}\likPot^N(\latf)}  \right|^2  d\mu_0(\latf) + \left| \frac{1}{\sqrt{C_{\mu}}} - \frac{1}{\sqrt{C_{\mu^N}}} \right|^2 \int  e^{-\likPot^N(\latf)}  d\mu_0(\latf) \\
&= I_1 + I_2 ,
\label{eq:proofHellBound}
\end{align}
 where $C_{\mu} = \int e^{-\likPot(\latf)}  d\mu_0(\latf)$ and $C_{\mu^N} = \int e^{-\likPot^N(\latf)}  d\mu_0(\latf)$. We now first show that $I_2$ can be bounded by $I_1$ and then show that $I_1\rightarrow 0$ as $N\rightarrow \infty$.
Note that
\begin{align}
I_2 &\le  (C_{\mu} - C_{\mu^N})^2\frac{1}{4} \left( \min\{ C_{\mu}, C_{\mu^N} \} \right)^{-3} C_{\mu^N} \\
&= \left| \int e^{-\likPot(\latf)}   -  e^{-\likPot^N(\latf)}  d\mu_0(\latf)\right|^2 \frac{1}{4} \left( \min\{ C_{\mu}, C_{\mu^N} \} \right)^{-3} C_{\mu^N}\\
&\le   \frac{C_{\mu^N}}{ 4\min\{ C_{\mu}, C_{\mu^N} \}^{3}} \left( \int \left| e^{-\likPot(\latf)} - e^{-\likPot^N(\latf)} \right| d\mu_0(\latf) \right)^2  \\
& \le \frac{C_{\mu^N}}{ 4\min\{ C_{\mu}, C_{\mu^N} \}^{3}}\int \left| e^{-\frac{1}{2}\likPot(\latf)} - e^{-\frac{1}{2}\likPot^N(\latf)} \right|^2 d\mu_0(\latf) \int  e^{\epsilon\|\latf\|^2_{1} - M(\epsilon, \|\obsf\|_{\yspace})} d\mu_0(\latf) \\
& \le C I_1.
\end{align}
Here the third inequality is due to H\"{o}lder's inequality and Ferniques theorem \citep[Theorem A.3]{lit:cotter2}.
Now to bound $I_1$ note that 
\begin{align}
I_1 &\le \frac{1}{4 C_{\mu}} \int  e^{\epsilon\|\latf\|^2_{s} - M(\epsilon, \|\obsf\|_{\yspace})}| \likPot(\latf) - \likPot^N(\latf) |^2  d\mu_0(\latf). 
\end{align}
Since the function $G$ is Lipschitz continuous on $S(X)$ (see Lemma \ref{lemma:finitecurve}) we get
\begin{align}
\left| \likPot(\latf) - \Phi^N(\latf) \right| 
&\le Ce^{C\|\latf\|_{s}} |\domSp| h  
+ C\|\obsf\|_{\yspace} h  \\
&+ C \|\latf\|_ {s}( e^{C\|\latf\|_{s}} \mathcal{L}( S^C(\latf)) + \|S^C(\latf) \cap \obsf\|_{\yspace} ),
\end{align}	
and thus 
\begin{align}
I_1 &\le \frac{2}{4 C_{\mu}}C \int  e^{3\epsilon \|\latf\|^2_{s} - 3M(\epsilon, 1 + \|\obsf\|_{\yspace})} ( |\domSp| + \|\obsf\|_{\yspace})^2 h^{2}   d\mu_0(\latf)  \\
&+ \frac{2}{4 C_{\mu}} C\int   e^{3\epsilon \|\latf\|^2_{s} - 3M(\epsilon, 1 + \|\obsf\|_{\yspace}) } ( \mathcal{L}(S^C(\latf) ) + \| S^C(\latf) \cap \obsf\|_{\yspace} )^2  d\mu_0(\latf).  
\end{align}
Now the first integral on the right hand side clearly goes to zero as $N\rightarrow \infty$. The second integral can be bounded by
\begin{align}
&   \sqrt{\expect{ e^{6\epsilon \|\latf\|^2_{s} - 6M(\epsilon, \|\obsf\|_{\yspace})} }}\sqrt{ \expect{ ( \mathcal{L}(S^C(\latf) ) + \| S^C(\latf) \cap \obsf\|_{\yspace} )^4}} \\
&\le C_2\sqrt{ \expect{ ( \mathcal{L}(S^C(\latf) ) + \| S^C(\latf) \cap \obsf\|_{\yspace} )^4}} \\
&\le C_2( \mathcal{L}(\domSp ) + \|  \obsf\|_{\yspace} )^3 \left(\expect{ \mathcal{L}(S^C(\latf))  } + \expect{ \|S^C(\latf) \cap \obsf\|_{\yspace}  } \right),
\end{align}
and as  $N\rightarrow \infty$ this also goes to zero by Lemma \ref{lemma:finitecurve}.
\end{proof}

\begin{proof}[Proof of Theorem \ref{theorem:postTruncConvergence}]
Denote the posterior measure for the fully discretized model by $\tilde{\mu}^N$. The TV distance between the posterior measures can be bounded as
	\begin{align}
	d_{TV}(\mu, \tilde{\mu}^N) \le d_{TV}(\mu, \mu^N) + d_{TV}(\mu^N, \tilde{\mu}^N),
	\end{align}
	where the first term goes to zero by theorem \ref{theorem:postconvergence}.
Clearly, $\tilde{\mu}^N$ as given in \eqref{eq:postmuNp} defines a posterior measure with respect to $\mu_0$ by the same arguments as in the proof of Proposition \ref{prop:posteriors}, and it coincides with $\mu^N$ on the span of $\{e_j\}_{j>p+1}$.  We can therefore bound $d_{TV}(\mu^N, \tilde{\mu}^N)$ using the same method as in the proof of theorem \ref{theorem:postconvergence}, this gives that $2d_{\text{Hell}}(\mu^N, \tilde{\mu}^N)^2 \le I_1 + I_2$, where now,
	\begin{align}
	I_1 &= \frac{1}{C_{\mu^N}} \int_{\xspace}  \left| e^{-\frac{1}{2}\likPot^N(\latf)} - e^{-\frac{1}{2}\likPot^N(P^p\latf)}  \right|^2  d\mu_0(\latf)  \\
	I_2 &= \left| \frac{1}{\sqrt{C_{\mu^N}}} - \frac{1}{\sqrt{C_{\tilde{\mu}^N}}} \right|^2 \int  e^{-\likPot^N(P^p\latf)}  d\mu_0(\latf) .
	\end{align}
We can again bound $I_2$ by $CI_1$, so what remains to be shown is that $I_1$ goes to zero as $p\rightarrow\infty$. Let $X\in\mathcal{H}^{s}, s > 1$. Since $P^p$ is a projection, we then clearly have that $\|P^pX\|_s \leq \|X\|_s$. By Lemma \ref{lemma:assHolds}(i) and H\"{o}lders inequality
	\begin{align}
	I_1 &\le \frac{1}{4 C_{\mu}} \int  e^{\epsilon\|\latf\|^2_{s} - M(\epsilon, \|\obsf\|_{\yspace})}| \likPot^N(\latf) - \likPot^N(P^p\latf) |  d\mu_0(\latf) \\
	& \leq  C \sqrt{\int e^{\epsilon\|\latf\|^2_{s} - M(\epsilon, \|\obsf\|_{\yspace})} d\mu_0(\latf)  \expect{ | \likPot^N(\latf) - \likPot^N(P^p\latf) |^2  } }.\\ 
	\end{align}
We will now focus on bounding the expectation above. Using Ferniques theorem 
	\begin{align}
	\left|\likPot^N(\latf) - \likPot^N(P^p\latf)\right| &= \sum_{i =1}^N|\domSp_i|(  e^{G(\latf)(\psp_i)} - e^{G(P^p\latf)(\psp_i)} ) - \obsf_i ( G(P^p\latf)(\psp_i) - G(\latf)(\psp_i) ) \\
	&\le e^{  \epsilon \|\latf\|^2_{s} - M(\epsilon, \|\obsf\|_{\yspace}) } \sum_{i =1}^N (|\domSp_i| + Y_i)| G(P^p\latf)(\psp_i) - G(\latf)(\psp_i)|. \\
	\end{align}	
	Using the inequalities
	\begin{align}
	| G(P^p\latf)(\psp) - G(\latf)(\psp)| &\le  \sum_{k=1}^K \left| \pi_k(\latf_0)(\psp) \latf_k(\psp) - \pi_k(P^p\latf_0)(\psp) P^p\latf_k(\psp) \right| \\
	\le \sum_{k=1}^K &\left( \left| \latf_k(\psp) - P^p\latf_k(\psp) \right| + C\|\latf\|_{s} \left| \pi_k(\latf_0)(\psp) - \pi_k(P^p\latf_0)(\psp)  \right| \right)
\end{align}	
	yields that
	\begin{equation}
	\begin{split}
	&\expect{ | \likPot^N(\latf) - \likPot^N(P^p\latf) |^2  }  \leq C  \expect{ \left( \sum_{i \in I^N}(|\domSp_i| + Y_i)  \sum_{k=1}^K \left| \latf_k(\psp_i) - P^p\latf_k(\psp_i) \right|\right)^2} \\
	&\quad + C \expect{\left( \sum_{i \in I^N} (|\domSp_i| + Y_i) \sum_{k=1}^K \left| \pi_k(\latf_0)(\psp_i) - \pi_k(P^p\latf_0)(\psp_i)  \right|\right)^2 }.
	\end{split}	
	\label{eq:proofa51}
	\end{equation}
	
Note that $|D_i|\propto N^{-1}$ and that $X(\psp)$ is bounded for each $\psp\in\domSp$ almost surely. Let $Q^pX = X - P^pX$ and note that $Q^pX$ for each $\psp\in\domSp$ is a mean-zero Gaussian variable with a variance $\sigma_p^2$ that goes to zero as $p\rightarrow\infty$. Thus, the first term in \eqref{eq:proofa51} clearly goes to zero as $p\rightarrow\infty$.
Since $\left| \pi_k(\latf_0)(\psp_i) - \pi_k(P^p\latf_0)(\psp_i)  \right|$ is bounded by one, the second term in \eqref{eq:proofa51} can be bounded by
	\begin{align}
	 C \sum_{i =1}^N (|\domSp_i| + Y_i)  \mathbb{E}\left[\sum_{k=1}^K \left| \pi_k(\latf_0)(\psp_i) - \pi_k(P^p\latf_0)(\psp_i)  \right| \right].
	\end{align}
Here the expectation can be bounded as
	\begin{align}
 \mathbb{E}\left[ \sum_{k=1}^K \left| \pi_k(\latf_0)(\psp_i) - \pi_k(P^p\latf_0)(\psp_i)  \right|\right] \le K  \max_k \left\{ \right.& \left. \prob{\latf_0(\psp_i) \le \threshParam_k \cap P^p\latf_0(\psp_i) > \threshParam_k} \right.  \\
	&\left.+  \prob{\latf_0(\psp) > \threshParam_k \cap P^p\latf_0(\psp) < \threshParam_k} \right\}.
	\end{align}
	We now show how to bound the first probability, and the second probability is bounded by similar calculations. Define the events $A= \{\latf_0(\psp_i) \le \threshParam_k \cap P^p\latf_0(\psp_i) > \threshParam_k\} $ and $B = \{ P^pX_0(s_i) \in [c_k, c_k + \epsilon] \}$. It follows that 
	\begin{align*}
		\mathbb{P}(A) &= \mathbb{P}(A|B) \mathbb{P}(B) + \mathbb{P}(A|B^C) \mathbb{P}(B^C)  \leq \mathbb{P}(B) + \mathbb{P}(A|B^C) \\
		&\leq  \mathbb{P}( P^pX_0(s_i) \in [c_k, c_k + \epsilon] ) + 
		\mathbb{P}( Q^pX_0(s_i)\leq-\epsilon).
	\end{align*}
	Now set $\epsilon = \sqrt{\sigma_p}$ and recall that $\prob{Z>t} < \frac1{\sqrt{2\pi}t}e^{-t^2/2}$ if $Z\sim \pN(0,1)$. This gives that 
	 \begin{align}
	 \prob{A} &\leq  \prob{0<P^p\latf_0(\psp_i) \le \sqrt{\sigma_p} } +  \prob{Q^p\latf_0(\psp_i) \leq -\sqrt{\sigma_p} } 
\leq  C\sqrt{\sigma_p} + \frac{\sqrt{\sigma_p}}{\sqrt{2\pi}} e^{-\frac1{2\sigma_p}},
		\end{align}
which goes to zero as $p\rightarrow \infty$, and thus so does the final expectation in \eqref{eq:proofa51}.

\end{proof}



\end{document}